\newcommand{\rev}[1]{{#1}^R}
\newcommand{\eword}{\varepsilon}
\newcommand{\Powerset}[1]{\mathcal{P}({#1})}
\newcommand{\TrivialMonoid}{\mathbf{1}}
\newcommand{\HG}{\mathsf{G}}
\newcommand{\HF}{\mathsf{F}}
\newcommand{\HomSLI}[1]{\mathsf{SLI}(#1)}
\newcommand{\Alg}[1]{\mathsf{Alg}(#1)}
\newcommand{\Uclosure}[2][YYY]{\ifthenelse{\equal{#1}{YYY}}{#2\mathord{\uparrow}}{#2\mathord{\uparrow}_{#1}}}
\newcommand{\Parikh}[1]{\Psi\left(#1\right)}
\newcommand{\ParikhInv}[1]{\Psi^{-1}(#1)}
\newcommand{\ParikhMap}{\Psi}
\newcommand{\Lang}[2][YYY]{\ifthenelse{\equal{#1}{YYY}}{\mathsf{L}(#2)}{\mathsf{L}_{#1}(#2)}}
\newcommand{\DesImp}[2]{``\ref{#1}~$\Rightarrow$~\ref{#2}''}
\newcommand{\M}[1][]{\mathbb{M}\ifthenelse{\equal{#1}{}}{}{(#1)}}
\newcommand{\congruence}{\equiv}
\newcommand{\cA}{\mathcal{A}}
\newcommand{\cB}{\mathcal{B}}
\newcommand{\C}{\mathcal{C}}
\newcommand{\D}{\mathcal{D}}
\newcommand{\B}{\mathbb{B}}
\newcommand{\Pfour}{\mathsf{P4}}
\newcommand{\Cfour}{\mathsf{C4}}
\newcommand{\RInv}[2][YYY]{\ifthenelse{\equal{#1}{YYY}}{\overrightarrow{\mathsf{I}}(#2)}{\overrightarrow{\mathsf{I}}_{#1}(#2)}}
\newcommand{\LInv}[2][YYY]{\ifthenelse{\equal{#1}{YYY}}{\overleftarrow{\mathsf{I}}(#2)}{\overleftarrow{\mathsf{I}}_{#1}(#2)}}
\newcommand{\Rclass}{R}
\newcommand{\VA}[1]{\mathsf{VA}(#1)}
\newcommand{\VT}[2][YYY]{\ifthenelse{\equal{#1}{YYY}}{\mathsf{VT}(#2)}{\mathsf{VT}(#2, #1)}}
\newcommand{\VTp}[2][YYY]{\ifthenelse{\equal{#1}{YYY}}{\mathsf{VT}^+(#2)}{\mathsf{VT}^+(#2, #1)}}
\newcommand{\Reg}{\mathsf{Reg}}
\newcommand{\CF}[1][YYY]{\ifthenelse{\equal{#1}{YYY}}{\mathsf{CF}}{\mathsf{CF}_{#1}}}
\newcommand{\fiCF}{\mathsf{fiCF}}
\newcommand{\Petri}{\mathsf{P}}
\newcommand{\Trio}[1]{\mathcal{T}(#1)}
\newcommand{\AH}[1][XX]{\mathsf{AH}\ifthenelse{\equal{#1}{XX}}{}{(#1)}}
\newcommand{\RE}[1][XX]{\mathsf{RE}\ifthenelse{\equal{#1}{XX}}{}{(#1)}}
\newcommand{\REC}[1][XX]{\mathsf{REC}\ifthenelse{\equal{#1}{XX}}{}{(#1)}}
\newcommand{\Prio}{\mathsf{Prio}}
\newcommand{\MonDecidable}{\mathsf{DEC}}
\newcommand{\MonRemaining}{\mathsf{REM}}
\newcommand{\MonDecidableSimple}{\mathsf{SC}^{\pm}}
\newcommand{\MonRemainingSimple}{\mathsf{SC^{+}}}
\newcommand{\defeq}{=}
\newcommand{\lautsteps}[2][YYY]{\ifthenelse{\equal{#1}{YYY}}{\xrightarrow{#2}}{\xrightarrow{#2}_{#1}}}
\newcommand{\autstep}[1][YYY]{\ifthenelse{\equal{#1}{YYY}}{\rightarrow}{\rightarrow_{#1}}}
\newcommand{\autsteps}[1][YYY]{\ifthenelse{\equal{#1}{YYY}}{\rightarrow^*}{\rightarrow^*_{#1}}}
\newcommand{\autstepsn}[2][YYY]{\ifthenelse{\equal{#1}{YYY}}{\rightarrow^{#2}}{\rightarrow^{#2}_{#1}}}
\newcommand{\grammarstep}[1][YYY]{\ifthenelse{\equal{#1}{YYY}}{\Rightarrow}{\Rightarrow_{#1}}}
\newcommand{\grammarsteps}[1][YYY]{\ifthenelse{\equal{#1}{YYY}}{\Rightarrow^*}{\Rightarrow^*_{#1}}}
\newcommand{\grammarstepsn}[2][YYY]{\ifthenelse{\equal{#1}{YYY}}{\Rightarrow^{#2}}{\Rightarrow^{#2}_{#1}}}
\newcommand{\N}{\mathbb{N}}
\newcommand{\Z}{\mathbb{Z}}
\newcommand{\zerofill}[1]{0|#1}
\newcommand{\underlyingcfourpfour}[5]{
\begin{tikzpicture}[every circle/.style={}, scale=#1]
\fill (0,0) circle (2pt) node (aa) {}    +(1,0) circle (2pt) node (ab) {}      +(1,1) circle (2pt) node (ac) {}    +(0,1) circle (2pt) node (ad) {};
\draw (aa) node[left=3pt] {#3};
\draw (ab) node[right=3pt] {#5};
\draw (ac) node[right=3pt] {#4};
\draw (ad) node[left=3pt] {#2};
\draw (aa.center) -- (ab.center);
\draw (aa.center) -- (ad.center) -- (ac.center);
\draw (ab.center) -- (ac.center) [densely dotted];
\draw [densely dotted] (aa.center) ++(-135:3pt) circle (3pt);
\draw [densely dotted] (ab.center) ++(-45:3pt) circle (3pt);
\draw [densely dotted] (ac.center) ++(45:3pt) circle (3pt);
\draw [densely dotted] (ad.center) ++(135:3pt) circle (3pt);
\end{tikzpicture}
}
\newcommand{\unloopedcycle}[1]{
\begin{tikzpicture}[every circle/.style={}, scale=#1]
\fill (0,0) circle (2pt) node (a) {}    (1,0) circle (2pt) node (b)  {}   (0,-1) circle (2pt) node (c) {}   (1,-1) circle (2pt) node (d) {};
\draw (a.center) -- (b.center) -- (d.center) -- (c.center) -- (a.center);
\end{tikzpicture}
}
\newcommand{\unloopedpath}[1]{
\begin{tikzpicture}[every circle/.style={}, scale=#1]
\fill (0,0) circle (2pt) node (a) {}    (1,0) circle (2pt) node (b)  {}   (2,0) circle (2pt) node (c) {}   (3,0) circle (2pt) node (d) {};
\draw (a.center) -- (b.center) -- (c.center) -- (d.center);
\end{tikzpicture}
}
\newcommand{\pnpZero}[1]{
\begin{tikzpicture}[every circle/.style={}, scale=#1]
\fill (0,0) circle (2pt) node (a) {}    (1,0) circle (2pt) node (b)  {}   (2,0) circle (2pt) node (c) {};
\draw (a.center) -- (b.center) -- (c.center);
\end{tikzpicture}
}
\newcommand{\pnpOne}[1]{
\begin{tikzpicture}[every circle/.style={}, scale=#1]
\fill (0,0) circle (2pt) node (a) {}    (1,0) circle (2pt) node (b)  {}   (2,0) circle (2pt) node (c) {};
\draw (a.center) -- (b.center) -- (c.center);
\draw (a.center) ++(90:3pt) circle (3pt);
\end{tikzpicture}
}
\newcommand{\pnpTwo}[1]{
\begin{tikzpicture}[every circle/.style={}, scale=#1]
\fill (0,0) circle (2pt) node (a) {}    (1,0) circle (2pt) node (b)  {}   (2,0) circle (2pt) node (c) {};
\draw (a.center) -- (b.center) -- (c.center);
\draw (a.center) ++(90:3pt) circle (3pt);
\draw (c.center) ++(90:3pt) circle (3pt);
\end{tikzpicture}
}
\newlist{conditions}{enumerate}{2}
\setlist[conditions,1]{label=\arabic*.,ref=\arabic*}
\setlist[conditions,2]{label=\emph{\alph*}),ref=\theconditionsi\alph*}
\crefname{conditionsi}{condition}{conditions}
\crefname{conditionsii}{condition}{conditions}
\newlist{intextconditions}{enumerate*}{1}
\setlist[intextconditions,1]{label=(\roman*)}
\newlist{intextenum}{enumerate*}{1}
\setlist[intextenum,1]{label=\roman*)}
\newlist{openproblems}{enumerate}{1}
\setlist[openproblems,1]{label=\arabic*.}
\newlist{deflist}{itemize}{1}
\newlist{algorithms}{itemize}{1}
\newlist{generalizedresults}{itemize}{1}
\newlist{tasks}{itemize}{1}
\newlist{approaches}{itemize}{1}
\newlist{proofcases}{itemize}{1}
\newlist{researchdirections}{itemize}{1}
\newlist{results}{itemize}{1}
\newlist{instructions}{itemize}{1}
\newlist{contributions}{description}{1} 
\setlist[deflist,algorithms,generalizedresults,tasks,approaches,proofcases,researchdirections,results,instructions,contributions]{label=\textbullet}
\newlist{operations}{enumerate}{1}
\setlist[operations,1]{label=(\roman*)}
\newlist{rules}{enumerate}{1}
\setlist[rules,1]{label=(\Roman*)}
\newlist{phases}{enumerate}{1}
\setlist[phases,1]{label=(\alph*)}
\crefname{phasesi}{phase}{phases}
\Crefname{phasesi}{Phase}{Phases}
\newcommand{\arev}[1]{\todo[author=R1]{#1}}
\newtheorem{thm}{Theorem}[section]
\newaliascnt{lem}{thm}
\newtheorem{lem}[lem]{Lemma}
\newaliascnt{prop}{thm}
\newtheorem{prop}[prop]{Proposition}
\theoremstyle{remark}
\newaliascnt{rmk}{thm}
\newtheorem{rmk}[rmk]{Remark}
\crefname{rmk}{Remark}{Remarks}
\newcommand{\myqed}{}
\crefname{thm}{Theorem}{Theorems}
\crefname{lem}{Lemma}{Lemmas}
\crefname{prop}{Proposition}{Propositions}
\Crefname{thm}{Theorem}{Theorems}
\Crefname{lem}{Lemma}{Lemmas}
\Crefname{prop}{Proposition}{Propositions}
\begin{document}

\title{The Emptiness Problem for Valence Automata over Graph Monoids}
\author{Georg Zetzsche}
\email{zetzsche@lsv.fr}
\address{LSV, CNRS \& ENS Paris-Saclay, France}
\thanks{The author is supported by a fellowship within the Postdoc-Program of the German Academic Exchange Service (DAAD) and by
Labex DigiCosme, Univ.\ Paris-Saclay, project VERICONISS.}

\begin{abstract}
  This work studies which storage mechanisms in automata permit
  decidability of the emptiness problem. The question is formalized
  using valence automata, an abstract model of automata in which the
  storage mechanism is given by a monoid. For each of a variety of
  storage mechanisms, one can choose a (typically infinite) monoid $M$
  such that valence automata over $M$ are equivalent to (one-way)
  automata with this type of storage.  In fact, many important
  storage mechanisms can be realized by monoids defined by finite
  graphs, called graph monoids.  Examples include pushdown stacks,
  partially blind counters (which behave like Petri net places), blind
  counters (which may attain negative values), and combinations
  thereof.

  Hence, we study for which graph monoids the emptiness problem for
  valence automata is decidable. A particular model realized by graph
  monoids is that of Petri nets with a pushdown stack.  For these,
  decidability is a long-standing open question and we do not answer
  it here.

  However, if one excludes subgraphs corresponding to this model, a
  characterization can be achieved.
  Moreover, we provide a description of those storage mechanisms for
  which decidability remains open.  This leads to a model that naturally
  generalizes both pushdown Petri nets and the priority multicounter
  machines introduced by Reinhardt.

  The cases that are proven decidable constitute a natural and
  apparently new extension of Petri nets with decidable
  reachability. It is finally shown that this model can be combined
  with another such extension by Atig and Ganty: We present a further
  decidability result that subsumes both of these Petri net
  extensions.
\end{abstract}

\maketitle

\section{Introduction}

For each storage mechanism in one-way automata, it is an important
question whether the emptiness problem is decidable. It therefore
seems prudent to aim for general insights into which properties of 
storage mechanisms are responsible for decidability or undecidability.

Our approach to obtain such insights is the model of valence
automata. These feature a finite-state control and a (typically
infinite) monoid that represents a storage mechanism. The edge
inscriptions consist of an input word and an element of the
monoid. Then, a computation is accepting if it arrives in a final
state and composing the encountered monoid elements yields the neutral
element. This way, by choosing a suitable monoid, one can realize a
variety of storage mechanisms. Hence, our question becomes: \emph{For
  which monoids $M$ is the emptiness problem for valence automata over
  $M$ decidable?}

We address this question for a class of monoids that was introduced
in~\cite{Zetzsche2013a} and accommodates a number of storage
mechanisms that have been studied in automata theory. Examples include
\emph{pushdown stacks}, \emph{partially blind counters} (which behave
like Petri net places), and \emph{blind counters} (which may attain
negative values; these are in most situations interchangeable with
reversal-bounded counters), and combinations thereof. See
\cite{Zetzsche2016d,Zetzsche2016c} for an overview. These monoids are
defined by graphs and thus called \emph{graph monoids}\footnote{They
  are not to be confused with the closely related, but different
  concept of \emph{trace monoids}~\cite{DiekertRozenberg1995},
  i.e. monoids of Mazurkiewicz traces, which some authors also call
  graph monoids.}.

A particular type of storage mechanism that can be realized by graph monoids
are partially blind counters that can be used simultaneously with a pushdown
stack. Automata with such a storage are equivalent to \emph{pushdown Petri nets
(PPN)}, i.e. Petri nets where the transitions can also operate on a pushdown
stack. This means, a complete characterization of graph monoids with a
decidable emptiness problem would entail an answer to the long-standing open
question of whether reachability is decidable for this Petri net
extension~\cite{Reinhardt2008}. Partial solutions have recently been obtained
by Atig and Ganty~\cite{AtigGanty2011} and by Leroux, Sutre, and
Totzke~\cite{LerouxSutreTotzke2015}.

\paragraph{Contribution} While this work does not answer this open
question concerning PPN, it does provide a characterization among all
graph monoids that avoid this elusive storage type. More precisely, we
identify a set of graphs, `PPN-graphs', each of which corresponds
precisely to PPN with one Petri net place. Then, among all graphs
$\Gamma$ avoiding PPN-graphs as induced subgraphs, we characterize
those for which the graph monoid $\M\Gamma$ results in a decidable
emptiness problem.  Furthermore, we provide a simple, more mechanical
(as opposed to algebraic) description of
\begin{enumerate}[label=(\roman*)]
\item\label{decidable:mechanism} the storage mechanism emerging as the most
general decidable case and 
\item\label{open:mechanism} a type of mechanism equivalent to the cases we leave open.
\end{enumerate}
The model \labelcref{decidable:mechanism} is a new extension of partially blind
counter automata (i.e. Petri nets). While the decidability proof employs a
reduction to Reinhardt's priority multicounter machines~\cite{Reinhardt2008},
the model \labelcref{decidable:mechanism} seems to be expressively incomparable
to Reinhardt's model. The model \labelcref{open:mechanism} is a class of
mechanisms whose simplest instance are the pushdown Petri nets and which also
naturally subsumes priority multicounter machines (see also \cref{ppn-prio}).

Another recent extension of the decidability of reachability of Petri nets has
been obtained by Atig and Ganty~\cite{AtigGanty2011}. In fact, it is a partial
solution to the reachability problem for PPN. Their proof also relies on
priority multicounter machines. They show that given a \emph{finite-index}
context-free language $K$ and a language $L$ generated by a Petri net, it is
decidable whether the intersection $K\cap L$ is empty. Note that without the
finite-index requirement, this would be equivalent to the reachability problem
for PPN. Our final contribution is a decidability result that subsumes both the
decidability of model \labelcref{decidable:mechanism} and the result of Atig
and Ganty. We present a natural language class that contains both the
intersections considered by Atig and Ganty and the languages of model
\labelcref{decidable:mechanism} and still has a decidable emptiness problem. To
this end, we employ a slightly stronger (and perhaps simpler) version of Atig
and Ganty's reduction.

Hence, the perspective of valence automata allows us to identify
natural storage mechanisms
that \begin{enumerate*}[label=(\roman*)] \item push the frontier of
decidable emptiness (and hence reachability) and \item let us
naturally interpret PPN and priority multicounter machines as
special cases of a more powerful model that might enjoy
decidability\end{enumerate*}, respectively.

The paper is structured as follows. We present the main results in
\cref{results} and prove them in
\cref{proofundecidability,proofmain,proofexpressive}.  \Cref{proofundecidability}
presents the undecidability part, \cref{proofmain} treats the decidable cases,
and \cref{proofexpressive} shows the expressive equivalence with the more
mechanical descriptions.  In \cref{synthesis}, we present the enhanced
decidability result that also subsumes the one by Atig and Ganty.

This work is an extended version of the
paper~\cite{Zetzsche2015c}. This version provides proofs of the
results of \cite{Zetzsche2015c} and the enhanced decidability result.
Moreover, it contains proofs of some results that first appeared
in~\cite{Zetzsche2013a,Zetzsche2015a}, but have not yet undergone
journal peer review.

\section{Preliminaries}\label{preliminaries}
A \emph{monoid} is a set $M$ together with a binary associative
operation such that $M$ contains a neutral element. Unless the monoid
at hand warrants a different notation, we will denote the neutral
element by $1$ and the product of $x,y\in M$ by $xy$. If $X$ is a set
of symbols, $X^*$ denoted the set of words over $X$. The length of the
word $w\in X^*$ is denoted $|w|$. An \emph{alphabet} is a finite set
of symbols. The empty word is denoted by $\eword\in X^*$. Let
$P\subseteq X\times X$ is a set of pairs of symbols, then the
\emph{semi-Dyck language} over $P$, denoted $D_P^*$ is the smallest
subset of $X^*$ such that $\eword\in D_P^*$ and whenever
$uv\in\D_P^*$, then also $ua\bar{a}v\in D_P^*$ for every
$(a,\bar{a})\in P$. If $P=\{(a_i,\bar{a}_i)\mid i\in\{1,\ldots,n\}\}$,
then we also write $D_n^*$ instead of $D_P^*$. Moreover, if
$P=\{(a,b)\}$, then the words in $D_P^*$ are called \emph{semi-Dyck
  words over $a,b$}. If $w\in X^*$ is a word with $w=x_1\cdots x_n$
for $x_1,\ldots,x_n\in X$, then $w^R$ denotes $w$ in reverse,
i.e. $w^R=x_n\cdots x_1$.

For an alphabet $X$ and languages $L,K\subseteq X^*$, the \emph{shuffle
product} $L\shuffle K$ is the set of all words $u_0 v_1 u_1 \cdots v_n
u_n$ where $u_0,\ldots,u_n,v_1,\ldots,v_n\in X^*$, $u_0\cdots u_n\in L$,
and $v_1\cdots v_n\in K$. For a subset $Y\subseteq X$, we define the
\emph{projection morphism} $\pi_Y\colon X^*\to Y^*$ by $\pi_Y(y)=y$ for
$y\in Y$ and $\pi_Y(x)=\eword$ for $x\in X\setminus Y$. Moreover, we define
$|w|_Y=|\pi_Y(w)|$ and for $x\in X$, we set $|w|_x=|w|_{\{x\}}$.

\paragraph{Valence automata} As a framework for studying which storage
mechanisms permit decidability of the emptiness problem, we employ valence
automata.  They feature a monoid that dictates which computations are valid.
Hence, by an appropriate choice of the monoid, valence automata can be
instantiated to be equivalent to a concrete automata model with storage.  For
the purposes of this work, \emph{equivalent} is meant with respect to accepted
languages. Therefore, we regard valence automata as language accepting devices.

Let $M$ be a monoid and $X$ an alphabet.
A \emph{valence automaton over $M$} is a tuple $\cA=(Q,X,M,E,q_0,F)$, in
which
\begin{intextconditions}
\item $Q$ is a finite set of \emph{states},
\item $E$ is a finite subset of $Q\times X^*\times M\times Q$, called the set of \emph{edges},
\item $q_0\in Q$ is the \emph{initial state}, and
\item $F\subseteq Q$ is the set of \emph{final states}.
\end{intextconditions}
For $q,q'\in Q$, $w,w'\in X^*$, and $m,m'\in M$, we write $(q,w,m)
\autstep[\cA] (q',w',m')$ if there is an edge $(q,v,n,q')\in E$ such
that $w'=wv$ and $m'=mn$. The language \emph{accepted} by $\cA$ is
then 
\[ \Lang{\cA}\defeq\{w\in X^* \mid (q_0,\eword,1)\autsteps[\cA]
(f,w,1)~\text{for some $f\in F$} \}. \]
The class of languages accepted by valence automata over $M$ is
denoted by $\VA{M}$. If $\mathcal{M}$ is a class of monoids, we write
$\VA{\mathcal{M}}$ for $\bigcup_{M\in\mathcal{M}} \VA{M}$.

\paragraph{Graphs} A \emph{graph} is a pair $\Gamma=(V,E)$ where $V$ is a
finite set and $E$ is a subset of $\{S\subseteq V \mid 1\le|S|\le 2\}$. The
elements of $V$ are called \emph{vertices} and those of $E$ are called
\emph{edges}. Vertices $v,w\in V$ are \emph{adjacent} if $\{v,w\}\in E$.  If
$\{v\}\in E$ for some $v\in V$, then $v$ is called a \emph{looped} vertex,
otherwise it is \emph{unlooped}. A \emph{subgraph} of $\Gamma$ is a graph
$(V',E')$ with $V'\subseteq V$ and $E'\subseteq E$. Such a subgraph is called
\emph{induced (by $V'$)} if $E'=\{ S\in E \mid S\subseteq V'\}$, i.e. $E'$
contains all edges from $E$ incident to vertices in $V'$. By
$\Gamma\setminus\{v\}$, for $v\in V$, we denote the subgraph of $\Gamma$
induced by $V\setminus \{v\}$. By $\Cfour{}$ ($\Pfour{}$), we denote a graph
that is a cycle (path) on four vertices;
see~\cref{basics:graphmonoids:fig}.
Moreover, $\Gamma^-$ denotes the graph obtained from
$\Gamma$ by deleting all loops: We have $\Gamma^-=(V,E^-)$, where $E^-=\{S\in E
\mid |S|=2\}$. The graph $\Gamma$ is \emph{loop-free} if $\Gamma^-=\Gamma$.
Finally, a \emph{clique} is a loop-free graph in which any two distinct
vertices are adjacent. 

\begin{figure}[t]
\begin{subfigure}[b]{0.3\textwidth}\centering
\unloopedpath{1}
\caption{$\Pfour$\label{basics:graphmonoids:fig:pfour}}
\end{subfigure}
\begin{subfigure}[b]{0.3\textwidth}\centering
\unloopedcycle{1}
\caption{$\Cfour{}$\label{basics:graphmonoids:fig:cfour}}
\end{subfigure}
\caption{Graphs $\Cfour{}$ and $\Pfour{}$.\label{basics:graphmonoids:fig}}
\end{figure}

\paragraph{Products and presentations}
If $M$, $N$ are monoids, then $M\times N$ denotes their \emph{direct
product}, whose set of elements is the cartesian product of $M$ and $N$
and composition is defined component-wise. By $M^n$, we denote the
$n$-fold direct product, i.e.  $M\times\cdots\times M$ with $n$
factors.  

Let $A$ be a (not necessarily finite) set of symbols and $R$ be a
subset of $A^*\times A^*$. The pair $(A,R)$ is called a \emph{(monoid)
  presentation}. The smallest congruence of the free monoid $A^*$
containing $R$ is denoted by $\congruence_R$ and we will write $[w]_R$
for the congruence class of $w\in A^*$. The \emph{monoid presented by
  $(A,R)$} is defined as $A^*/\mathord{\congruence_R}$. Note that
since we did not impose a finiteness restriction on $A$, up to
isomorphism, every monoid has a presentation.  If
$A=\{a_1,\ldots,a_n\}$ and $R=\{(r_i,\bar{r}_i)\mid
i\in\{1,\ldots,k\}\}$, we also use the shorthand $\langle
a_1,\ldots,a_n \mid r_1=\bar{r}_1,\ldots, r_k=\bar{r}_k\rangle$ to
denote the monoid presented by $(A,R)$.

Furthermore, for
monoids $M_1$, $M_2$ we can find presentations $(A_1,R_1)$ and
$(A_2,R_2)$ such that $A_1\cap A_2=\emptyset$. We define the
\emph{free product} $M_1*M_2$ to be presented by $(A_1\cup A_2,
R_1\cup R_2)$.  Note that $M_1*M_2$ is well-defined up to isomorphism.
In analogy to the $n$-fold direct product, we write $M^{(n)}$ for the
$n$-fold free product of $M$.

\paragraph{Graph monoids}
A presentation $(A,R)$ in which $A$ is a finite alphabet is a
\emph{Thue system}.
 To each graph $\Gamma=(V,E)$, we associate the Thue system
$T_\Gamma=(X_\Gamma, R_\Gamma)$ over the alphabet $X_\Gamma=\{a_v,
\bar{a}_v \mid v\in V\}$. $R_\Gamma$ is defined as
\[ R_\Gamma = \{(a_v\bar{a}_v, \eword) \mid v\in V \} \cup \{ (xy,yx) \mid x\in \{a_v, \bar{a}_v\}, ~y\in \{a_w,\bar{a}_w\}, ~ \{v,w\}\in E\}. \]
In particular, we have $(a_v\bar{a}_v, \bar{a}_va_v)\in R_\Gamma$
whenever $\{v\}\in E$. To simplify notation, the congruence
$\congruence_{R_\Gamma}$ is then also denoted by $\congruence_\Gamma$. We are now ready to define
graph monoids. To each graph $\Gamma$, we associate the monoid
\[ \M\Gamma ~~=~~ X^*_\Gamma/\mathord{\congruence_\Gamma}. \]
The monoids of the form $\M\Gamma$ are called \emph{graph monoids}.

\paragraph{Storage mechanisms as graph monoids}
Let us briefly discuss how to realize storage mechanisms by graph
monoids. First, suppose $\Gamma_0$ and $\Gamma_1$ are disjoint
graphs. If $\Gamma$ is the union of $\Gamma_0$ and $\Gamma_1$, then
$\M\Gamma\cong\M\Gamma_0*\M\Gamma_1$ by definition. Moreover, if
$\Gamma$ is obtained from $\Gamma_0$ and $\Gamma_1$ by drawing an edge
between each vertex of $\Gamma_0$ and each vertex of $\Gamma_1$, then
$\M\Gamma\cong\M\Gamma_0\times\M\Gamma_1$.

If $\Gamma$ consists of one vertex $v$ and has no edges,
the only rule in the Thue system is $(a_v\bar{a}_v,\varepsilon)$.  In this
case, $\M\Gamma$ is also denoted as $\B$ and we will refer to it as the
\emph{bicyclic monoid}.  The generators $a_v$ and $\bar{a}_v$ are then also
written $a$ and $\bar{a}$, respectively. It is not hard to see that $\B$
corresponds to a \emph{partially blind counter}, i.e. one that attains only
non-negative values and has to be zero at the end of the computation.
Moreover, if $\Gamma$ consists of one looped vertex, then $\M\Gamma$ is
isomorphic to $\Z$ and thus realizes a \emph{blind counter}, which can go below
zero and is zero-tested in the end.

If one storage mechanism is realized by a monoid $M$, then the monoid
$\B*M$ corresponds to the mechanism that \emph{builds stacks}: A
configuration of this new mechanism consists of a sequence
$c_0ac_1\cdots ac_n$, where $c_0,\ldots,c_n$ are configurations of the
mechanism realized by $M$. We interpret this as a stack with the
entries $c_0,\ldots,c_n$. One can open a new stack entry on top (by
multiplying $a\in\B$), remove the topmost entry if empty (by
multiplying $\bar{a}\in\B$) and operate on the topmost entry using the
old mechanism (by multiplying elements from $M$). In particular,
$\B*\B$ describes a pushdown stack with two stack
symbols. See~\cite{Zetzsche2016d} for more examples
and~\cite{Zetzsche2016c} for more details.

As a final example, suppose $\Gamma$ is one edge short of being a clique,
then $\M\Gamma\cong \B^{(2)}\times \B^{n-2}$, where $n$ is the number of
vertices in $\Gamma$.  Then, by the observations above, valence automata over
$\M\Gamma$ are equivalent to Petri nets with $n-2$ unbounded places and access to
a pushdown stack. Hence, for our purposes, a \emph{pushdown Petri
net} is a valence automaton over $\B^{(2)}\times\B^n$ for some $n\in\N$.

\section{Results}\label{results}
As a first step, we exhibit graphs $\Gamma$ for which $\VA{\M\Gamma}$ includes
the recursively enumerable languages. 
\begin{thm}\label{re:pfourcfour}
Let $\Gamma$ be a graph such that $\Gamma^-$ contains $\Cfour$ or $\Pfour$ as
an induced subgraph.  Then $\VA{\M\Gamma}$ is the class of recursively
enumerable languages. In particular, the emptiness problem is undecidable for
valence automata over $\M\Gamma$.
\end{thm}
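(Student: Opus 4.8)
The plan is to show that valence automata over $\M\Gamma$ can simulate an arbitrary Turing machine whenever $\Gamma^-$ contains $\Cfour$ or $\Pfour$ as an induced subgraph, and conversely that $\VA{\M\Gamma}\subseteq\RE$ always holds (the latter is routine, since the word problem of a finitely presented monoid is recursively enumerable, so one can semi-decide acceptance). The heart of the matter is the simulation, and the key observation is that it suffices to realize a model known to accept all of $\RE$ with neutral-element acceptance. A natural target is a machine with two counters that may each go negative and are zero-tested only at the very end — equivalently, valence automata over $\Z^2=\M{C}$ where $C$ is two looped vertices joined by an edge — no, that only gives the blind-counter languages, which are semilinear. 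Instead the right target is \emph{two partially blind counters together with the ability to transfer}, or more simply: a pushdown stack \emph{plus} one partially blind counter in a configuration where they interact, which is exactly what an induced $\Cfour$ or $\Pfour$ in $\Gamma^-$ provides. Concretely, $\M{\Pfour}$ and $\M{\Cfour}$ each contain a submonoid of the form $\B*(\B\times\B)$ or $(\B\times\B)*(\B\times\B)$ arising from the two ``independent'' pairs of non-adjacent vertices, and a valence automaton over such a monoid can simulate a Minsky machine: use the free-product structure to maintain a stack encoding the finite control's ``history'' and the two bicyclic factors inside each component to encode two counters, using the stack to perform zero-tests by the standard trick of pushing a marker, emptying, and checking.

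First I would reduce to a single canonical graph: if $\Gamma^-$ contains $\Cfour$ or $\Pfour$ as an induced subgraph on vertex set $V'$, then the subgraph of $\Gamma$ induced by $V'$ is either $\Cfour$, $\Pfour$, or one of these with some loops added. Since adding loops only relaxes constraints (a looped vertex gives $\Z$, which still contains $\B$ as the submonoid generated by $a_v\bar a_v\cdots$, or rather $\Z$ surjects onto nothing useful — here I would instead argue at the level of submonoids: the submonoid of $\M{\Gamma'}$ generated by $\{a_v,\bar a_v : v\in V'\}$ for the chosen four vertices contains an isomorphic copy of $\M{\Gamma'^-\!\restriction_{V'}}$, because the defining relations that survive the loop-deletion still hold), it is enough to simulate a Turing machine using only $\M{\Cfour}$ and $\M{\Pfour}$. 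Moreover $\VA{M}\subseteq\VA{M'}$ whenever $M$ embeds in $M'$ and $\VA{M}=\VA{M'}$ when $M\cong M'$; and since submonoids of $\M\Gamma$ that are themselves graph monoids can be accessed by valence automata over $\M\Gamma$ (restrict edge labels to the relevant generators), it suffices to prove $\RE\subseteq\VA{\M{\Cfour}}$ and $\RE\subseteq\VA{\M{\Pfour}}$.

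Next, for each of these two monoids I would exhibit the storage mechanism explicitly using the dictionary from the ``Storage mechanisms as graph monoids'' paragraph. In $\Pfour$ with vertices $u_1-u_2-u_3-u_4$, the pairs $\{u_1,u_3\}$, $\{u_2,u_4\}$, $\{u_1,u_4\}$ are non-edges while $\{u_1,u_2\},\{u_2,u_3\},\{u_3,u_4\}$ are edges; this presentation is well known to yield a monoid in which, for instance, $\langle a_{u_1},\bar a_{u_1}\rangle * \langle a_{u_2},\bar a_{u_2},a_{u_4},\bar a_{u_4}\rangle$ embeds with $a_{u_2},a_{u_4}$ commuting — i.e. $\B * (\B\times\B)$ embeds in $\M{\Pfour}$, and by the ``builds stacks'' interpretation this is a pushdown stack each of whose cells holds a pair of partially blind counters. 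A single stack cell holding two partially blind counters already gives a Minsky machine (push one cell, never pop, use the two counters, and do zero-tests by popping-and-restoring via an auxiliary cell) — this is the classical fact that two partially blind counters that can be compared to zero simulate a Turing machine; the stack supplies the comparison. For $\Cfour$ with vertices $v_1-v_2-v_3-v_4-v_1$, the two diagonals $\{v_1,v_3\}$ and $\{v_2,v_4\}$ are non-edges, and one gets $(\B\times\B) * (\B\times\B)$ — no, more carefully $\M{\Cfour}\cong(\B\times\B)*(\B\times\B)$ is false in general; rather $v_1,v_3$ each commute with both $v_2,v_4$, so $\M{\Cfour}\cong \B^{(2)}\times\B^{(2)}$ is also wrong — I would instead directly use the non-edge $\{v_1,v_3\}$ and the non-edge $\{v_2,v_4\}$ to extract $\B*\B$ (a pushdown stack) from $\{v_1,v_3\}$ and a commuting $\B\times\B$-ish structure from the edges, again landing in ``pushdown stack of counters''. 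The cleanest route, which I would adopt, is: show $\B*(\B\times\B)$ embeds in both $\M{\Cfour}$ and $\M{\Pfour}$ (this is a short computation with the rewriting system $T_\Gamma$, checking that a normal-form argument confirms the submonoid generated by the relevant generators has exactly the free-product-of-$(\B$ with $\B\times\B)$ relations), and then give one uniform simulation of a deterministic Turing machine (or a 2-counter Minsky machine with final zero-tests only, which accepts all of $\RE$ with neutral-element acceptance) by a valence automaton over $\B*(\B\times\B)$.

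The main obstacle I anticipate is the faithful-embedding step: verifying that the submonoid of $\M{\Cfour}$ (resp.\ $\M{\Pfour}$) generated by the chosen generators really is isomorphic to $\B*(\B\times\B)$ and that no unexpected relations collapse the simulated counters. I would handle this with a confluent, terminating rewriting system presentation of $\M\Gamma$ — order generators and use the commutation and cancellation rules oriented so that every word has a unique normal form — and then observe that the normal forms of words over the chosen generator subset coincide with the normal forms computed inside $\B*(\B\times\B)$, so the inclusion map is injective. Everything else — the Minsky-machine simulation, the reduction of loops, the $\subseteq\RE$ direction — is standard, and the final statement about undecidability of emptiness is then immediate since emptiness for $\RE$-complete language classes is undecidable (the halting problem reduces to nonemptiness of the accepted language).
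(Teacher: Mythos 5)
The proposal rests on a step that cannot work: you plan to extract a submonoid $\B*(\B\times\B)$ (or $(\B\times\B)*(\B\times\B)$) from $\M{\Pfour}$ or $\M{\Cfour}$ and simulate a Minsky machine over it. But $\B*(\B\times\B)=\B*\B^2$ and $(\B\times\B)*(\B\times\B)=\B^2*\B^2$ both lie in the class $\MonDecidable$ of the paper's \cref{emptiness:graphmonoids:equivalence} (they are free products of $\B^n$'s), so by that very theorem valence automata over them have \emph{decidable} emptiness and in particular do not capture $\RE$. This is not an accident of the framework: in $\B*M$ with neutral-element acceptance, popping a cell forces the whole cell to be the identity, i.e. \emph{both} counters to be zero simultaneously, so there is no way to zero-test a single counter; your ``pop and restore via an auxiliary cell'' move only gives a joint zero-test, and a partially blind $2$-counter machine with only a joint final zero-test is Petri-net reachability, which is decidable. (Your reading of the relations is also off: in $\Pfour$ with path $u_1\!-\!u_2\!-\!u_3\!-\!u_4$, the pair $u_2,u_4$ is a non-edge, so $a_{u_2}$ and $a_{u_4}$ do \emph{not} commute; the $\B\times\B$ you need comes from the adjacent pair $\{u_1,u_2\}$, and there is no induced $(\B\times\B)*(\B\times\B)$ in either graph.)

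More fundamentally, the strategy of retreating to a smaller induced subgraph cannot succeed here. Every $3$-vertex induced subgraph of $\Pfour$ or $\Cfour$ is either a P$3$ (giving $\B^{(2)}\times\B$, the pushdown Petri net case, which is the central \emph{open} problem the paper is careful to exclude) or a graph with a single edge (giving $\B*\B^2\in\MonDecidable$). So the undecidability necessarily uses all four vertices and the specific asymmetric commutation pattern of the path/cycle. The paper therefore takes a very different route: rather than simulating a machine model, it shows via Hartmanis--Hopcroft (\cref{re:anbn}) and Ginsburg--Greibach (\cref{afl:trio}) that a single fixed ``hard'' language $B_2=(\{a_1^n\bar a_1^n\}b_1)^*\shuffle(\{a_2^n\bar a_2^n\}b_2)^*$ generates all of $\RE$ as a full trio (\cref{re:btwo}), and then constructs one valence automaton over $\M{\Gamma}$ for $B_2$ by assigning the two counter streams to vertices $1,2$ and the two separators $b_1,b_2$ to vertices $4,3$. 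The crucial point is that each stream's counter ops commute with the \emph{other} stream's separator but not with its own, which is exactly the non-edge structure $\{1,4\},\{2,3\}\notin E$; the correctness proof then uses the projection lemma (\cref{emptiness:graphmonoids:projection}) rather than a normal-form embedding argument.
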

This unifies and slightly strengthens a few undecidability results
concerning valence automata over graph monoids. The case that all
vertices are looped was shown by Lohrey and
Steinberg~\cite{LohreySteinberg2008} (see also the discussion of
\cref{ratmp:graphgroups}). Another case appeared in
\cite{Zetzsche2013a}. We prove \cref{re:pfourcfour} in
\cref{proofundecidability}.

It is not clear whether \cref{re:pfourcfour} describes all $\Gamma$ for
which $\VA{\M\Gamma}$ exhausts the recursively enumerable languages. 
For
example, as mentioned above, if $\Gamma$ is one edge short of being a clique,
then valence automata over $\M\Gamma$ are pushdown Petri nets.
In particular, the emptiness problem for valence automata is equivalent to the
reachability problem of this model, for which decidability is a long-standing
open question~\cite{Reinhardt2008}. In fact, it is already open whether
reachability is decidable in the case of $\B^{(2)}\times\B$, although
Leroux, Sutre, and Totzke have recently made progress on this
case~\cite{LerouxSutreTotzke2015}. Therefore, characterizing those $\Gamma$
with a decidable emptiness problem for valence automata over $\M\Gamma$ would
very likely settle these open questions\footnote{Strictly speaking, it is conceivable
that there is a decision procedure for each $\B^{(2)}\times\B^n$, but no
uniform one that works for all $n$. However, this seems unlikely.}.

However, we will show that if we steer clear of pushdown Petri nets,
we can achieve a characterization. More precisely, we will present
a set of graphs that entail the behavior of pushdown Petri nets.
Then, we show that among those graphs that do not contain these as
induced subgraphs, the absence of $\Pfour{}$ and $\Cfour{}$ already
characterizes decidability.

\newcommand{\PNP}{PPN}
\paragraph{\PNP-graphs}
A graph $\Gamma$ is said to be a \emph{\PNP-graph} if it is isomorphic to one
of the following three graphs:
\begin{align*}
\pnpZero{1} && \pnpOne{1} && \pnpTwo{1}
\end{align*}
We say that the graph $\Gamma$ is \emph{\PNP-free} if it has no
\PNP-graph as an induced subgraph. Observe that a graph $\Gamma$ is
\PNP-free if and only if in the neighborhood of each unlooped vertex,
any two vertices are adjacent.

Of course, the abbreviation `\PNP' refers to `pushdown Petri
nets'. This is justified by the following fact. It is proven in
\cref{proofmain} (page~\pageref{proof-pnpgraphs}).
\begin{prop}\label{pnpgraphs}
If $\Gamma$ is a \PNP-graph, then $\VA{\M\Gamma}=\VA{\B^{(2)}\times \B}$.
\end{prop}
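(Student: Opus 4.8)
The plan is to show that each of the three PPN-graphs gives rise to the same class of valence-automaton languages as $\B^{(2)}\times\B$. First I would identify what monoid each PPN-graph presents. The three graphs are: (a) a path on three vertices with all vertices unlooped; (b) the same path with one endpoint looped; (c) the same path with both endpoints looped. In every case, by the decomposition facts for graph monoids recalled in the preliminaries (disjoint union gives free product, complete join gives direct product), the middle vertex is adjacent to both endpoints, so it contributes a factor that commutes with the two endpoint factors, while the two endpoints are non-adjacent and hence their factors are combined by a free product. Thus graph (a) yields $\M\Gamma\cong(\B*\B)\times\B$, graph (b) yields $(\B*\Z)\times\B$, and graph (c) yields $(\Z*\Z)\times\B$ — writing $\B$ for the monoid of one unlooped vertex and $\Z$ for the monoid of one looped vertex, exactly as in the excerpt. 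So it suffices to prove $\VA{(\B*\B)\times\B}=\VA{(\B*\Z)\times\B}=\VA{(\Z*\Z)\times\B}=\VA{\B^{(2)}\times\B}$, noting $\B^{(2)}=\B*\B$ so the first identity is trivial.

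The core of the argument is therefore the claim that $\VA{(\B*\B)\times\B}=\VA{(\B*\Z)\times\B}=\VA{(\Z*\Z)\times\B}$, i.e. that in the "stack-building" factor it does not matter whether the building block is a partially blind counter $\B$ or a blind counter $\Z$, provided the construction is happening inside a free product and there is an additional $\B$ factor available on the side. The inclusion $\VA{(\B*\B)\times\B}\subseteq\VA{(\B*\Z)\times\B}\subseteq\VA{(\Z*\Z)\times\B}$ is the easy direction if one can show $\VA{M_1\times N}\subseteq\VA{M_2\times N}$ whenever there is a suitable embedding or simulation of $M_1$ into $M_2$; here $\B$ maps into $\Z$ in the obvious way (send $a\mapsto 1$, $\bar a\mapsto -1$), but this is \emph{not} a monoid embedding since $\Z$ is a group, so a computation in the $\Z$-automaton can "borrow" negative values that an honest $\B$-computation cannot. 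The standard fix is to use the extra $\B$ factor to police the sign: one augments the automaton so that every $a$-move also increments the auxiliary partially blind counter and every $\bar a$-move decrements it, which forces the running count of the simulated $\Z$-block to stay non-negative exactly when the auxiliary counter stays non-negative; since a partially blind counter must return to zero at the end, this enforces the $\B$-semantics on the $\Z$-block. The subtlety is that inside a free product $\Z*\Z$ (or $\B*\Z$) one has a whole \emph{stack} of such blocks, not just one, and one auxiliary $\B$ can only track the topmost block; but that is precisely what is needed, because the free-product/stack semantics only ever operates on the topmost block, and when a block is popped (by multiplying the free-product generator that closes it) one simultaneously checks the auxiliary counter is back to the value it had when the block was opened — this can be arranged because "remove the topmost entry if empty" is already part of the stack-building semantics, so emptiness of the topmost block is already being enforced by the free-product relations. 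The reverse inclusions $\VA{(\Z*\Z)\times\B}\subseteq\VA{(\B*\B)\times\B}$ go the same way, or more cheaply: since $\B^{(2)}\times\B = \B*\B$ times $\B$ is the "largest" of these models under the obvious expressivity preorder, and the previous argument shows $(\Z*\Z)\times\B$ is simulated inside $(\B*\B)\times\B$, all four classes coincide.

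The main obstacle I anticipate is making the simulation of the $\Z$-blocks by the single auxiliary $\B$ counter precise in the presence of nesting: one must argue that at any point in an accepting computation, the \emph{signed} contribution of all the currently-open $\Z$-blocks below the top is already zero (so that tracking only the top suffices), and that this follows from the free-product presentation — i.e. that a generator closing an inner block can only be cancelled against a matching opener with nothing "live" in between, which is the defining property of the semi-Dyck/free-product structure. Once that structural lemma about free products is isolated (essentially: in $M^{(k)}$, a reduced computation visits the blocks in a well-nested fashion, and a block contributes $1$ to the product precisely when its net effect is the identity of $M$), the sign-tracking bookkeeping is routine. I would also double-check the degenerate interplay with loops: a looped vertex in a PPN-graph is always an endpoint of the path and hence lies in the free-product part, never in the direct-product part, so the three PPN-graphs really do only differ in whether the stack-building blocks are $\B$ or $\Z$, and the single "$\times\B$" Petri-net place is present in all three — this is what the condition "one Petri net place" in the statement refers to.
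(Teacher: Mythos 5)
Your decomposition of the three PPN-graphs into $\B\times(M_0*M_1)$ with $M_i\in\{\B,\Z\}$ is correct and matches the paper. The gap is in the central simulation claim. You want to replace each $\B$-block in the free-product factor by a $\Z$-block and enforce non-negativity via the extra $\B$ on the direct-product side, tracking only the topmost block. But the bookkeeping fails under nesting, which is exactly the obstacle you flag and then wave away. Take $a\,b\,a\,\bar a\,b\,\bar b\,\bar b\,\bar a\in\B*\B$ (with $a,\bar a$ the ``push/pop'' pair and $b,\bar b$ the ``content'' pair): it reduces to $1$, yet at the moment the inner $a$ is multiplied the outer level's $b$-count is $1$, so the auxiliary $\B$ holds $1$. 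To track the inner (empty) block the auxiliary would have to be reset to $0$, and then restored to $1$ after the inner $\bar a$, neither of which a partially blind counter can do: $\B$ has no mid-run zero test and no place to stash the old value. The appeal to ``remove the topmost entry if empty'' conflates two unrelated quantities: the free-product relation forces the $\Z$-contribution of the popped block to be the identity of $\Z$, but it says nothing whatsoever about the state of the separate $\B$-factor in the direct product.

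No such simulation is needed, because all the candidate free-product factors are equivalent at the level of language classes. For every choice of $M_0,M_1\in\{\B,\Z\}$ one has $\VA{M_0*M_1}=\CF=\VA{\B*\B}$: the case $\B*\B$ is a pushdown with two stack symbols; $\VA{\Z*\Z}=\CF$ is \eqref{context-free} (Chomsky--Sch\"utzenberger, in Kambites' formulation); and $\VA{\Z*\B}=\Alg{\VA{\Z}}=\CF$ follows from \cref{algbb} together with $\VA{\Z}\subseteq\CF$ and $\Alg{\CF}=\CF$. Having $\VA{M_0*M_1}=\VA{\B*\B}$ in all cases, \cref{directproduct} lifts the equality across the direct product: $\VA{\B\times(M_0*M_1)}=\VA{\B\times(\B*\B)}=\VA{\B^{(2)}\times\B}$. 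This is exactly the paper's route, and it makes the policing argument in your middle paragraph both unnecessary and, as written, unrepairable with a single auxiliary counter.
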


\paragraph{Transitive forests}\label{def:transitiveforest}
In order to exploit the absence of $\Pfour{}$ and $\Cfour{}$ as induced
subgraphs, we will employ a characterization of such graphs as transitive
forests. The \emph{comparability graph} of a tree $t$ is a simple graph with
the same vertices as $t$, but has an edge between two vertices whenever one is
a descendant of the other in $t$. A graph $\Gamma$ is a \emph{transitive
forest} if the simple graph $\Gamma^-$ is a disjoint union of comparability
graphs of trees.  For an example of a transitive forest, see
\cref{example:transitiveforest}.

\begin{figure}[t]
\begin{center}
\begin{tikzpicture}[level distance=1cm]
\tikzstyle{level 1}=[sibling distance=2cm]
\tikzstyle{level 2}=[sibling distance=1cm]
\coordinate (a)
        child  {coordinate (b) [fill] circle (2pt)
                child {coordinate (c) [fill] circle (2pt)}
                child {coordinate (d) [fill] circle (2pt)}
        }
        child {coordinate (e) [fill] circle (2pt)
                child {coordinate (f) [fill] circle (2pt)
                        child {coordinate (g) [fill] circle (2pt)}
                        child {coordinate (h) [fill] circle (2pt)}
                }
        };
\draw (a) [fill] circle (2pt);
\draw [densely dotted] (a) to [bend right] (c);
\draw [densely dotted] (a) to (d);
\draw [densely dotted] (a) to  (f);
\draw [densely dotted] (a) to [out=0, in=45] (h);
\draw [densely dotted] (a) to (g);
\draw [densely dotted] (e) to [out=225, in=90] (g); 
\draw [densely dotted] (e) to [out=315, in=90] (h); 

\coordinate (x) at (3,0)
        child {coordinate (y) [fill] circle (2pt)}
        child {coordinate (z) [fill] circle (2pt)};
\draw (x) [fill] circle (2pt);

\end{tikzpicture}
\end{center}
\caption{Example of a transitive forest. The solid edges are part of
  the trees whose comparability graphs make up the graph. The
  transitive forest consists of both the solid and the dashed edges.}
\label{example:transitiveforest}
\end{figure}
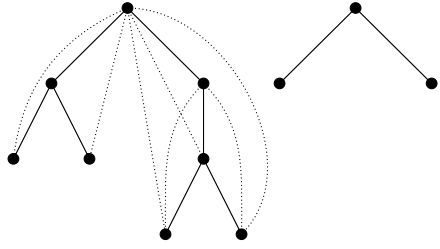

Let $\MonDecidable$ denote the smallest isomorphism-closed class of
monoids such that
\begin{conditions}
\item for each $n\ge 0$, we have $\B^n\in\MonDecidable$ and
\item for $M,N\in\MonDecidable$, we also have $M*N\in\MonDecidable$ and
$M\times\Z\in\MonDecidable$.
\end{conditions}

Our main result characterizes those \PNP-free $\Gamma$ for which
valence automata over $\M\Gamma$ have a decidable emptiness problem.
\begin{thm}\label{emptiness:graphmonoids:equivalence}
Let $\Gamma$ be \PNP-free. Then the following conditions are equivalent:
\begin{conditions}
\item\label{dec:cond:decidable} Emptiness is decidable for valence automata over $\M\Gamma$.
\item\label{dec:cond:cfourpfour} $\Gamma^-$ contains neither $\Cfour{}$ nor $\Pfour{}$ as an induced subgraph.
\item\label{dec:cond:transitive} $\Gamma$ is a transitive forest.
\item\label{dec:cond:monclass} $\M\Gamma\in\MonDecidable$.
\end{conditions}
\end{thm}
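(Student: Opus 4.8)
The plan is to prove the four conditions equivalent by establishing the cycle
\DesImp{dec:cond:decidable}{dec:cond:cfourpfour}, \DesImp{dec:cond:cfourpfour}{dec:cond:transitive}, \DesImp{dec:cond:transitive}{dec:cond:monclass}, \DesImp{dec:cond:monclass}{dec:cond:decidable}, where the first two implications are essentially immediate, the third is a structural induction on $\Gamma$ that crucially uses \PNP-freeness, and the fourth is the substantial decidability argument (deferred to \cref{proofmain}).

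For \DesImp{dec:cond:decidable}{dec:cond:cfourpfour} I would argue by contraposition using \cref{re:pfourcfour}: if $\Gamma^-$ contains $\Cfour$ or $\Pfour$ as an induced subgraph, then $\VA{\M\Gamma}$ is the class of recursively enumerable languages, so emptiness is undecidable. For \DesImp{dec:cond:cfourpfour}{dec:cond:transitive} I would invoke a classical graph-theoretic fact: the graphs avoiding induced $\Pfour$ and $\Cfour$ are exactly the trivially perfect graphs, which by a theorem of Wolk are precisely the disjoint unions of comparability graphs of trees, i.e. the transitive forests. (The converse, \DesImp{dec:cond:transitive}{dec:cond:cfourpfour}, also holds directly — in a forest order the ancestors of any element form a chain, which forbids both an induced $\Pfour$ and an induced $\Cfour$ — but it is not needed for the cycle.)

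The implication \DesImp{dec:cond:transitive}{dec:cond:monclass} I would prove by induction on the number of vertices of $\Gamma$, using that induced subgraphs of \PNP-free transitive forests are again \PNP-free transitive forests. If $\Gamma^-$ is disconnected, write $\Gamma$ as a disjoint union $\Gamma_1\cup\Gamma_2$ of nonempty graphs; then $\M\Gamma\cong\M\Gamma_1*\M\Gamma_2$ and the induction hypothesis together with closure of $\MonDecidable$ under free products finishes this case. If $\Gamma^-$ is connected and nonempty it is the comparability graph of a single tree, whose root $r$ is adjacent in $\Gamma^-$, hence in $\Gamma$, to every other vertex; therefore $\M\Gamma\cong\Z\times\M(\Gamma\setminus\{r\})$ if $r$ is looped and $\M\Gamma\cong\B\times\M(\Gamma\setminus\{r\})$ if $r$ is unlooped. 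In the looped case, $\Gamma\setminus\{r\}$ is a smaller \PNP-free transitive forest, so the induction hypothesis applies and $\MonDecidable$ is closed under taking the product with $\Z$. In the unlooped case, \PNP-freeness forces any two neighbours of $r$ — that is, any two vertices of $\Gamma\setminus\{r\}$ — to be adjacent (otherwise they form, with $r$ in the middle, an induced \PNP-graph), so $\Gamma\setminus\{r\}$ is a complete graph carrying loops on some subset of its vertices; hence $\M(\Gamma\setminus\{r\})\cong\B^{a}\times\Z^{b}$ and $\M\Gamma\cong\B^{a+1}\times\Z^{b}\in\MonDecidable$. This is exactly the step where excluding \PNP-graphs matters: without it, $\Gamma\setminus\{r\}$ could be $\B*\B$ and $\M\Gamma$ would be $\B^{(2)}\times\B$, the model we leave open.

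The main obstacle is \DesImp{dec:cond:monclass}{dec:cond:decidable}: decidability of emptiness for valence automata over every $M\in\MonDecidable$. Here I would analyse the inductive structure of $\MonDecidable$ — monoids built from the $\B^n$ by iterated free products and by adjoining $\Z$-factors — and show that the associated storage mechanism (a stack whose entries are Petri-net configurations, augmented with a supply of blind counters) can be simulated by Reinhardt's priority multicounter machines, whose reachability problem is decidable; the $\Z$-factors are comparatively harmless, since blind counters come essentially for free. This reduction, which also isolates the model \labelcref{decidable:mechanism}, is carried out in \cref{proofmain}; \cref{pnpgraphs} and the accompanying analysis there further explain why the remaining open case is precisely of `PPN shape'.
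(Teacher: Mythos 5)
Your proposal is correct and follows essentially the same route as the paper: the same cycle of implications, with \DesImp{dec:cond:decidable}{dec:cond:cfourpfour} from \cref{re:pfourcfour} by contraposition, \DesImp{dec:cond:cfourpfour}{dec:cond:transitive} from Wolk's theorem (\cref{transitiveforest}), \DesImp{dec:cond:transitive}{dec:cond:monclass} by the same induction on $|V|$ distinguishing the looped/unlooped cases at the root (the paper's \cref{emptiness:forest:monclass}), and \DesImp{dec:cond:monclass}{dec:cond:decidable} via a reduction to Reinhardt's priority multicounter machines (the paper's \cref{emptiness:parikh,emptiness:decidable}). The only cosmetic differences are that you phrase the unlooped case via $\M\Gamma\cong\B\times\M(\Gamma\setminus\{r\})$ before specializing to $\B^{a+1}\times\Z^b$, whereas the paper computes $\B^m\times\Z^n$ directly, and you leave the empty-graph base case implicit.
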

We present the proof in \cref{proofmain}.
Note that this generalizes the fact that emptiness is decidable for pushdown
automata (i.e. graphs with no edges) and partially blind multicounter automata
(i.e. cliques), or equivalently, reachability in Petri nets.

Note that if $\Gamma$ has a loop on every vertex, then $\M\Gamma$
is a group. Groups that arise in this way are called \emph{graph
groups}. In general, if a monoid $M$ is a group, then emptiness
for valence automata over $M$ is decidable if and only if the
\emph{rational subset membership problem} is decidable for
$M$~\cite{KambitesSilvaSteinberg2007}. The latter problem asks,
given a rational set $R$ over $M$ and an element $m\in M$, whether
$m\in R$; see~\cite{Lohrey2015a} for more information. Therefore,
\cref{emptiness:graphmonoids:equivalence} extends the following result
of Lohrey and Steinberg~\cite{LohreySteinberg2008},
which characterizes those graph groups for which the rational subset
membership problem is decidable.
\begin{thm}[Lohrey and Steinberg~\cite{LohreySteinberg2008}]\label{ratmp:graphgroups}
Let $\Gamma$ be a graph in which every vertex is looped. Then the rational
subset membership problem for the group $\M\Gamma$ is decidable if and only if
$\Gamma$ is a transitive forest.
\end{thm}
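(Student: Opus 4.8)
The plan is to obtain this as a corollary of \cref{emptiness:graphmonoids:equivalence} combined with the correspondence, quoted above from Kambites, Silva, and Steinberg~\cite{KambitesSilvaSteinberg2007}, between rational subset membership and emptiness of valence automata over a group. Fix a graph $\Gamma$ in which every vertex is looped. Then $\M\Gamma$ is a group: for each vertex $v$ the relations $a_v\bar a_v=\eword$ and $a_v\bar a_v=\bar a_v a_v$ force $\bar a_v a_v=\eword$, so every generator $a_v$ is invertible with inverse $\bar a_v$. Consequently the rational subset membership problem for $\M\Gamma$ is decidable if and only if the emptiness problem for valence automata over $\M\Gamma$ is decidable.

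Next I would verify that such a $\Gamma$ is automatically \PNP-free, so that \cref{emptiness:graphmonoids:equivalence} is applicable. This is immediate: by the characterization noted just after the definition of \PNP-freeness, a graph is \PNP-free as soon as every unlooped vertex has a pairwise-adjacent neighborhood, which holds vacuously when there are no unlooped vertices at all. (Equivalently, each of the three \PNP-graphs contains an unlooped vertex---the middle vertex of its underlying three-vertex path---and this property passes to induced subgraphs of $\Gamma$, none of which can therefore be isomorphic to a \PNP-graph.) Hence the equivalence of conditions \ref{dec:cond:decidable} and \ref{dec:cond:transitive} in \cref{emptiness:graphmonoids:equivalence} yields that emptiness for valence automata over $\M\Gamma$ is decidable if and only if $\Gamma$ is a transitive forest; note that this last condition depends only on $\Gamma^-$, so the loops are immaterial for it and it is exactly the condition appearing in the statement.

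Chaining the two equivalences gives precisely the assertion: the rational subset membership problem for $\M\Gamma$ is decidable if and only if $\Gamma$ is a transitive forest. There is no separate obstacle at this stage, since the statement is simply the specialization of \cref{emptiness:graphmonoids:equivalence} to graphs all of whose vertices are looped; the substance lies entirely in the proof of that theorem---in particular in the decidability of emptiness for valence automata over the monoids in $\MonDecidable$ (which subsume the graph groups of transitive forests) and in the undecidability supplied by \cref{re:pfourcfour} when $\Gamma^-$ contains $\Cfour$ or $\Pfour$ as an induced subgraph.
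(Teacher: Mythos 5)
Your derivation is correct, but it is worth being clear that the paper itself does not prove this statement at all: it is quoted as an imported result of Lohrey and Steinberg, and the paper's own contribution (\cref{emptiness:graphmonoids:equivalence}) is presented as an extension of it. What you propose is to recover the cited theorem as the fully-looped specialization of \cref{emptiness:graphmonoids:equivalence}, glued to the Kambites--Silva--Steinberg equivalence between rational subset membership for a group and emptiness of valence automata over that group. That chain is sound: $\M\Gamma$ is indeed a group when every vertex is looped (your argument that $a_v\bar a_v=\eword$ together with $a_v\bar a_v=\bar a_va_v$ forces two-sided invertibility is fine), every fully looped graph is \PNP-free since each \PNP-graph has an unlooped middle vertex and loops are preserved under taking induced subgraphs, and transitive-forest-ness depends only on $\Gamma^-$. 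Crucially, the argument is not circular within this paper: the undecidability half of \cref{emptiness:graphmonoids:equivalence} comes from the paper's own \cref{re:pfourcfour} (proved via Hartmanis--Hopcroft and Ginsburg--Greibach, not by invoking Lohrey--Steinberg), and the decidability half goes through Wolk's theorem, \cref{emptiness:forest:monclass}, and the reduction to Reinhardt's priority multicounter machines. By contrast, the original proof sketched in the paper's commentary works directly: Lohrey and Steinberg obtain decidability by showing that for transitive forests the relevant languages have effectively semilinear Parikh images, and undecidability by adapting the Aalbersberg--Hoogeboom construction. So your route buys the theorem ``for free'' from the stronger monoid-level result (at the cost of importing the much heavier Reinhardt machinery), whereas the original argument is self-contained, group-specific, and yields the stronger structural information of semilinearity in the decidable cases.
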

Lohrey and Steinberg show decidability by essentially proving that in
their case, the languages in $\VA{\M\Gamma}$ have semilinear Parikh
images (although they use different terminology). Here, we extend this
argument by showing that in the equivalent cases of
\cref{emptiness:graphmonoids:equivalence}, the Parikh images of
$\VA{\M\Gamma}$ are those of languages accepted by priority
multicounter machines. The latter were introduced and shown to have a
decidable reachability problem by Reinhardt~\cite{Reinhardt2008}.

\paragraph{Intuition for decidable cases} In order to provide an intuition for
those storage mechanisms (not containing a pushdown Petri net) with a
decidable emptiness problem, we present an equally expressive class of monoids
for which the corresponding storage mechanisms are easier to grasp. Let
$\MonDecidableSimple$ be the smallest
isomorphism-closed class of monoids with 
\begin{conditions}
\item for each $n\in\N$, we have $\B^n\in\MonDecidableSimple$,
\item for each $M\in\MonDecidableSimple$, we also have $\B*M\in\MonDecidableSimple$ and $M\times\Z\in\MonDecidableSimple$.
\end{conditions}
Thus, $\MonDecidableSimple$ realizes those storage mechanisms that can be
constructed from a finite set of \emph{partially blind counters} ($\B^n$) by
\emph{building stacks} ($M\mapsto \B*M$) and \emph{adding blind counters}
($M\mapsto M\times\Z$). Then, in fact, the monoids in $\MonDecidableSimple$ produce the same
languages as those in $\MonDecidable$.
\begin{prop}\label{emptiness:decidable:intuition}
$\VA{\MonDecidable}=\VA{\MonDecidableSimple}$.
\end{prop}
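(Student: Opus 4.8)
The inclusion $\VA{\MonDecidableSimple}\subseteq\VA{\MonDecidable}$ is the easy half: a routine induction along the construction of $\MonDecidableSimple$ shows $\MonDecidableSimple\subseteq\MonDecidable$, since $\B^n\in\MonDecidable$ and, whenever $M\in\MonDecidableSimple\subseteq\MonDecidable$, also $\B*M=\B^1*M\in\MonDecidable$ and $M\times\Z\in\MonDecidable$. So the content of the proposition is the converse $\VA{\MonDecidable}\subseteq\VA{\MonDecidableSimple}$. I would prove this by showing that every $M\in\MonDecidable$ admits an $N\in\MonDecidableSimple$ with $M\preceq N$, where $M\preceq N$ is the simulation preorder ``every valence automaton over $M$ can be transformed into an equivalent one over $N$'' (so $M\preceq N$ implies $\VA M\subseteq\VA N$). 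Throughout I use two properties of $\preceq$, both of which are part of the plumbing underlying \cref{emptiness:graphmonoids:equivalence}: (a) the maps $X\mapsto X\times\Z$, $X\mapsto\B*X$ and $(X,Y)\mapsto X*Y$ are monotone with respect to $\preceq$; and (b) $X\preceq X\times\Z$ and $X\preceq\B*X$, witnessed by the canonical embeddings $X\hookrightarrow X\times\{0\}$ and $X\hookrightarrow\B*X$.

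The argument is then an induction on the derivation of $M\in\MonDecidable$. The base case $M=\B^n$ is trivial, since $\B^n\in\MonDecidableSimple$. If $M=M_1\times\Z$ with $M_1\preceq N_1\in\MonDecidableSimple$ by induction, then by monotonicity $M\preceq N_1\times\Z\in\MonDecidableSimple$. The remaining case $M=M_1*M_2$ reduces, via the induction hypotheses $M_i\preceq N_i\in\MonDecidableSimple$ and monotonicity of $*$, to the following claim: for all $N_1,N_2\in\MonDecidableSimple$ there is an $N\in\MonDecidableSimple$ with $N_1*N_2\preceq N$. (One could alternatively phrase everything in the graph-monoid picture of \cref{emptiness:graphmonoids:equivalence}, but the direct monoid induction seems cleaner.)

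The engine for this claim is the structural lemma
\[
  M_1\preceq C \ \text{ and } \ M_2\preceq C \qquad\Longrightarrow\qquad M_1*M_2\preceq\B*C,
\]
which specialises in particular to $X*X\preceq\B*X$ and to $M_1*M_2\preceq\B*(M_1\times M_2)$. I would prove it by the standard simulation of a free product by a stack: an accepting run of a valence automaton over $M_1*M_2$ yields a sequence of monoid elements whose alternation pattern of $M_1$- and $M_2$-factors is Dyck-nested, with every maximal $M_i$-block evaluating to $1$ in $M_i$; one simulates this over $\B*C$ by using the $\B$-part as an outer stack, pushed when a new block is opened and popped when it closes, and storing the running product of the current block in the top $C$-frame, carrying out the block's steps via $M_i\preceq C$. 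Correctness rests on the fact that a word over $\B*C$ evaluates to the identity exactly when its $\B$-part is balanced and every frame content equals $1_C$, which is precisely the block condition for $M_1*M_2$. I expect \emph{this} lemma to be the main obstacle: the delicate point is to phrase and prove it so that it respects $\preceq$ and not merely the accepted language, which I would do using the normal-form descriptions of free-product reductions and of $\VA{\B*C}$ already set up in \cref{proofmain}. Note that one cannot simply take $C=N_1\times N_2$: $\MonDecidableSimple$ is not closed under direct product, and $N_1\times N_2$ need not even be $\preceq$-below any monoid in $\MonDecidableSimple$ (a one-place pushdown Petri net arises as $\B\times(\B*\B)$), so $C$ must be chosen more economically.

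The right $C$ is a ``join'' of $N_1$ and $N_2$ inside $\MonDecidableSimple$. For $\MonDecidableSimple$-expressions $A,B$ I would define a monoid $J(A,B)\in\MonDecidableSimple$ with $A\preceq J(A,B)$ and $B\preceq J(A,B)$ by recursion, peeling off the outermost operation of a non-leaf argument: $J(\B^n,\B^m)=\B^{n+m}$; $J(\B*M,B)=\B*J(M,B)$; $J(M\times\Z,B)=J(M,B)\times\Z$; and symmetrically in the second argument. Each recursive call strictly decreases the total size of the two expressions, so the recursion terminates with a monoid built from cliques by the operations $\B*(-)$ and $(-)\times\Z$, i.e.\ in $\MonDecidableSimple$; and $A,B\preceq J(A,B)$ follows by induction from (a) and (b) (for instance $\B*M\preceq\B*J(M,B)$ by monotonicity, and $B\preceq J(M,B)\preceq\B*J(M,B)$). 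Finally, given $N_1,N_2\in\MonDecidableSimple$, set $C=J(N_1,N_2)$ and $N=\B*C\in\MonDecidableSimple$; then $N_i\preceq C$, so the structural lemma yields $N_1*N_2\preceq\B*C=N$. This completes the induction, hence $\VA{\MonDecidable}\subseteq\VA{\MonDecidableSimple}$, and with the easy inclusion the proposition follows.
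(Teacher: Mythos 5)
Your proposal follows essentially the same route as the paper: induct on the definition of $\MonDecidable$, with the free-product case handled by finding a common $P\in\MonDecidableSimple$ with $\VA{M'},\VA{N'}\subseteq\VA{P}$ and then showing $\VA{M'*N'}\subseteq\VA{\B*P}$. The only substantive difference is how that last containment is justified. The paper obtains it immediately by chaining two propositions it has already established: $\VA{M_0*M_1}\subseteq\Alg{\VA{M_0}\cup\VA{M_1}}$ (\cref{freeproduct}) and $\Alg{\VA{P}}=\VA{\B*P}$ for $\Rclass_1(P)\ne\{1\}$ (\cref{algbb}). Your ``structural lemma'' ($M_1,M_2\preceq C \Rightarrow M_1*M_2\preceq\B*C$, at the level of $\VA{\cdot}$) is precisely the composite of these two, so there is nothing to re-derive --- just cite them. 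Your explicit $J(A,B)$ recursion is a reasonable way to make the paper's brief ``Observe that \dots there is a common $P$'' concrete, though as written it is ambiguous when both arguments are non-leaves and needs a fixed convention; the upshot is the same.

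If you do want to prove the structural lemma from scratch, beware that the correctness criterion you state is false. It is \emph{not} the case that a word over $M_1*M_2$ evaluates to $1$ iff its alternation into maximal $M_1$- and $M_2$-blocks is Dyck-nested with every block evaluating to $1$: in $\B*\B$ with generators $a,\bar a$ and $b,\bar b$, the word $ab\bar b\bar a$ evaluates to $1$, yet the maximal blocks are $a$, $b\bar b$, $\bar a$ and two of them are not $1$. The analogous $\B*C$ criterion (``$\B$-part balanced and every frame content equals $1_C$'') fails on the same example. The correct characterisation is the iterated-insertion relation $\rightharpoonup$ from the proof of \cref{freeproduct}: one repeatedly inserts (dually, excises) same-component factors evaluating to $1$, and after an excision the adjacent blocks merge. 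This merging is exactly why the paper routes the argument through a grammar / algebraic-extension construction rather than a one-pass stack reading. A direct automaton construction can be made to work, but the machine must nondeterministically decide where these excisions occur (i.e.\ whether a pop is followed by a fresh push or by a continuation of the frame below), not read the block structure off the alternation pattern.
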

\Cref{emptiness:decidable:intuition} is proven in
\cref{proofexpressive}.  While our decidability proof for
$\MonDecidableSimple$ will be a reduction to priority multicounter
machines (see \cref{def:prio} for a definition), it seems likely that
these two models are incomparable in terms of expressiveness (see the
remarks after \cref{prio:decidable}).

\paragraph{Intersections with finite-index languages} This work
exhibits valence automata over $\MonDecidableSimple$ as an extension
of Petri nets that features a type of stack but retains decidability
of the emptiness problem.  Another recent result of this kind has been
obtained by Atig and Ganty~\cite{AtigGanty2011}.  They showed that
given a finite-index context-free language $K$ and a Petri net
language $L$, it is decidable whether $K\cap L$ is empty. Moreover,
they also employ a reduction to priority multicounter machines.  This
raises the question of how the two results relate to each other.  In
\cref{synthesis}, we present a natural language class that subsumes
both the languages of Atig and Ganty and those of
$\VA{\MonDecidableSimple}$ and prove that emptiness is still
decidable. Intuitively, this class is obtained by taking languages of
Atig and Ganty and then applying operators corresponding to
\emph{building stacks} and \emph{adding blind counters}.  The precise
definition and the result can be found in \cref{synthesis}.

\paragraph{Intuition for open cases} We also want to provide an intuition for
the remaining storage mechanisms, i.e. those defined by monoids $\M\Gamma$
about which \cref{re:pfourcfour,emptiness:graphmonoids:equivalence} make no
statement. To this end, we describe a class of monoids that are expressively
equivalent to these remaining cases. The remaining cases are given by those
graphs $\Gamma$ where $\Gamma^-$ does not contain $\Cfour{}$ or $\Pfour{}$, but
$\Gamma$ contains a \PNP{}-graph. Let $\MonRemaining$ denote the class of
monoids $\M\Gamma$, where $\Gamma$ is such a graph. Let $\MonRemainingSimple$
be the smallest isomorphism-closed class of monoids with
\begin{conditions}
\item $\B^{(2)}\times\B\in\MonRemainingSimple$ and 
\item for each $M\in\MonRemainingSimple$, we also have
  $\B*M\in\MonRemainingSimple$ and $M\times\B\in\MonRemainingSimple$.
\end{conditions}
This means, $\MonRemainingSimple$ realizes those storage mechanisms that
are obtained from \emph{a pushdown stack, together with one partially
blind counter} ($\B^{(2)}\times \B$) by the transformations of \emph{building stacks}
($M\mapsto \B*M$) and \emph{adding partially blind counters} ($M\mapsto
M\times\B$). 
\begin{prop}\label{emptiness:remaining:intuition}
$\VA{\MonRemaining} = \VA{\MonRemainingSimple}$.
\end{prop}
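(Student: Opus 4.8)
The plan is to prove the two inclusions $\VA{\MonRemaining}\subseteq\VA{\MonRemainingSimple}$ and $\VA{\MonRemainingSimple}\subseteq\VA{\MonRemaining}$ separately, in each case by translating between the graph-theoretic description of $\MonRemaining$ and the inductive description of $\MonRemainingSimple$. The crucial structural fact to establish first is a normal form for the graphs $\Gamma$ defining $\MonRemaining$: namely, $\Gamma^-$ is $\Cfour$- and $\Pfour$-free (hence, by the characterization used for \cref{emptiness:graphmonoids:equivalence}, a transitive forest) and $\Gamma$ contains a \PNP-graph. I would first argue that, modulo the monoid isomorphisms already recorded in the ``Storage mechanisms as graph monoids'' paragraph (disjoint union gives free product, join gives direct product, and a single unlooped vertex gives $\B$, a single looped vertex gives $\Z$), every such $\Gamma$ yields a monoid $\M\Gamma$ built from the building blocks $\B$, $\Z$ by the operations $*$ and $\times$, in a way mirroring the tree structure of the transitive forest $\Gamma^-$ together with the loops. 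The presence of a \PNP-graph as an induced subgraph guarantees, via \cref{pnpgraphs}, that somewhere in this build $\B^{(2)}\times\B$ (equivalently $\VA{\B^{(2)}\times\B}$) appears; conversely $\Pfour/\Cfour$-freeness prevents anything worse.

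For $\VA{\MonRemaining}\subseteq\VA{\MonRemainingSimple}$: given a \PNP-free-excluded graph $\Gamma$ (i.e.\ one defining a monoid in $\MonRemaining$), I would induct on the transitive-forest structure of $\Gamma^-$. At each step a graph is either (a) a disjoint union of smaller pieces, giving a free product $\M{\Gamma_1}*\M{\Gamma_2}$, or (b) a single (looped or unlooped) vertex joined to a smaller transitive forest, giving $\M{\Gamma_1}\times\B$ or $\M{\Gamma_1}\times\Z$ or a factor of $\B$/$\Z$ by itself. One has to push the $\B^{(2)}\times\B$ occurrence to the ``top'' of the construction: the key technical lemma is that the relevant classes $\VA{-}$ are closed under the operations $M\mapsto \B*M$ and $M\mapsto M\times\B$ and $M\mapsto M\times\Z$ (the last because $\Z$-counters can be simulated using $\B$-counters up to language equivalence, as the paper remarks blind counters are interchangeable with reversal-bounded ones in most situations — but here more cleanly, $\VA{M\times\Z}\subseteq\VA{M\times\B\times\B}$ or a similar containment lets one absorb $\Z$'s). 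Since $\MonRemainingSimple$ is closed precisely under $\B*(-)$ and $(-)\times\B$ starting from $\B^{(2)}\times\B$, and since every graph in the relevant family decomposes using exactly these moves (the $\B^{(2)}$ arising from two unlooped siblings with no common neighbor, i.e.\ a free factor of two partially blind counters, which is what forces a \PNP-graph), the languages land in $\VA{\MonRemainingSimple}$.

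For the reverse inclusion $\VA{\MonRemainingSimple}\subseteq\VA{\MonRemaining}$: I would show by induction that each $M\in\MonRemainingSimple$ satisfies $\VA{M}=\VA{\M\Gamma}$ for some $\Gamma$ in the family defining $\MonRemaining$. The base case $\B^{(2)}\times\B$ is realized by the \PNP-graph $\pnpZero{1}$ itself (a path on three vertices, all unlooped) via \cref{pnpgraphs} — indeed that proposition says $\VA{\M\Gamma}=\VA{\B^{(2)}\times\B}$ for any \PNP-graph. For the step $M\mapsto\B*M$: if $M=\M\Gamma$ then $\B*M=\M{\Gamma'}$ where $\Gamma'$ adds one isolated unlooped vertex to $\Gamma$; one checks $\Gamma'^-$ stays $\Cfour/\Pfour$-free (a disjoint union with an isolated vertex preserves transitive-forest-ness) and still contains a \PNP-graph (it did already). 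For $M\mapsto M\times\B$: $\Gamma'$ is obtained by adding an unlooped vertex adjacent to every vertex of $\Gamma$; here one must verify that joining a vertex to a transitive forest keeps it $\Cfour/\Pfour$-free — this holds because adding a universal vertex to a comparability graph of a tree gives the comparability graph of the tree with a new root, and $\Pfour$ cannot appear since any induced $P_4$ would have to avoid the new vertex, contradiction, while $\Cfour$ is excluded similarly. So $\Gamma'$ again defines a monoid in $\MonRemaining$.

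The main obstacle I anticipate is the reverse direction's bookkeeping for $M\times\B$ versus $M\times\Z$ and the precise match between the inductive operations of $\MonRemainingSimple$ and the graph operations: in particular ensuring that the $\Cfour/\Pfour$-freeness is genuinely preserved under adding a universal vertex (one must be careful that $\Gamma$ itself being a transitive forest — not merely $\Gamma^-$ — interacts correctly with loops), and, dually, that in the forward direction no graph in $\MonRemaining$ decomposes in a way requiring an operation outside $\{\B*(-),\,(-)\times\B\}$ applied to the $\B^{(2)}\times\B$ core — e.g.\ one must rule out needing $(-)\times\Z$ as a primitive, which is handled by absorbing blind counters as noted, or by observing any looped vertex in $\Gamma$ contributes a $\Z$ that can equally be simulated. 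Much of this parallels the analogous (and presumably already-carried-out) analysis behind \cref{emptiness:decidable:intuition}, so I would reuse that machinery wherever possible, the only genuinely new ingredient being the forced $\B^{(2)}\times\B$ summand coming from \cref{pnpgraphs}.
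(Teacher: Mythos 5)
Your proposal is correct and follows essentially the same two-directional induction as the paper. The one place the paper's version is cleaner: for $\VA{\MonRemaining}\subseteq\VA{\MonRemainingSimple}$ it simply proves $\VA{\M\Gamma}\subseteq\VA{M}$ for \emph{every} $\Gamma$ with $\Gamma^-$ a transitive forest (base case $\Gamma$ empty, giving $\VA{\TrivialMonoid}\subseteq\VA{\B^{(2)}\times\B}$), so there is no need to track where the PPN-subgraph sits or ``push the $\B^{(2)}\times\B$ to the top''; and the disconnected case $\Gamma=\Gamma_1\uplus\Gamma_2$—where your sketch leaves a small gap, since $\MonRemainingSimple$ has no primitive operation $M_1,M_2\mapsto M_1*M_2$—is handled by first finding a common $N\in\MonRemainingSimple$ with $\VA{M_i}\subseteq\VA{N}$ and then applying \cref{freeproduct,algbb} to get $\VA{\M\Gamma_1*\M\Gamma_2}\subseteq\Alg{\VA{N}}=\VA{\B*N}$.
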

We prove \cref{emptiness:remaining:intuition} in
\cref{proofexpressive}.  Of course, $\MonRemainingSimple$ generalizes
pushdown Petri nets, which correspond to monoids $\B^{(2)}\times\B^n$
for $n\in\N$.  Moreover, $\MonRemainingSimple$ also subsumes priority
multicounter machines (see p.~\pageref{def:prio} for a definition) in
a straightforward way: Every time we build stacks, we can use the new
pop operation to realize a zero test on all the counters we have added
so far.  Let $M_0=\TrivialMonoid$ and
$M_{k+1}=\B*(M_k\times\B)$. Then, priority $k$-counter machines
correspond to valence automata over $M_k$ where the stack heights
never exceed $1$.

\begin{rmk}\label{ppn-prio}
  Priority multicounter machines are already subsumed by pushdown
  Petri nets alone: Atig and Ganty~\cite[Lemma 7]{AtigGanty2011} show
  implicitly that for each priority multicounter machine, one can
  construct a pushdown Petri net that accepts the same language.
  Hence, valence automata over $\MonRemainingSimple$ are not the first
  perhaps-decidable generalization of both pushdown Petri nets and
  priority multicounter machines, but they generalize both in a
  natural way.
\end{rmk}

\section{Undecidability}\label{proofundecidability}
In this section, we prove \cref{re:pfourcfour}.  It should be mentioned that a
result similar to \cref{re:pfourcfour} was shown by Lohrey and
Steinberg~\cite{LohreySteinberg2008}: They proved that if every vertex in
$\Gamma$ is looped and $\Gamma^-$ contains $\Cfour$ or $\Pfour$ as an induced
subgraph, then the rational subset membership problem is undecidable for
$\M\Gamma$. Their proof adapts a construction of Aalbersberg and
Hoogeboom~\cite{AalbersbergHoogeboom1989}, which shows that the disjointness
problem for rational sets of traces is undecidable when the independence
relation has $\Pfour$ or $\Cfour$ as an induced subgraph.  An inspection of the
proof presented here, together with its prerequisites
(\cref{re:anbn,afl:trio}), reveals that the employed ideas are very similar to
the combination of Lohrey and Steinberg's and Aalbersberg and Hoogeboom's
proof.

A \emph{language class} is a collection of languages that contains at
least one non-empty language. In this work, for each language class,
there is a way to finitely represent each member of the class.
Moreover, an inclusion $\C\subseteq \D$ between language classes $\C$
and $\D$ is always meant to be \emph{effective}, in other words: Given
a representation of a language in $\C$, we can compute a
representation of that language in $\D$. The same holds for equalities
between language classes.

Let $X$ and $Y$ be alphabets. A relation $T\subseteq X^*\times Y^*$ is
called a \emph{rational transduction} if there is an alphabet $W$, a
regular language $R\subseteq W^*$, and morphisms $g\colon W^*\to X^*$
and $h\colon W^*\to Y^*$ such that $T=\{(g(w),h(w)) \mid w\in R\}$
(see~\cite{Berstel1979}). For a language $L\subseteq X^*$, we define
$TL=\{v\in Y^* \mid \exists u\in L\colon (u,v)\in T\}$.  A language
class $\C$ is a \emph{full trio} if for every language $L$ in $\C$,
the language $TL$ is effectively contained in $\C$ as well. Here,
``effectively'' means again that given a representation of a language
$L$ from $\C$ and a description of $T$, one can effectively compute a
representation of $TL$. For a language $L$, we denote by $\Trio{L}$
the smallest full trio containing $L$. Note that if $L\ne\emptyset$,
the class $\Trio{L}$ contains precisely the languages $TL$ for
rational transductions $T$.  For example, it is well-known that for
every monoid $M$, the class $\VA{M}$ is a full
trio~\cite{FernauStiebe2002a}.  A \emph{full AFL} is a full trio that
is also closed under Kleene iteration, i.e.  for each member $L$, the
language $L^*$ is effectively a member as well.

Here, we use the following fact. We denote the recursively enumerable
languages by $\RE$.
\begin{lem}\label{re:btwo}
Let $X=\{a_1,\bar{a}_1,b_1,a_2,\bar{a}_2,b_2\}$ and let $B_2\subseteq X^*$ be defined as
\[ B_2 = (\{a_1^n \bar{a}_1^n\mid n\ge 0\}b_1)^* \shuffle (\{a_2^n \bar{a}_2^n\mid n\ge 0\}b_2)^*. \]
Then $\RE$ equals $\Trio{B_2}$, the smallest full trio containing $B_2$.
\end{lem}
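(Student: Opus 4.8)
The plan is to establish the equality $\RE = \Trio{B_2}$ by two inclusions, of which only $\RE \subseteq \Trio{B_2}$ requires genuine work; the reverse inclusion is immediate since $B_2$ is clearly recursively enumerable (indeed computable) and $\RE$ is trivially closed under rational transductions. So the core task is to exhibit, for an arbitrary recursively enumerable language $L$, a rational transduction $T$ with $TL' = L$ where $L'$ is obtained from $B_2$ — or more precisely, to show $L \in \Trio{B_2}$. The natural route is to recall that $\RE = \Trio{L_0}$ for any sufficiently powerful ``universal'' language $L_0$, and to show $L_0 \in \Trio{B_2}$ (equivalently $B_2 \in \Trio{L_0}$ would give only one direction; we need $L_0 = TB_2$ for some rational transduction $T$).

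The key idea is that $B_2$ encodes, in the shuffle of two ``matched-counter-with-marker'' languages, exactly the mechanism of two partially blind counters that can each be independently zero-tested at the markers $b_1, b_2$. First I would observe that $\{a_i^n\bar a_i^n \mid n\ge 0\}$ is the behaviour of one partially blind counter that is forced to return to zero (the generators $a_i, \bar a_i$ push and pop), and that surrounding occurrences by $b_i$ lets a finite-state controller detect the zero-test points; the outer Kleene star $(\cdots b_i)^*$ then allows arbitrarily many such zero-test phases. Shuffling the two languages makes the two counters operate asynchronously, which is exactly what a two-counter (Minsky) machine needs: increments and decrements interleave freely, and zero-tests happen at the markers. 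Since two-counter machines with zero-tests simulate arbitrary Turing machines, and since a rational transduction has enough power to encode a Turing machine's transition table (reading the $a_i/\bar a_i/b_i$ trace of $B_2$ and outputting the simulated input word while rejecting, via the regular language $R$ in the transduction, any trace that does not correspond to a legal run with correct zero-tests), one gets every recursively enumerable language as $TB_2$. Concretely I would take $L_0$ to be a fixed $\RE$-complete language known to be $\Trio$-generating for $\RE$ (e.g. the trace language / halting language of a universal two-counter machine), build the transduction that reads $B_2$-words as counter-operation logs, and check it produces $L_0$; then for general $L \in \RE = \Trio{L_0}$ compose transductions using that $\Trio{L_0}$ is closed under rational transductions, which also closes $\Trio{B_2}$ under the composition since $\Trio{B_2}$ is a full trio.

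The main obstacle, and where care is needed, is the \emph{zero-test faithfulness}: in $B_2$ a block $a_i^n\bar a_i^n$ guarantees the counter is balanced \emph{within that block}, but the transduction must ensure that the finite control reads the $b_i$ markers in a way that correctly pairs ``the counter is currently zero'' with the simulated zero-test instructions, and that no spurious interleaving of the shuffle lets the machine cheat (e.g. by splitting a decrement across a marker). This is handled by choosing the regular constraint language $R \subseteq W^*$ in the transduction carefully so that only well-formed interleavings — those where each $b_i$ occurs exactly at a balanced point and the sequence of operations matches a valid computation history — survive, and letting the morphism $h$ output the corresponding Turing machine input. I would also double-check the ``effectivity'' clause: the construction of $T$ from a description of $L$ (via its two-counter machine) is plainly computable, so the inclusion $\RE \subseteq \Trio{B_2}$ is effective, matching the paper's standing convention. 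A minor secondary point to verify is that one genuinely needs \emph{two} counters and not one — a single $(\{a^n\bar a^n\}b)^*$ only gives a (one-counter) language with semilinear Parikh image, far below $\RE$ — so the shuffle is essential, which is consistent with the statement's use of $B_2$ rather than $B_1$.
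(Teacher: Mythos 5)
Your plan to establish $\RE\subseteq\Trio{B_2}$ by reading words of $B_2$ as operation logs of a two-counter Minsky machine does not go through, and the obstacle is not the one you flag (aligning $b_i$-markers with zero tests) but a more basic structural mismatch. In each stream of the shuffle, the factor $(\{a_i^n\bar{a}_i^n\mid n\ge 0\}b_i)^*$ forces the $i$-th ``counter'' to rise monotonically from $0$ to some $n$ and then fall monotonically back to $0$ before every $b_i$. A Minsky machine's counter follows an arbitrary $\pm1$ walk in $\N$ between zero tests: increments and decrements interleave freely, and the counter need not return to $0$ after each opposite pair of operations. No regular constraint $R$ inside a rational transduction can repair this, because $R$ can only restrict which interleavings of the two already-fixed stream words are accepted; it cannot reorder an unbounded amount of one stream to turn a tent-shaped trace into a general walk. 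And if the counter traces are forced to be tent-shaped, the counter conveys no information to the finite control beyond ``it returned to $0$,'' so the machines one obtains this way accept only regular languages, far short of $\RE$.

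The paper's proof does not attempt a machine simulation at all; it bypasses the issue via two black boxes. By Ginsburg and Greibach (\cref{afl:trio}), $\Trio{(\{a^n\bar{a}^n\mid n\ge 0\}b)^*}$ is exactly the smallest full AFL $\C$ containing $\{a^nb^n\mid n\ge 0\}$, and by Hartmanis and Hopcroft (\cref{re:anbn}), every recursively enumerable language is a homomorphic image of $L_1\cap L_2$ with $L_1,L_2\in\C$. Writing $L_i=T_i(\{a_i^n\bar{a}_i^n\mid n\ge 0\}b_i)^*$, a product construction on $T_1,T_2$ yields a single rational transduction $T$ with $TB_2=L_1\cap L_2$, so the shuffle's role is precisely to realize intersection. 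The genuine combinatorial work---verifying Turing machine computation histories by length-matching successive configurations---is all inside Hartmanis--Hopcroft, not inside a counter simulation; a ``from first principles'' argument along your lines would essentially have to reprove that theorem.
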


\Cref{re:btwo} is essentially due to Hartmanis and Hopcroft, who
stated it in slightly different terms:
\begin{thm}[Hartmanis and Hopcroft~\cite{HartmanisHopcroft1970}]\label{re:anbn}
Let $\C$ be the smallest full AFL containing $\{a^nb^n\mid n\ge 0\}$.  Every
recursively enumerable language is the homomorphic image of the intersection of
two languages in $\C$.
\end{thm}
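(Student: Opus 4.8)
This is a classical result; the plan is to run the standard ``computations as strings'' argument, but to route it through a machine model assembled from two \emph{independent} counting devices, so that the two languages produced land inside the comparatively weak class $\C$ rather than merely inside the context-free languages. Fix a recursively enumerable $L\subseteq\Sigma^*$. I would start from a two-counter automaton $M$ (a finite-state device with a one-way input tape and two non-negative counters that can be incremented, decremented and tested for zero), chosen so that $w\in L$ iff $M$ has an accepting run on $w$; here each step of $M$ either consumes an input symbol or changes the control state while performing one primitive operation on counter $1$ or on counter $2$. Such machines are Turing-complete, so this entails no loss of generality.

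I would then record an accepting run of $M$ explicitly as a word over a finite alphabet $\Delta$ of transition names, and fix the homomorphism $h\colon\Delta^*\to\Sigma^*$ sending every ``read $\sigma$'' transition to $\sigma$ and every other transition to $\eword$. The two languages are:
\begin{conditions}
\item $L_1\subseteq\Delta^*$: all words describing a transition sequence that is consistent with the finite control of $M$ \emph{and} with the behaviour of counter $1$ (all of its increments, decrements and zero-tests correct), while the zero-tests of counter $2$ are simply trusted;
\item $L_2\subseteq\Delta^*$: the same with the two counters interchanged.
\end{conditions}
By construction $L_1\cap L_2$ is exactly the set of encodings of genuine accepting runs of $M$, hence $h(L_1\cap L_2)=L$; it is immaterial that $L_1\cap L_2$ itself leaves $\C$, since the theorem only asks for a homomorphic image of such an intersection. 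So it remains to show $L_1,L_2\in\C$. The reason two counters are used is that $L_1$ is the language of a \emph{one-counter automaton}: reading a word over $\Delta$, it simulates $M$ with only counter $1$ present, checking at each step that the transition name is compatible with $M$'s control and with the counter-$1$ operation, and accepting every counter-$2$ zero-test without verification; $L_2$ is handled symmetrically. Thus $L_1$ and $L_2$ are one-counter languages, and one invokes the classical fact that the one-counter languages are contained in the smallest full AFL containing $\{a^nb^n\mid n\ge 0\}$ (in fact the two families coincide): the counter history is cut at the time points where the counter is empty, each maximal positive excursion is recognised using the operations available in a full AFL starting from $\{a^nb^n\}$, and the excursions are recombined by concatenation and Kleene iteration.

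The main obstacle is precisely this last step --- keeping the verification inside $\C$, which is a \emph{proper} subclass of the context-free languages and in particular does not contain every linear context-free language. The naive single-tape Turing-machine encoding, in which one checks $C_i\vdash C_{i+1}$ by writing successive configurations in alternating forward/reversed order and matching them from the outside in, yields linear context-free languages, which need not lie in $\C$; splitting instead along the two counters is what makes each half of the check depend on only a single unbounded quantity, so that an honest one-counter test suffices. The remaining points are routine: arranging that the run encoding still determines the consumed input (so that $h$ is well defined), and taking the intersection over the \emph{run} rather than over the input, so that the two ``certificates'' are forced to describe one and the same computation of $M$.
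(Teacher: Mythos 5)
Your overall scaffolding (two-counter machine, run encodings, $h(L_1\cap L_2)=L$) is fine, but the load-bearing step is false: it is not a classical fact that the one-counter languages are contained in — let alone coincide with — the smallest full AFL $\C$ containing $\{a^nb^n\mid n\ge 0\}$. In fact $\C\subseteq\fiCF$: the language $\{a^nb^n\}$ is linear (index $1$), and the finite-index (derivation-bounded) context-free languages form a full AFL by a theorem of Ginsburg and Spanier, so the smallest full AFL containing $\{a^nb^n\}$ sits inside $\fiCF$. On the other hand the semi-Dyck language $D_1^*$ is a one-counter language but is \emph{not} finite-index by Salomaa's theorem~\cite{Salomaa1969}, which is quoted in this very paper. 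Hence one-counter languages are not contained in $\C$, and your $L_1$ and $L_2$ — each of which must verify a globally consistent, arbitrary increment/decrement/zero-test history of one counter of a Turing-complete machine — are exactly of this Dyck-like kind and cannot be placed in $\C$ by the argument you give. Your sketched justification collapses at the same point: a maximal positive excursion of a one-counter machine is an arbitrary nonnegative walk from $0$ to $0$, i.e.\ already a Dyck-type pattern rather than a single up-phase followed by a down-phase, so cutting the history at its zeros does not reduce each piece to something reachable from $\{a^nb^n\}$ by full-trio operations, concatenation and iteration.

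The split has to be temporal rather than by counter. The classical Hartmanis--Hopcroft route writes the whole computation as a sequence of full configurations (counter values in unary, arranged/duplicated so that every comparison is between two \emph{adjacent} unary blocks), and lets $L_1$ check the transitions $C_{2i}\vdash C_{2i+1}$ while $L_2$ checks $C_{2i-1}\vdash C_{2i}$. Each of the two languages then performs only a sequence of disjoint, locally nested $a^nb^n$-type matchings, which is precisely what $\C=\Trio{(\{a^nb^n\}c)^*}$ (cf.\ \cref{afl:trio}) can express; global correctness is recovered only in the intersection, where the two interleaved families of checks are forced to describe one computation. Your per-counter split instead leaves each certificate with a single-counter verification task over the entire run, which already exceeds $\C$, so the proposal as written does not establish the theorem.
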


By the following auxiliary result of Ginsburg and
Greibach~\cite[Theorem~3.2a]{GinsburgGreibach1970}, \cref{re:btwo} will follow
from \cref{re:anbn}.  
\begin{thm}[Ginsburg and Greibach~\cite{GinsburgGreibach1970}]\label{afl:trio}
Let $L\subseteq X^*$ and $c\notin X$. The smallest full AFL containing $L$
equals $\Trio{(Lc)^*}$.
\end{thm}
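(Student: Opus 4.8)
The plan is to prove the two inclusions separately, writing $\mathsf{AFL}(L)$ for the smallest full AFL containing $L$. The easy direction is $\Trio{(Lc)^*}\subseteq\mathsf{AFL}(L)$: since $Lc$ is a rational transduction image of $L$ (append the letter $c$), we have $Lc\in\mathsf{AFL}(L)$, and then $(Lc)^*\in\mathsf{AFL}(L)$ because a full AFL is closed under Kleene iteration. Thus $\mathsf{AFL}(L)$ is a full trio containing $(Lc)^*$, and as $\Trio{(Lc)^*}$ is the smallest such, the inclusion follows.

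For the converse $\mathsf{AFL}(L)\subseteq\Trio{(Lc)^*}$, since $\mathsf{AFL}(L)$ is minimal it suffices to show that $\Trio{(Lc)^*}$ is itself a full AFL containing $L$. It is a full trio by definition. For $L\in\Trio{(Lc)^*}$ I would use the rational transduction $T=\{(uc,u)\mid u\in X^*\}$, which satisfies $T((Lc)^*)=L$: in a nonempty word of $(Lc)^*$ the prefix up to the first $c$ is exactly a word of $L$, so a word $uc$ with $u\in X^*$ lies in $(Lc)^*$ iff $u\in L$. (The degenerate case $L=\emptyset$, where $(Lc)^*=\{\eword\}$, is covered since every full trio contains the regular languages.) The remaining, and genuinely delicate, point is closure of $\Trio{(Lc)^*}$ under Kleene star.

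So let $K\in\Trio{(Lc)^*}$, say $K=T((Lc)^*)=\{h(r)\mid r\in R,\ g(r)\in(Lc)^*\}$ for a rational transduction given by a regular $R\subseteq W^*$ and morphisms $g\colon W^*\to(X\cup\{c\})^*$ and $h\colon W^*\to Y^*$. I would like to realize $K^*$ by inserting a fresh separator $\#$: set $W'=W\cup\{\#\}$, $R'=(R\#)^*$, and extend $g,h$ by $g(\#)=h(\#)=\eword$. The trouble is over-generation: for $r=r_1\#\cdots r_m\#\in R'$ one only knows $g(r_1)\cdots g(r_m)\in(Lc)^*$, which does not by itself force each $g(r_i)\in(Lc)^*$. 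The key observation that fixes this is that the $c$-positions in a word of $(Lc)^*$ are determined: if every $g(r_i)$ is empty or ends in $c$, then $g(r_1)\cdots g(r_m)\in(Lc)^*$ \emph{does} imply each $g(r_i)\in(Lc)^*$, by splitting the canonical factorization $\ell_1c\cdots\ell_kc$ at the block boundaries. Conversely any $g(r)\in(Lc)^*$ is automatically empty or ends in $c$. Hence I would first replace $R$ by its intersection with the regular language of those $r$ for which $g(r)$ is empty or ends in $c$ — this changes neither $K$ nor the accepted outputs — so that this property may be assumed. Then a routine check shows the transduction $(W',R',g,h)$ maps $(Lc)^*$ exactly onto $K^*$: one inclusion uses that $(Lc)^*$ is closed under concatenation, the other the block-splitting argument. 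Therefore $K^*\in\Trio{(Lc)^*}$, which finishes the argument.

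The main obstacle is precisely this last step — turning the separator construction into a faithful one rather than one that over-generates — and the normalization ``every $g(r)$ ends at a $c$'' is the trick that makes it go through; everything else is bookkeeping about rational transductions.
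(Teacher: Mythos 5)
The paper does not prove this statement; it is cited directly as Theorem~3.2a of Ginsburg and Greibach, so there is no in-paper proof to compare against. On its own merits, your proof is correct. The easy inclusion and the transduction $T=\{(uc,u)\mid u\in X^*\}$ recovering $L$ from $(Lc)^*$ are fine, and the genuinely delicate step --- Kleene-star closure of $\Trio{(Lc)^*}$ --- is handled properly. The normalization (replace $R$ by $R\cap g^{-1}(\{\eword\}\cup(X\cup\{c\})^*c)$, a regular intersection that leaves $K$ unchanged since any $r$ with $g(r)\in(Lc)^*$ already satisfies the condition) is exactly the trick that prevents the separator construction from over-generating, and the block-splitting argument that justifies it is sound: because $L\subseteq X^*$ and $c\notin X$, the $c$-occurrences in a word of $(Lc)^*$ sit precisely at the boundaries $\ell_1c\mid\ell_2c\mid\cdots\mid\ell_kc$, so any factorization of such a word into pieces each ending at a $c$ (or empty) splits it into $(Lc)^*$-pieces. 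The converse direction of the separator argument uses only that $(Lc)^*$ is closed under concatenation, and the degenerate case $L=\emptyset$ is dispatched correctly. This is essentially the standard argument for these full-AFL generator results, so nothing is lost relative to the cited source.
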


As announced, \cref{re:btwo} now follows. 
\begin{proof}[\cref{re:btwo}]\arev{I am not fully convinced of the interest of
  sharing the symbols $a_1$ and $a_2$ between $X$ and $X_\Gamma$, since the
    proof uses homomorphisms liberally.}
Since clearly $\Trio{B_2}\subseteq\RE$, it suffices to show
$\RE\subseteq\Trio{B_2}$.  According to \cref{re:anbn}, this amounts to
showing that $L_1\cap L_2\in\Trio{B_2}$ for any $L_1$ and $L_2$ in $\C$, where
$\C$ is the smallest full AFL containing the language $S=\{a^nb^n\mid n\ge 0\}$.  Hence, let
$L_1,L_2\in\C$. By \cref{afl:trio},  $L_1$ and $L_2$ belong to
$\C=\Trio{(Sc)^*}$.  This means we have
$L_i=T_i(\{a_i^n\bar{a}_i^n\mid n\ge 0\}b_i)^*$ for some rational transduction
$T_i$ for $i=1,2$.  Using a product construction, it is now easy to obtain a
rational transduction $T$ with $TB_2=L_1\cap L_2$.\myqed
\end{proof}

The proof of \cref{re:pfourcfour} will require one more auxiliary lemma.
In the following, $[w]_\Gamma$ denotes the congruence class of $w\in
X_\Gamma^*$ with respect to $\equiv_\Gamma$.
\begin{lem}\label{emptiness:graphmonoids:projection}
Let $\Gamma=(V,E)$ be a graph, let $W\subseteq V$ be a subset of vertices, and let $Y\subseteq
X_\Gamma$ be defined as $Y=\{a_w,\bar{a}_w \mid w\in W\}$. Then $u\equiv_\Gamma
v$ implies $\pi_Y(u)\equiv_\Gamma \pi_Y(v)$ for $u,v\in X_\Gamma^*$.
\end{lem}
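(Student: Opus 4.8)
The plan is to reduce to a single rewriting step. Since $\equiv_\Gamma$ is the congruence generated by $R_\Gamma$, it suffices to prove the claim in the case where $v$ is obtained from $u$ by a single application of one of the defining relations, i.e. $u = u_1 r u_2$ and $v = u_1 \bar r u_2$ with $(r,\bar r)\in R_\Gamma$ (or the symmetric case); the general statement then follows by induction on the length of a derivation $u \equiv_\Gamma v$, using that $\equiv_\Gamma$ is transitive and symmetric and that $\pi_Y$ is a morphism, so $\pi_Y(u_1 r u_2) = \pi_Y(u_1)\pi_Y(r)\pi_Y(u_2)$.

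So I would do a case distinction on which rule in $R_\Gamma$ is applied. First, if the rule is a commutation rule $(xy, yx)$ with $x\in\{a_v,\bar a_v\}$, $y\in\{a_w,\bar a_w\}$, $\{v,w\}\in E$, there are two subcases. If both $v,w\in W$, then $\pi_Y(xy)=xy$ and $\pi_Y(yx)=yx$, and since $\{v,w\}\in E$ this is again an instance of a commutation rule of $R_\Gamma$, so $\pi_Y(u)\equiv_\Gamma\pi_Y(v)$. If at least one of $v,w$ is not in $W$, then at least one of $x,y$ is erased by $\pi_Y$, so $\pi_Y(xy)=\pi_Y(yx)$ literally, and the two projected words are equal. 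Second, if the rule is a cancellation rule $(a_v\bar a_v,\eword)$: if $v\in W$ then $\pi_Y(a_v\bar a_v)=a_v\bar a_v$ and $\pi_Y(\eword)=\eword$, and $(a_v\bar a_v,\eword)\in R_\Gamma$, so again $\pi_Y(u)\equiv_\Gamma\pi_Y(v)$; if $v\notin W$ then $\pi_Y(a_v\bar a_v)=\eword=\pi_Y(\eword)$, so the projected words coincide. (The rule $(a_v\bar a_v,\bar a_v a_v)$ present when $\{v\}\in E$ is handled by the commutation case with $v=w$, or can be listed as a third case, treated identically.) In every case the projected words are either literally equal or related by a single $R_\Gamma$-rule, hence $\equiv_\Gamma$-equivalent.

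There is essentially no obstacle here: the only point requiring a moment's care is the bookkeeping that a projected commutation rule is still a legal commutation rule — this is exactly where we use that $Y$ is a "full" block $\{a_w,\bar a_w\mid w\in W\}$ closed under the bar operation and indexed by a vertex set $W$, so that $\pi_Y$ never splits a generator from its inverse-companion in a way that would produce a non-rule. I would state the single-step reduction explicitly as the crux and then present the case analysis; the induction wrapper is routine.
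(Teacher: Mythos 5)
Your proof is correct and follows the same approach as the paper: observe that for each rule $(u,v)\in R_\Gamma$, projection by $\pi_Y$ either leaves the pair unchanged (still a rule of $R_\Gamma$) or collapses both sides to the same word, and then lift this to $\equiv_\Gamma$ using that it is the congruence generated by $R_\Gamma$ and $\pi_Y$ is a morphism. The paper states this in one sentence; you spell out the (entirely routine) case analysis, which is a reasonable way to present the same argument.
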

\begin{proof}
An inspection of the rules in the Thue system $T_\Gamma$ reveals that if
$(u,v)\in R_\Gamma$, then either $(\pi_Y(u),\pi_Y(v))=(u,v)$ or
$\pi_Y(u)=\pi_Y(v)$. In any case, $\pi_Y(u)\equiv_\Gamma \pi_Y(v)$. Since
$\equiv_\Gamma$ is a congruence and $\pi_Y$ a morphism, this implies the
\lcnamecref{emptiness:graphmonoids:projection}.\myqed
\end{proof}
Note that the foregoing \lcnamecref{emptiness:graphmonoids:projection} does not
hold for arbitrary alphabets $Y\subseteq X_\Gamma$. For example, if $V=\{1\}$,
$X_\Gamma=\{a_1,\bar{a}_1\}$, and $Y=\{a_1\}$, then $a_1\bar{a}_1\equiv_\Gamma
\eword$, but $a_1\not\equiv_\Gamma\eword$.

We are now ready to prove \cref{re:pfourcfour}.
\begin{proof}[\cref{re:pfourcfour}]
  Observe that $w\equiv_\Gamma \eword$ if and only if $w$ can be
  transformed into $\eword$ by finitely many times replacing an infix
  $u$ with an infix $v$ for some $(u,v)\in R_\Gamma$.  Since
  $R_\Gamma$ is finite, this implies that the set of all $w\in
  X_\Gamma^*$ with $w\equiv_\Gamma\eword$ is recursively
  enumerable. (In fact, whether $w\equiv_\Gamma \eword$ can be decided
  in polynomial time~\cite{Zetzsche2013a,Zetzsche2016c}.) In
  particular, one can recursively enumerate runs of valence automata
  over $\VA{\M\Gamma}$ and hence $\VA{\M\Gamma}\subseteq\RE$. For the
  other inclusion, recall that $\VA{M}$ is a full trio for any monoid
  $M$. Furthermore, if $\Delta$ is an induced subgraph of $\Gamma$,
  then $\M\Delta$ embeds into $\M\Gamma$, meaning
  $\VA{\M\Delta}\subseteq\VA{\M\Gamma}$.  Hence, according to
  \cref{re:btwo}, it suffices to show that $B_2\in\VA{\M\Gamma}$ if
  $\Gamma^-$ equals $\Cfour$ or $\Pfour$.

\begin{figure}[t]
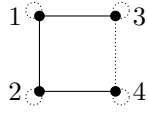

\begin{center}
\underlyingcfourpfour{1}{$1$}{$2$}{$3$}{$4$}
\end{center}
\caption{Graphs $\Gamma$ where $\Gamma^-$ is $\Cfour$ or $\Pfour$. Dotted lines
represent edges that may or may not exist in $\Gamma$.}
\label{membership:figure:pfourcfour}
\end{figure}
Let $X=\{a_1,\bar{a}_1,b_1,a_2,\bar{a}_2,b_2\}$.  and $\Gamma=(V,E)$. If
$\Gamma^-$ equals $\Cfour$ or $\Pfour$, then $V=\{1,2,3,4\}$ with
$\{3,1\},\{1,2\},\{2,4\}\in E$ and $\{1,4\},\{2,3\}\notin E$. See
\cref{membership:figure:pfourcfour}.  We construct a valence automaton $\cA$ over
$\M\Gamma$ for $B_2\subseteq X^*$ as follows.  First, $\cA$ reads a word in
$R=((a_1^*\bar{a}_1^*)b_1)^*\shuffle ((a_2^*\bar{a}_2^*)b_2)^*$. Here, when
reading $a_i$ or $\bar{a}_i$, it multiplies $[a_i]$ or $[\bar{a}_i]$,
respectively, to the storage monoid.  When reading $b_1$ or $b_2$, it
multiplies $[a_4]$ or $[a_3]$, respectively.  After this, $\cA$ switches to
another state and nondeterministically multiplies an element from $\{
[\bar{a}_4], [\bar{a}_3]\}^*$. Then it changes into an accepting state.  We
shall prove that $\cA$ accepts $B_2$. Let the morphism $h\colon X^*\to \{a_i,\bar{a}_i\mid
1\le i\le 4\}^*$ be defined by $h(a_i)=a_i$ and $h(\bar{a}_i)=\bar{a}_i$ for $i=1,2$ and
$h(b_1)=a_4$ and $h(b_2)=a_3$.

Suppose $w\in \Lang{\cA}$.  Then $w\in R$ and there is a $v\in
\{\bar{a}_4,\bar{a}_3\}^*$ with $[h(w)v]_\Gamma=[\eword]_\Gamma$.  Let
$w_i=\pi_{\{a_i,\bar{a}_i,b_i\}}(w)$.  If we can show $w_i\in
(\{a_i^n\bar{a}_i^n \mid n\ge 0\}^*b_i)^*$ for $i=1,2$, then clearly $w\in
B_2$.  For symmetry reasons, it suffices to prove this for $i=1$.  Let
$Y=\{a_1,\bar{a}_1,a_4,\bar{a}_4\}$.  Since
$[h(w)v]_\Gamma=[\eword]_\Gamma$, we have in particular
$[\pi_Y(h(w)v)]_\Gamma=[\eword]_\Gamma$ by
\cref{emptiness:graphmonoids:projection}. Moreover, 
\[ \pi_Y(h(w)v)=a_1^{n_1}\bar{a}_1^{\bar{n}_1}a_4\cdots a_1^{n_k}\bar{a}_1^{\bar{n}_k}a_4\bar{a}_4^m \]
for some $n_1,\ldots,n_k,\bar{n}_1,\ldots,\bar{n}_k,m\in\N$. Again, by
projecting to $\{a_4,\bar{a}_4\}^*$, we obtain
$[a_4^k\bar{a}_4^m]_\Gamma=[\eword]_\Gamma$ and hence $k=m$.  If
$n_k\ne\bar{n}_k$, then it is easy to see that $\pi_Y(h(w)v)$ cannot be reduced
to $\eword$, since there is no edge $\{1,4\}$ in $\Gamma$.  Therefore, we
have $n_k=\bar{n}_k$. It follows inductively that $n_i=\bar{n}_i$ for all $1\le
i\le k$.  Since $w_i=a_1^{n_1}\bar{a}_1^{\bar{n}_1}b_1\cdots
a_1^{n_k}\bar{a}_1^{\bar{n}_k}b_1$, this implies $w_i\in
(\{a_1^n\bar{a}_1^n\mid n\ge 0\}b_1)^*$.

We shall now prove $B_2\subseteq\Lang{\cA}$. Let $g\colon
X^*\to\{\bar{a}_3,\bar{a}_4\}$ be the morphism defined by $g(a_i)=g(\bar{a}_i)=\eword$
and $g(b_1)=\bar{a}_4$ and $g(b_2)=\bar{a}_3$.  We show by induction on $|w|$
that $w\in B_2$ implies $[h(w)\rev{g(w)}]_\Gamma=[\eword]_\Gamma$.  Since for
each $w\in B_2$, $\cA$ clearly has a run that puts $[h(w)\rev{g(w)}]_\Gamma$ into the
storage, this establishes $B_2\subseteq\Lang{\cA}$.  Suppose
$\pi_{\{b_1,b_2\}}(w)$ ends in $b_1$. Then $w=rsb_1$ for some $r\in X^*$, $s\in
(a_1^n\bar{a}_1^n)\shuffle t$ with $n\in\N$ and
$t\in\{a_2,\bar{a}_2,b_2\}^*$. Note that then $rt\in B_2$. Since there are
edges $\{1,2\},\{1,3\}$ in $\Gamma$, we have
$[h(s)]_\Gamma=[h(ta_1^n\bar{a}_1^n)]_\Gamma$. Moreover, since $g$ deletes
$a_1$ and $\bar{a}_1$, we have $g(s)=g(t)$. Therefore,
\begin{align*}
[h(w)\rev{g(w)}]_\Gamma &= [h(rsb_1)\rev{g(rsb_1)}]_\Gamma=[h(rta_1^n\bar{a}_1^n b_1)\rev{g(rtb_1)}]_\Gamma \\
 &= [h(rt)a_1^n\bar{a}_1^n a_4\bar{a}_4\rev{g(rt)}]_\Gamma=[h(rt)\rev{g(rt)}]_\Gamma.
\end{align*}
By induction, we have $[h(rt)\rev{g(rt)}]_\Gamma=[\eword]_\Gamma$ and hence
$[h(w)\rev{g(w)}]_\Gamma=[\eword]_\Gamma$.  If $\pi_{\{b_1,b_2\}}(w)$ ends in
$b_2$, then one can show $[h(w)\rev{g(w)}]_\Gamma=[\eword]_\Gamma$ completely
analogously.  This proves $B_2\subseteq\Lang{\cA}$ and hence the
\lcnamecref{re:pfourcfour}.\myqed
\end{proof}

\section{Decidability}\label{proofmain}
In this \lcnamecref{proofmain}, we prove
\cref{emptiness:graphmonoids:equivalence} and \cref{pnpgraphs}.
First, we mention existing results that are ingredients to our proofs.

Let $\C$ be a class of languages. A \emph{$\C$-grammar} is a quadruple
$G=(N,T,P,S)$ where $N$ and $T$ are disjoint alphabets and $S\in N$.
$P$ is a finite set of pairs $(A,M)$ with $A\in N$ and $M\subseteq
(N\cup T)^*$, $M\in\C$.  A pair $(A,M)\in P$ is called a
\emph{production of $G$}.  We write $x\grammarstep[G] y$ if $x=uAv$
and $y=uwv$ for some $u,v,w\in (N\cup T)^*$ and $(A,M)\in P$ with
$w\in M$. Moreover, $x \grammarstepsn[G]{n} y$ means that there are
$x_0,\ldots,x_n\in (N\cup T)^*$ with $x_{i-1} \grammarstep[G] x_i$ for $1\le i\le n$
and $x_0=x$ and $x_n=y$. Furthermore, we have $x\grammarsteps[G] y$ if $x\grammarstepsn[G]{n} y$
for some $n\ge 0$.
The \emph{language generated by $G$} is $\Lang{G}=\{ w\in
T^* \mid S\grammarsteps[G] w \}$. The class of all languages that are
generated by $\C$-grammars is called the \emph{algebraic extension of
  $\C$} and is denoted $\Alg{\C}$. Of course, if $\C\subseteq\D$, then
$\Alg{\C}\subseteq\Alg{\D}$. Moreover, it is easy to see that if
$\C\subseteq\Alg{\D}$, then $\Alg{\C\cup\D}=\Alg{\D}$.

The following is easy to show in the same way one shows that the
context-free languages constitute a full trio~\cite{Berstel1979}.
A proof can be found in \cite{Zetzsche2016c}.
\begin{lem}\label{alg-full-trio}
If $\C$ is a full trio, then $\Alg{\C}$ is a full trio as well.
\end{lem}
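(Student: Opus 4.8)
The plan is to imitate the classical argument that the context-free languages, i.e.\ $\Alg{\Reg}$, form a full trio; the only new ingredient is that the production languages now come from $\C$ instead of from $\Reg$, and that closure of $\C$ under rational transductions is what keeps the production languages of the constructed grammar inside $\C$. Concretely, let $L=\Lang{G}$ for a $\C$-grammar $G=(N,T,P,S)$, and let $\tau$ be a rational transduction. Realize $\tau$ by a finite transducer $\cB$ with state set $Q$, initial state $p_0$, final states $F$, reading over $T$ and writing over an alphabet $Y$; after a routine normalization we may assume every transition of $\cB$ reads either a single letter of $T$ or $\eword$ and writes a word over $Y$.

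I would build a $\C$-grammar $G'$ for $\tau L$ whose nonterminals are a fresh start symbol $S'$ together with triples $\langle p,Z,q\rangle$ for $p,q\in Q$ and $Z\in N\cup T$, with terminal alphabet $Y$. The intended invariant is that $\langle p,Z,q\rangle\grammarsteps[G']v$ holds exactly when there is some $w\in T^*$ with $Z\grammarsteps[G]w$ (taking $w=Z$ when $Z\in T$) such that $\cB$ admits a run from $p$ to $q$ on input $w$ producing output $v$. The productions of $G'$ are: (i) $S'\to\langle p_0,S,f\rangle$ for each $f\in F$; (ii) for every production $(A,M)\in P$ and all $p,q\in Q$, a production $\bigl(\langle p,A,q\rangle,\ \sigma_{p,q}(M)\bigr)$, where $\sigma_{p,q}$ is the finite-state transduction sending $Z_1\cdots Z_k$ to the set of all words $\langle q_0,Z_1,q_1\rangle\langle q_1,Z_2,q_2\rangle\cdots\langle q_{k-1},Z_k,q_k\rangle$ with $q_0=p$ and $q_k=q$ (guessing the intermediate states $q_i$ in its finite control, and sending $\eword$ to $\eword$ precisely when $p=q$); and (iii) for every $a\in T$ and $p,q\in Q$, a production $\bigl(\langle p,a,q\rangle,\ R_{p,q,a}\bigr)$, where $R_{p,q,a}\subseteq Y^*$ is the regular set of outputs of $\cB$ on runs from $p$ to $q$ that read exactly the letter $a$ (this also absorbs the $\eword$-transitions of $\cB$).

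Since $\C$ is a full trio, $M\in\C$ implies $\sigma_{p,q}(M)\in\C$; and since a full trio contains a nonempty language and hence (applying a suitable rational transduction) all regular languages, the finite production languages in (i) and the regular sets $R_{p,q,a}$ in (iii) also lie in $\C$. Thus $G'$ is a genuine $\C$-grammar. A straightforward induction on derivation length in both directions verifies the invariant above for the triples, and specializing it to $\langle p_0,S,f\rangle$ and combining with (i) yields $\Lang{G'}=\tau L$. As $\Alg{\C}$ also trivially contains a nonempty language, this proves it is a full trio.

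The only points that require any care are: recognizing the state-threading relabeling $\sigma_{p,q}$ as a rational transduction, so that the full-trio closure of $\C$ can be invoked; the (standard) observation that every full trio contains all regular languages; and the bookkeeping for $\eword$-transitions of $\cB$ and for empty right-hand sides occurring in $P$, both of which are handled uniformly through the diagonal cases $p=q$. I expect the induction establishing $\Lang{G'}=\tau L$ to be entirely routine, so there is no genuine obstacle here — which is precisely why the statement is called easy and the details are deferred to \cite{Zetzsche2016c}.
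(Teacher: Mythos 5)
Your skeleton — the triple construction threading transducer states through grammar symbols, with $\sigma_{p,q}$ recognized as a rational transduction so that $\sigma_{p,q}(M)\in\C$ — is the right idea and is indeed how one usually shows closure of $\CF$ under rational transductions. However, the $\eword$-bookkeeping has a genuine hole: the diagonal convention $p=q$ does not capture the output of state-changing $\eword$-transitions of $\cB$. Concretely, take $L=\{\eword\}$ (a $\C$-grammar with the single production $S\to\{\eword\}$) and the rational transduction $\tau=\{(\eword,y)\}$, realized by a single $\eword$-reading, $y$-writing transition from $p_0$ to $f$ with $p_0\neq f$. Then $\tau L=\{y\}$, but in your $G'$ the only production for $\langle p_0,S,f\rangle$ comes from $\sigma_{p_0,f}(\{\eword\})=\emptyset$, so $\Lang{G'}=\emptyset$. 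The root cause is that output on $\eword$-transitions of $\cB$ is collected only inside the regular sets $R_{p,q,a}$, and those are reached exclusively through a terminal; along any derivation subtree that yields $\eword$ there is no terminal to attach them to, so their output is silently lost. Your invariant is thus false for $\langle p,A,q\rangle$ with $p\neq q$ and $A$ nullable, and the completeness direction of the induction breaks.

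To close this you need an extra step beyond the bare diagonal convention. One route is to first eliminate $\eword$-productions from $G$ and then add a dedicated production $S'\to E$ with $E\subseteq Y^*$ the regular set of outputs of accepting $\eword$-runs of $\cB$; but identifying the nullable nonterminals of a $\C$-grammar is not effective for an arbitrary full trio. A cleaner route, and the one the cited references take, is to use the paper's bimorphism form $\tau L=h(g^{-1}(L)\cap R)$ and prove closure under the three operations separately: closure under the morphism $h$ is immediate (substitute terminal images); intersection with regular $R$ is your triple construction against a DFA, which has no $\eword$-transitions, so there the diagonal convention really is correct; and the inverse morphism $g^{-1}$ — the only place erasing enters — is handled by inserting, around every symbol of every production language, a fresh nonterminal generating the regular language $g^{-1}(\eword)^*$, which is again a rational transduction applied to a member of $\C$. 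This factorization sidesteps $\eword$-transitions altogether rather than trying to absorb them.
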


A monoid $M$ is called \emph{finitely generated} if there is a finite
subset $F\subseteq M$ such that every element of $M$ can
be written as a product of elements of $F$.  A language $I\subseteq
X^*$ is called an \emph{identity language for $M$} if there is a
surjective morphism $\varphi\colon X^*\to M$ with $I=\varphi^{-1}(1)$.
We will also use the following well-known fact about valence
automata. A proof can be found, e.g., in
\cite{Zetzsche2016c,Kambites2009}. 
\begin{prop}\label{va-smallest-full-trio}  
Let $M$ be a finitely generated monoid. Then:
\begin{enumerate}
\item $\VA{M}$ is the smallest full trio containing all identity
  languages of $M$.
\item If $L$ is any identity language of $M$, then $\VA{M}$ is the
  smallest full trio containing $L$.
\end{enumerate}
\end{prop}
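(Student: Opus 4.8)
The plan is to reduce both statements to three elementary facts about full trios and rational transductions: (a) $\VA{M}$ is a full trio and contains every identity language of $M$; (b) every language accepted by a valence automaton over $M$ is a rational transduction image of one fixed identity language of $M$; and (c) any identity language of $M$ is a rational transduction image of any other identity language of $M$. Granting (a)--(c), statement~1 follows because, by (a), $\VA{M}$ is itself a full trio containing all identity languages of $M$, while by (b) any full trio with that property already contains every $\Lang{\cA}$, hence all of $\VA{M}$. Statement~2 then follows from statement~1 together with (a) and (c): for an identity language $L$ of $M$, fact (a) gives $L\in\VA{M}$ and $\VA{M}$ a full trio, so $\Trio{L}\subseteq\VA{M}$; and by (c) the full trio $\Trio{L}$ contains every identity language of $M$, so statement~1 yields $\VA{M}\subseteq\Trio{L}$.

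For (a), the full trio part is the known fact that $\VA{M}$ is a full trio~\cite{FernauStiebe2002a}. For the rest, given a surjective morphism $\varphi\colon X^*\to M$ with identity language $I=\varphi^{-1}(1)$, I would take the one-state valence automaton over $M$ whose single state $q_0$ is both initial and final and which has a loop edge $(q_0,x,\varphi(x),q_0)$ for every $x\in X$; a word $w$ is then accepted exactly when $\varphi(w)=1$, i.e.\ when $w\in I$.

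For (b), I would first use finite generation of $M$ to fix a finite alphabet $Y$ with a surjective morphism $\psi\colon Y^*\to M$ and set $I_0=\psi^{-1}(1)$. Given a valence automaton $\cA=(Q,X,M,E,q_0,F)$, I would regard the edge set $E$ as an alphabet, choose for each edge $e=(p,v,m,q)$ a word $u_e\in Y^*$ with $\psi(u_e)=m$, and define two morphisms on $E^*$: a yield morphism $\beta$ with $\beta(e)=v$ and a storage morphism $\gamma$ with $\gamma(e)=u_e$. The set $R\subseteq E^*$ of edge words spelling an accepting path of $\cA$ is regular (it is recognised by the finite automaton with state set $Q$, initial state $q_0$, final states $F$, and a transition $p\to q$ on the letter $e=(p,v,m,q)$, together with $\eword$ if $q_0\in F$), so $T=\{(\gamma(w),\beta(w))\mid w\in R\}$ is a rational transduction. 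Unwinding the definitions, a word $w\in R$ contributes $\beta(w)$ to $T(I_0)$ precisely when $\gamma(w)\in I_0$, i.e.\ when $\psi(\gamma(w))=1$, which is exactly the condition that the product of the monoid labels along the corresponding path of $\cA$ is the neutral element; hence $T(I_0)=\Lang{\cA}$. For (c), given identity languages $L=\varphi^{-1}(1)$ and $I=\psi'^{-1}(1)$ with $\varphi\colon Z^*\to M$ and $\psi'\colon Y^*\to M$ surjective, I would lift each generator $y\in Y$ along the surjection $\varphi$ to a word $\sigma(y)\in Z^*$ with $\varphi(\sigma(y))=\psi'(y)$; the induced morphism $\sigma\colon Y^*\to Z^*$ satisfies $\varphi\circ\sigma=\psi'$, so $w\in I$ iff $\sigma(w)\in L$, i.e.\ $I=\sigma^{-1}(L)$, and the inverse-morphism relation $\{(\sigma(w),w)\mid w\in Y^*\}$ is a rational transduction witnessing $I\in\Trio{L}$.

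I expect everything to be routine once the setup is right; the one step that needs care is (b), where the run alphabet $E$ and the morphisms $\beta,\gamma$ must be arranged so that the single global acceptance constraint of $\cA$ --- the product of all traversed monoid labels equals $1$ --- is translated exactly into membership of the storage word $\gamma(w)$ in the fixed identity language $I_0$. Finite generation of $M$ is used only here, to guarantee that each edge label is $\psi(u_e)$ for a word $u_e$ over the finite alphabet $Y$; the remaining steps rely on nothing beyond the definitions of valence automaton, identity language, full trio, and rational transduction.
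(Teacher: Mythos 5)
Your proof is correct, and since the paper itself does not spell out an argument for this proposition (it cites \cite{Zetzsche2016c,Kambites2009}), there is nothing to contrast against; the decomposition you use --- (a) $\VA{M}$ is a full trio containing the identity languages, (b) every $\Lang{\cA}$ is a rational transduction image of a fixed identity language via the edge-alphabet/regular-path construction, and (c) identity languages are rationally inter-reducible via lifting generators along a surjection --- is exactly the standard route taken in those references. All the small points are handled properly: alphabets are finite in this paper so the one-state automaton in (a) has finitely many edges, finite generation is what makes (b) possible, and identity languages are always nonempty (they contain $\eword$), so the characterization of $\Trio{L}$ as the set of $TL$ applies in (c).
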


The well-known theorem of Chomsky and
Sch\"{u}tzenberger~\cite{Berstel1979}, expressed in terms of valence
automata, states that $\VA{\Z*\Z}$ is the class of context-free
languages. This formulation, along with a new proof, is due to
Kambites~\cite{Kambites2009}.  Let $\Reg$ and $\CF$ denote the class
of regular and context-free languages, respectively. Then we have
$\Reg=\VA{\TrivialMonoid}$ and $\CF=\Alg{\Reg}$. Here,
$\TrivialMonoid$ denotes the trivial monoid $\{1\}$.  Moreover, notice
that $\Alg{\Alg{\C}}=\Alg{\C}$ for every language class $\C$. Since
furthermore valence automata over $\B*\B$ are equivalent to pushdown
automata, we have in summary:
\begin{equation} \CF=\VA{\B*\B}=\Alg{\VA{\TrivialMonoid}}=\Alg{\CF}=\VA{\Z*\Z}. \label{context-free}\end{equation}

In order to work with general free products, we use the following
result, which expresses the languages in $\VA{M_0*M_1}$ in terms of
$\VA{M_0}$ and $\VA{M_1}$. It was first shown in
\cite{Zetzsche2013a}. In \cite{BuckheisterZetzsche2013a}, it was
extended to more general products. For the convenience of the reader,
we include a proof.
\begin{prop}[\cite{Zetzsche2013a}]\label{freeproduct}
Let $M_0$ and $M_1$ be monoids. Then
$\VA{M_0*M_1}$ is included in $\Alg{\VA{M_0}\cup\VA{M_1}}$.
\end{prop}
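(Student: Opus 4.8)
Given a valence automaton $\cA=(Q,X,M_0*M_1,E,q_0,F)$, I will build a $\C$-grammar $G$, with $\C=\VA{M_0}\cup\VA{M_1}$, such that $\Lang{G}=\Lang{\cA}$; since this is effective and $\cA$ is arbitrary, it gives $\VA{M_0*M_1}\subseteq\Alg{\VA{M_0}\cup\VA{M_1}}$. First I would normalise $\cA$: every element of $M_0*M_1$ has a unique normal form $h_1\cdots h_k$ with the $h_j$ alternating between $M_0\setminus\{1\}$ and $M_1\setminus\{1\}$, so by splitting each edge into a chain through fresh states I may assume that every edge of $\cA$ carries either $1$ or a single element of $M_0\cup M_1$, without changing $\Lang{\cA}$.

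\textbf{The grammar.} Take one nonterminal $B_{p,q}$ for each pair $p,q\in Q$, with the intended meaning that $B_{p,q}$ derives exactly the words read by runs of $\cA$ from $p$ to $q$ whose traversed monoid elements multiply to $1$ in $M_0*M_1$, and a start symbol $S$ with the production $(S,\{B_{q_0,f}\mid f\in F\})$ (regular right-hand side, hence in $\C$). For $i\in\{0,1\}$ I would build an auxiliary valence automaton $\cA_i$ over $M_i$ with input alphabet $X\cup\{B_{r,s}\mid r,s\in Q\}$: $\cA_i$ walks along the edges of $\cA$ reading their input letters, multiplies into its $M_i$-storage each traversed element of $M_i$ (treating a $1$-labelled edge as the identity), cannot traverse an edge labelled by an element of $M_{1-i}$, but from any state $r$ may nondeterministically \emph{delegate}, emitting the letter $B_{r,s}$ and jumping to an arbitrary state $s$; started in $p$ it accepts in $q$ when its $M_i$-storage is $1$. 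Setting $L^{(i)}_{p,q}=\Lang{\cA_i}\in\VA{M_i}\subseteq\C$ and taking all $(B_{p,q},L^{(0)}_{p,q})$ and $(B_{p,q},L^{(1)}_{p,q})$ as productions makes $G$ a $\C$-grammar, so $\Lang{G}\in\Alg{\C}$ holds by definition; it remains to show $\Lang{G}=\Lang{\cA}$.

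\textbf{Soundness and completeness.} For $\Lang{G}\subseteq\Lang{\cA}$, I would unfold a derivation of $w$ from $B_{p,q}$: the top production corresponds to an accepting run of some $\cA_i$ tracing a $p$-to-$q$ path of $\cA$ with ``holes'' at the delegation steps; filling each hole $B_{r,s}$ with the run obtained inductively for its sub-derivation yields a $p$-to-$q$ run of $\cA$ reading $w$, whose monoid value is $1$ since the non-delegated moves lie in $M_i$ and multiply to $1$ by acceptance of $\cA_i$, each hole contributes $1$ by induction, and in a free product a product of factors each equal to $1$ is $1$. For $\Lang{\cA}\subseteq\Lang{G}$, I would prove by induction on the number $n$ of non-trivial monoid moves of a run $\rho$ of $\cA$ from $p$ to $q$ with monoid value $1$ that the word it reads is derivable from $B_{p,q}$. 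For $n=0$ one delegation-free run of $\cA_0$ does it. For $n\ge 1$, write $g_1,\dots,g_n$ for the monoid moves and $v_k=g_1\cdots g_k$; if $v_k=1$ for some $0<k<n$, I would split $\rho$ there, let $\cA_i$ emit $B_{p,r}B_{r,q}$ with $r$ the intermediate state, and apply the hypothesis to the two strictly shorter balanced pieces. Otherwise $v_k\ne 1$ for all $0<k<n$; let $g_1\in M_i$. Here comes the combinatorial heart of the proof: the normal form of every $v_k$ ($0<k<n$) keeps its leading syllable in $M_i$, and if $t_0<\cdots<t_r$ list the indices of the $M_i$-moves made while that normal form has length at most $1$, then $g_{t_0}\cdots g_{t_r}=1$ in $M_i$ (these moves all act on the leading syllable), while each stretch of moves strictly between two consecutive $t_j$'s is itself balanced — its leading syllable is ``locked'' in $M_{1-i}$ from its creation until it is cancelled, and when it is cancelled $\rho$ either performs an $M_i$-move, which is the next $t_j$, or opens a fresh $M_{1-i}$-syllable. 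Granting this, $\cA_i$ can simulate $\rho$ by directly multiplying $g_{t_0},\dots,g_{t_r}$ and delegating each in-between stretch through a nonterminal $B_{r,s}$; every such stretch is balanced and omits at least $g_1=g_{t_0}$ and $g_n=g_{t_r}$, hence has fewer than $n$ moves, so the induction hypothesis applies.

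\textbf{Where the difficulty lies.} The one genuinely delicate point is that last combinatorial claim: that a product equal to $1$ in $M_0*M_1$ decomposes, relative to the factor of its first non-trivial term, into an $M_i$-valued ``skeleton'' together with balanced sub-runs hanging off it. This rests on uniqueness of normal forms in $M_0*M_1$ and a careful analysis of how the leading syllable of a prefix value can change under one multiplication. Everything else — the two inductions, checking that $\cA_i$ is a bona fide valence automaton over $M_i$, and effectivity — is routine; together with $\Reg\subseteq\C$ this yields $\VA{M_0*M_1}\subseteq\Alg{\VA{M_0}\cup\VA{M_1}}$.
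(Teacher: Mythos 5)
Your proof is correct, but it takes a genuinely different route from the paper. The paper works at the level of language classes: it reduces to the finitely generated case, invokes the facts that $\Alg{\VA{M_0}\cup\VA{M_1}}$ is a full trio and that $\VA{M_0*M_1}$ is the smallest full trio containing the identity language $L$ of $M_0*M_1$, and then shows $L\in\Alg{\VA{M_0}\cup\VA{M_1}}$ via a one-nonterminal grammar $S\to K_0\mid K_1\mid\eword$ whose right-hand sides $K_i$ intersperse $S$ into identity words of $M_i$. The combinatorial content there is the observation that $w\equiv_{R_0\cup R_1}\eword$ iff $w$ arises from $\eword$ by repeatedly inserting $M_i$-balanced blocks, which follows easily by peeling off one vanishing syllable at a time. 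You instead compile an arbitrary valence automaton $\cA$ over $M_0*M_1$ directly into a $\C$-grammar with a nonterminal $B_{p,q}$ per state pair and productions given by ``delegating'' valence automata over $M_i$, and the combinatorial burden shifts to your stretch decomposition of a balanced run. That heart of your argument is in fact sound, though it deserves to be spelled out: writing $v_k$ for the prefix value, one checks that for $0<k<n$ the leading syllable of $v_k$ lies in $M_i$ and that, within a stretch between consecutive $t$-indices, the suffix $w_k=g_{t_j+1}\cdots g_{t_j+k}$ is either $1$ or has leading syllable in $M_{1-i}$ (the case where it could flip to $M_i$ is precisely ruled out because that would create a new $t$-index). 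Since $v_{t_{j+1}-1}=x_j\cdot s_j$ has normal-form length at most one while $x_j\in M_i$ and $s_j$, if nontrivial, would contribute an $M_{1-i}$ syllable, uniqueness of free-product normal forms forces $s_j=1$; in particular you never need any (false, in $\B$) cancellation property of $M_i$. So the argument goes through. What the paper's route buys is brevity and avoiding the stretch analysis; what your route buys is a direct, self-contained automaton-to-grammar translation that does not presuppose \cref{va-smallest-full-trio} or the closure of $\Alg{\cdot}$ under the full-trio operations.
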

\begin{proof}
  For every monoid $M$, we have $\VA{M}=\bigcup_N \VA{N}$, where $N$
  ranges over the finitely generated submonoids of $M$. Moreover,
  every finitely generated submonoid of $M_0*M_1$ is included in some
  $N_0*N_1$, where $N_i$ is a finitely generated submonoid of $M_i$,
  for $i=0,1$. Therefore, we have $\VA{M_0*M_1}=\bigcup_{N_0,N_1}
  \VA{N_0*N_1}$, where $N_i$ ranges over the finitely generated
  submonoids of $M_i$, for $i=0,1$. Thus, it suffices to show the
  \lcnamecref{freeproduct} in the case that $M_0$ and $M_1$ are
  finitely generated.

  For $i=0,1$, let $(A_i,R_i)$ be a presentation of $M_i$ such that
  $A_i$ is finite. Then $M_0*M_1$ is presented by $(A_0\cup
  A_1,R_0\cup R_1)$.  Consider the languages $L_i=\{w\in A_i^* \mid
  w\equiv_{R_i} \eword\}$ for $i\in\{0,1\}$. Then $L_i$ is an identity
  language of $M_i$ and hence contained in $\VA{M_i}$. Moreover, by
  definition of $M_0*M_1$, the language $L=\{w\in (A_0\cup A_1)^* \mid
  w\equiv_{R_0\cup R_1} \eword \}$ is an identity language of
  $M_0*M_1$.

  According to \cref{alg-full-trio}, the class
  $\Alg{\VA{M_0}\cup\VA{M_1}}$ is a full trio. Thus,
  \cref{va-smallest-full-trio} tells us that it suffices to show that
  the identity language $L$ of $M_0*M_1$ is contained in
  $\Alg{\VA{M_0}\cup\VA{M_1}}$.

  Consider the binary relation $\rightharpoonup$ on $(A_0\cup A_1)^*$
  where $u\rightharpoonup v$ if and only if for some $i\in\{0,1\}$,
  there are $x,z\in (A_0\cup A_1)^*$, $y\in A_i^*$, such that $u=xz$,
  $v=xyz$, and $y\equiv_{R_i} \eword$. It is now easy to see that
  $w\equiv_{R_0\cup R_1} \eword$ if and only if
  $\eword\rightharpoonup^* w$.

  This allows us to construct a $\VA{M_0}\cup\VA{M_1}$-grammar for
  $L$. Let $G=(N,A_0\cup A_1,P,S)$, where $N=\{S\}$. In order to
  describe the productions, we need to define two languages. For
  $i\in\{0,1\}$, let
  \[ K_i=\{Sa_1S\cdots a_nS\mid a_1,\ldots,a_n\in A_i,~~a_1\cdots
  a_n\in L_i\}. \] Then $K_i$ can be obtained from $L_i$ using full
  trio operations and is thus contained in $\VA{M_i}$. Our grammar
  contains only three productions: $S\to K_0$, $S\to K_1$, and $S\to
  \{\eword\}$ (recall that as a regular language, $\{\eword\}$ belongs
  to each $\VA{M_i}$). Then, it is immediate that $w\in \Lang{G}$ if
  and only if $\eword\rightharpoonup^* w$ and hence $\Lang{G}=L$.\myqed
\end{proof}

\Cref{freeproduct} tells us that the languages in $\VA{M_0 * M_1}$ are
confined to the algebraic extension of $\VA{M_0}\cup\VA{M_1}$.  Our
next ingredient, \cref{algbb}, will complement \cref{freeproduct} by
describing monoids $N$ such that the algebraic extension of $\VA{M}$
is confined to $\VA{N}$. We need two auxiliary lemmas, for which the
following notation will be convenient. We write $M\hookrightarrow N$
for monoids $M,N$ if there is a morphism $\varphi\colon M\to N$ such
that $\varphi^{-1}(1)=\{1\}$.  Clearly, if $M\hookrightarrow N$, then
$\VA{M}\subseteq\VA{N}$: Replacing in a valence automaton over $M$ all
elements $m\in M$ with $\varphi(m)$ yields a valence automaton over
$N$ that accepts the same language.
\begin{lem}\label{freeproduct-embedding}
If $M\hookrightarrow M'$ and $N\hookrightarrow N'$, then we have
$M*N\hookrightarrow M'*N'$.
\end{lem}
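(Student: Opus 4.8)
The plan is to construct the desired embedding from the coproduct property of the free product and then verify that its kernel is trivial by means of the normal form theorem for free products of monoids. Fix morphisms $\varphi\colon M\to M'$ and $\psi\colon N\to N'$ witnessing $M\hookrightarrow M'$ and $N\hookrightarrow N'$, so that $\varphi^{-1}(1)=\{1\}$ and $\psi^{-1}(1)=\{1\}$. Let $\mu\colon M\to M*N$, $\nu\colon N\to M*N$ and $\mu'\colon M'\to M'*N'$, $\nu'\colon N'\to M'*N'$ denote the canonical morphisms. Since the free product, as defined in the preliminaries via the combined presentation, is the coproduct in the category of monoids, the pair of morphisms $\mu'\circ\varphi\colon M\to M'*N'$ and $\nu'\circ\psi\colon N\to M'*N'$ induces a unique morphism $\theta\colon M*N\to M'*N'$ with $\theta\circ\mu=\mu'\circ\varphi$ and $\theta\circ\nu=\nu'\circ\psi$.

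Next I would invoke the normal form theorem for free products: every element $x$ of $M*N$ has a unique expression $x=c_1c_2\cdots c_k$ with $k\ge 0$, each $c_i$ lying in $\mu(M\setminus\{1\})\cup\nu(N\setminus\{1\})$, and consecutive factors coming from different sides; here $k=0$ gives the identity. The same holds in $M'*N'$. In particular, an element is the identity if and only if its normal form is the empty product.

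The key step is to show that $\theta$ maps a reduced word to a reduced word. Suppose $x\ne 1$, so its normal form is $x=c_1\cdots c_k$ with $k\ge 1$. Write $c_i=\mu(a_i)$ with $a_i\in M\setminus\{1\}$, or $c_i=\nu(b_i)$ with $b_i\in N\setminus\{1\}$, according to the side of $c_i$. Then $\theta(x)=\theta(c_1)\cdots\theta(c_k)=d_1\cdots d_k$, where $d_i=\mu'(\varphi(a_i))$ or $d_i=\nu'(\psi(b_i))$ correspondingly. Because $\varphi^{-1}(1)=\{1\}$ we have $\varphi(a_i)\ne 1$, and because $\psi^{-1}(1)=\{1\}$ we have $\psi(b_i)\ne 1$; hence each $d_i$ lies in $\mu'(M'\setminus\{1\})\cup\nu'(N'\setminus\{1\})$. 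Moreover $d_i$ and $d_{i+1}$ come from different sides, since $c_i$ and $c_{i+1}$ do and $\theta$ preserves sides ($\mu\mapsto\mu'$, $\nu\mapsto\nu'$). Thus $d_1\cdots d_k$ is exactly the normal form of $\theta(x)$, and it has length $k\ge 1$, so $\theta(x)\ne 1$ by uniqueness. Therefore $\theta^{-1}(1)=\{1\}$, i.e. $M*N\hookrightarrow M'*N'$.

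The only genuine obstacle is justifying that $\theta$ carries reduced words to reduced words, and this is precisely where the hypotheses $\varphi^{-1}(1)=\{1\}$ and $\psi^{-1}(1)=\{1\}$ enter: they guarantee that no nontrivial factor collapses to the identity, which could otherwise force adjacent same-side factors to merge and shorten the word. If one wishes to avoid the normal form theorem, the same argument can be run at the level of the presentations $(A,R)$ and $(B,S)$ of $M$ and $N$, but I expect invoking the standard free-product normal form to be the cleanest route.
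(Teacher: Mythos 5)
Your proof is correct and follows the same route as the paper: define the induced morphism $\theta = \kappa$ on $M*N$ by extending $\varphi$ and $\psi$ (the paper writes $\kappa|_M=\varphi$, $\kappa|_N=\psi$, which is the same map your universal-property construction produces), and observe that triviality of the kernels of $\varphi,\psi$ forces triviality of the kernel of $\kappa$. The paper asserts this last step as ``clearly''; you supply the justification the paper leaves implicit, namely the normal-form theorem for free products of monoids and the observation that $\theta$ sends reduced alternating products to reduced alternating products of the same length.
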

\begin{proof}
  Let $\varphi\colon M\to M'$ and $\psi\colon N\to N'$ be
  morphisms with $\varphi^{-1}(1)=\{1\}$ and
  $\psi^{-1}(1)=\{1\}$. Then defining $\kappa\colon M*N\to M'*N'$ as
  the morphism with $\kappa|_M=\varphi$ and $\kappa|_N=\psi$ clearly
  yields $\kappa^{-1}(1)=1$.\myqed
\end{proof}

For a monoid $M$, we define $\Rclass_1(M)=\{x\in M\mid \exists y\in
M\colon xy=1\}$.  Observe that the set $\Rclass_1(M)$ can be thought
of as the storage contents that can occur in a valid run of a valence
automaton over $M$. The following result appeared first
in~\cite{Zetzsche2015a}.  We include a proof for the convenience of
the reader.
\begin{lem}[\cite{Zetzsche2015a}]\label{bpowers}
  Let $M$ be a monoid with $\Rclass_1(M)\ne\{1\}$. Then we have 
  $\B^{(n)}*M\hookrightarrow \B*M$ for every $n\ge 1$.  In particular,
  $\VA{\B*M}=\VA{\B^{(n)}*M}$ for every $n\ge 1$.
\end{lem}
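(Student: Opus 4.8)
The plan is to embed $\B^{(n)}*M$ into $\B*M$ by a morphism that is ``injective at the identity'' in the sense of $\hookrightarrow$. The key idea is that inside $\B*M$ we can simulate $n$ separate copies of the bicyclic monoid $\B$ by encoding the $i$-th copy via the single $\B$-factor sandwiched between elements of $M$. Concretely, fix an element $x\in\Rclass_1(M)$ with $x\ne 1$, so there is $y\in M$ with $xy=1$; note $x,y\notin\{1\}$ is not needed, only $x\ne 1$. Write $a,\bar a$ for the generators of the $\B$-factor in the target $\B*M$, and $a_1,\bar a_1,\dots,a_n,\bar a_n$ for the generators of the $n$ free copies of $\B$ in the source $\B^{(n)}*M$. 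I would define $\kappa\colon \B^{(n)}*M\to\B*M$ on generators by $\kappa(a_i)=x^{i-1}a$, $\kappa(\bar a_i)=\bar a\,y^{i-1}$ (powers taken inside the $M$-factor), and $\kappa|_M=\mathrm{id}_M$. One checks this respects the defining relations $a_i\bar a_i=1$ (since $x^{i-1}a\bar a y^{i-1}=x^{i-1}y^{i-1}=1$ because $xy=1$) and the free-product relations are vacuous, so $\kappa$ is a well-defined morphism.

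The main work is to show $\kappa^{-1}(1)=\{1\}$. An element of $\B^{(n)}*M$ has a normal form as an alternating product of blocks, each block being a nontrivial element of some $\B^{(n)}$-copy or of $M$; and a nontrivial element of the $i$-th $\B$-copy has the form $\bar a_i^{\,p} a_i^{\,q}$ with $(p,q)\ne(0,0)$. Suppose $w\ne 1$ and consider $\kappa(w)\in\B*M$. I would argue by looking at the normal form of $\kappa(w)$ in $\B*M$: applying $\kappa$ replaces each $\B$-copy block $\bar a_i^{\,p}a_i^{\,q}$ by $(\bar a\,y^{i-1})^p(x^{i-1}a)^q=\bar a^{\,p}y^{(i-1)p}x^{(i-1)q}a^{\,q}$, which — after the only possible cancellation $y\,x$ in the $M$-factor when $p,q>0$ and $i>1$ — reduces to a block of the form $\bar a^{\,p} m\, a^{\,q}$ with $m\in M$ either trivial (when $i=1$, or when one of $p,q$ is $0$) or equal to a nonzero power of $x$ or of $y$. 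The crucial observation is that the $\B$-part of such a block, namely $\bar a^{\,p}a^{\,q}$, is nontrivial in $\B$ exactly when $(p,q)\ne(0,0)$, i.e. precisely when the original block was nontrivial. Hence the image of a nonempty reduced word is again a nonempty reduced word in $\B*M$ (the projection to the $\B$-factor, or the alternation structure, is preserved), so $\kappa(w)\ne 1$.

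The one subtlety to handle carefully — and I expect this to be the main obstacle — is the interaction at block boundaries: when two consecutive blocks of $w$ are a $\B_i$-block followed by an $M$-block followed by a $\B_j$-block, the ``$m\in M$'' residues produced by $\kappa$ get absorbed into the adjacent genuine $M$-block, and one must verify no unexpected cancellation collapses a whole block. This is controlled by the fact that the $\B$-letters $a,\bar a$ act as rigid separators in $\B*M$: within any maximal run between two $\B$-letters we are just multiplying in $M$, and between two $M$-blocks sits at least one $a$ or $\bar a$ coming from a nontrivial $\B_i$-block, so the alternation of the normal form of $\kappa(w)$ faithfully tracks that of $w$. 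Making this precise amounts to defining the relevant ``syllable length'' or the projection onto the $\B$-factor of $\B*M$ and checking it is non-decreasing and detects nontriviality; a clean way is to use the retraction $\B*M\to\B$ killing $M$, under which $\kappa(w)$ maps to the image of $w$ under the retraction $\B^{(n)}*M\to\B^{(n)}\to\B$ sending every $a_i\mapsto a$, $\bar a_i\mapsto\bar a$ — but since $\B^{(n)}\to\B$ is not injective at $1$ (e.g. $a_1\bar a_2\mapsto a\bar a\ne 1$, fine, but $\bar a_1 a_1\cdots$), one instead needs the sharper block-wise analysis above rather than a single retraction. Once $\kappa^{-1}(1)=\{1\}$ is established, $\B^{(n)}*M\hookrightarrow\B*M$ follows, and then $\VA{\B^{(n)}*M}\subseteq\VA{\B*M}$; the reverse inclusion is trivial since $\B=\B^{(1)}$ embeds in $\B^{(n)}$ as a free factor, giving $\VA{\B*M}=\VA{\B^{(n)}*M}$.
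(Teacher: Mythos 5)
Your construction has a genuine flaw, and it sits exactly where you flagged the ``main obstacle'': the block-boundary interaction does not work out. The morphism $\kappa$ with $\kappa(a_i)=x^{i-1}a$, $\kappa(\bar a_i)=\bar a\,y^{i-1}$ is \emph{not} injective at $1$. Take $n=2$ and the element
\[
w \;=\; a_2\,\bar a_1\,a_1\,\bar a_2 \;\in\; \B^{(2)}*M .
\]
This $w$ is a reduced alternating word (it is $a_2$ from $\B_2$, then the nontrivial $\bar a_1 a_1\in\B_1$, then $\bar a_2\in\B_2$), so $w\ne 1$. But
\[
\kappa(w) \;=\; (xa)\,(\bar a)\,(a)\,(\bar a\,y) \;=\; x\,(a\bar a)\,(a\bar a)\,y \;=\; xy \;=\; 1 ,
\]
using only the relation $xy=1$ that you needed anyway to make $\kappa$ well-defined. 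So $\kappa^{-1}(1)\ne\{1\}$ for \emph{every} monoid $M$ satisfying the hypothesis — there is no salvage by choosing $x,y$ more cleverly, because the collapse is caused by the same identity $xy=1$ that guarantees $\kappa$ respects $a_i\bar a_i=1$. The intuition you relied on (``the $\B$-letters $a,\bar a$ act as rigid separators'') is false: when the $M$-residue between two $\B$-letters reduces to $1$, the letters $\bar a$ and $a$ sit adjacent and can cancel against the $a,\bar a$ of the neighboring blocks, which is exactly what happens above.

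The paper's proof avoids this by a different encoding. It reduces to $\B_p*\B_q*M\hookrightarrow\B_r*M$ and maps $p\mapsto rr$, $\bar p\mapsto\bar r\bar r$, $q\mapsto rur$, $\bar q\mapsto\bar r v\bar r$ with $[uv]=1$, $[u]\ne 1$. Crucially, the nontrivial marker $u$ (respectively $v$) is placed strictly \emph{between} two $r$'s (respectively two $\bar r$'s), and the left/right shapes are the mirror-symmetric pairs $rr\,/\,\bar r\bar r$ and $rur\,/\,\bar r v\bar r$. This ensures that whenever an open symbol meets a mismatched close symbol, the result contains a factor $r u\bar r$ or $r v\bar r$ that can never be reduced to $1$ in $\B_r*M$, which is the invariant driving the paper's induction on $|w|$. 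Your one-sided decoration $x^{i-1}(\cdot)$ and $(\cdot)y^{i-1}$ has no such invariant, and the ``$xy=1$'' cancellation leaks across block boundaries.
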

\begin{proof} 
  Observe that if $\B^{(n)}*M\hookrightarrow \B*M$ and
  $\B*\B*M\hookrightarrow \B*M$, then
  \[ \B^{(n+1)}*M\cong \B*(\B^{(n)}*M)\hookrightarrow
  \B*(\B*M)\hookrightarrow \B*M. \] Therefore, it suffices to prove
  $\B*\B*M\hookrightarrow \B*M$.

  Let $\B_s=\langle s,\bar{s}\mid s\bar{s}=1\rangle$ for
  $s\in\{p,q,r\}$. We show $\B_p*\B_q*M\hookrightarrow \B_r*M$.
  Suppose $M$ is presented by $(X,R)$.  We regard the monoids
  $\B_p*\B_q*M$ and $\B_r*M$ as embedded into $\B_p*\B_q*\B_r*M$,
  which by definition of the free product, has a presentation $(Y,S)$,
  where $Y=\{p,\bar{p},q,\bar{q},r,\bar{r}\}\cup X$ and $S$ consists
  of $R$ and the equations $s\bar{s}=1$ for $s\in \{p,q,r\}$.  For
  $w\in Y^*$, we write $[w]$ for the class of $w$ in the congruence
  generated by $S$.  Since $\Rclass_1(M)\ne\{1\}$, we find $u,v\in
  X^*$ with $[uv]=1$ and $[u]\ne 1$. 

  Observe that then for any $f,g\in (\{r,\bar{r}\}\cup X)^*$,
  we have $[frv\bar{r}g]\ne 1$: By induction on the number of rewriting steps,
  one can show that every word in $[frv\bar{r}g]$ is of the form $f'rv'\bar{r}g'$
  for $f',g'\in(\{r,\bar{r}\}\cup X)^*$ and $v'\in X^*$ with $v'\equiv_R v$.
  By the same argument, we have $[fru\bar{r}g]\ne 1$ for any $f,g\in (\{r,\bar{r}\}\cup X)^*$.

  Let $\varphi\colon
  (\{p,\bar{p},q,\bar{q}\}\cup X)^*\to(\{r,\bar{r}\}\cup X)^*$ be the
  morphism with $\varphi(x)=x$ for $x\in X$ and
  \begin{align*} p&\mapsto rr, & \bar{p}&\mapsto \bar{r}\bar{r}, \\
    q&\mapsto rur, & \bar{q}&\mapsto \bar{r}v\bar{r}.
  \end{align*} We show by induction on $|w|$ that $[\varphi(w)]=1$
  implies $[w]=1$. Since this is trivial for $w=\eword$, we assume
  $|w|\ge 1$.  Now suppose $[\varphi(w)]=[\eword]$ for some
  $w\in(\{p,\bar{p},q,\bar{q}\}\cup X)^*$. If $w\in X^*$, then
  $[\varphi(w)]=[w]$ and hence $[w]=1$.  Otherwise, we have
  $\varphi(w)=xry\bar{r}z$ for some $y\in X^*$ with $[y]=1$ and
  $[xz]=1$. This means $w=fsy\overline{s'}g$ for $s,s'\in\{p,q\}$ with
  $\varphi(fs)=xr$ and $\varphi(\overline{s'}g)=\bar{r}z$. If $s\ne
  s'$, then $s=p$ and $s'=q$; or $s=q$ and $s'=p$.  In the former case
  \[
  [\varphi(w)]=[\varphi(f)~rr~y~\bar{r}v\bar{r}~\varphi(g)]=[\varphi(f)rv\bar{r}\varphi(g)]\ne
  1 \] by our observation above and in the latter
  \[
  [\varphi(w)]=[\varphi(f)~rur~y~\bar{r}\bar{r}~\varphi(g)]=[\varphi(f)ru\bar{r}\varphi(g)]\ne
  1, \] again by our observation. Hence $s=s'$. This means
  $[w]=[fsy\bar{s}g]=[fg]$ and also
  $1=[\varphi(w)]=[\varphi(fg)]$ and since $|fg|<|w|$,
  induction yields $[w]=[fg]=1$.

  Hence, we have shown that $[\varphi(w)]=1$ implies $[w]=1$. Since,
  on the other hand, $[u]=[v]$ implies $[\varphi(u)]=[\varphi(v)]$ for
  all $u,v\in (\{p,\bar{p},q,\bar{q}\}\cup X)^*$, we can lift
  $\varphi$ to a morphism witnessing $\B_p*\B_q*M\hookrightarrow
  \B_r*M$.\myqed
\end{proof}

As a partial converse to \cref{freeproduct}, we have the following.
It was first shown in~\cite{Zetzsche2015a}.  Since valence
automata over $\B*\B$ are essentially pushdown automata and since
$\Alg{\VA{\TrivialMonoid}}=\Alg{\Reg}=\CF$, the equality
$\VA{\B*\B*M}=\Alg{\VA{M}}$ generalizes the equivalence between
pushdown automata and context-free grammars. 
\begin{prop}[\cite{Zetzsche2015a}]\label{algbb}
For every monoid $M$, $\VA{\B*\B*M}=\Alg{\VA{M}}$. Moreover, if
$\Rclass_1(M)\ne\{1\}$, then $\VA{\B*M}=\Alg{\VA{M}}$.
\end{prop}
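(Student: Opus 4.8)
The plan is to prove the two inclusions of $\VA{\B*\B*M}=\Alg{\VA{M}}$ separately and then read off the ``Moreover'' clause from the first equality via \cref{bpowers}. For the inclusion $\VA{\B*\B*M}\subseteq\Alg{\VA{M}}$ I would simply write $\B*\B*M=(\B*\B)*M$ and apply \cref{freeproduct}, obtaining $\VA{\B*\B*M}\subseteq\Alg{\VA{\B*\B}\cup\VA{M}}$. By \eqref{context-free} we have $\VA{\B*\B}=\CF=\Alg{\Reg}$, and since a valence automaton may carry $1$ on all of its edges we have $\Reg=\VA{\TrivialMonoid}\subseteq\VA{M}$; hence $\VA{\B*\B}\subseteq\Alg{\VA{M}}$. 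Using the fact that $\C\subseteq\Alg{\D}$ implies $\Alg{\C\cup\D}=\Alg{\D}$, together with $\Alg{\Alg{\VA{M}}}=\Alg{\VA{M}}$, this yields $\Alg{\VA{\B*\B}\cup\VA{M}}=\Alg{\VA{M}}$, as desired.

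For the converse inclusion $\Alg{\VA{M}}\subseteq\VA{\B*\B*M}$ I would carry out a ``relativised'' version of the classical simulation of a context-free grammar by a pushdown automaton. Fix a $\VA{M}$-grammar $G=(N,T,P,S)$ and, for each production $(A,L_A)\in P$, a valence automaton $\cA_A$ over $M$ with $\Lang{\cA_A}=L_A$; after a routine normalisation each edge of each $\cA_A$ reads at most one letter of $N\cup T$. Let $K$ be one more than the (finite) number of pairs consisting of a production of $G$ and a state of the corresponding $\cA_A$, and build a valence automaton $\cA$ over $\B^{(K)}*M$ that performs a depth-first traversal of a derivation tree of $G$. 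The free generators of the $\B^{(K)}$-factor serve as stack markers recording which automaton $\cA_A$ is to be resumed and in which state, while the $M$-factor simulates the register of the $\cA_A$ currently being run: to expand a nonterminal $A$, the automaton $\cA$ guesses a production $(A,L_A)$ and simulates $\cA_A$ from its initial state; on an edge $(q,B,n,q')$ of $\cA_A$ that reads a nonterminal $B$, it multiplies $n$, pushes the marker for $(\cA_A,q')$, and recursively begins expanding $B$; terminal-reading and $\eword$-reading edges are simulated directly, the former consuming an input letter. When a simulated $\cA_A$ reaches a final state, $\cA$ pops the topmost marker and resumes the recorded parent computation; popping a distinguished bottom marker leads to acceptance. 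The crucial point is that $\B^{(K)}*M$ is the free product $\B_1*\cdots*\B_K*M$, so a pushed marker (a generator of one $\B_i$) can cancel neither against an $M$-letter nor against a marker of a different $\B_j$; hence a marker $\sigma$ separates the register letters contributed by its child $\cA_B$ from everything pushed later, and $\sigma$ can be cancelled---and the parent resumed, and ultimately the entire storage reduced to $\eword$---only after those register letters themselves reduce to $\eword$, i.e.\ only if $\cA_B$ has genuinely accepted (register back to $1$). A straightforward induction identifying accepting runs of $\cA$ with derivation trees of $G$ then gives $\Lang{\cA}=\Lang{G}$.

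It remains to observe that $\B^{(K)}*M=\B^{(K-1)}*(\B*M)\hookrightarrow\B*(\B*M)=\B*\B*M$ by \cref{bpowers}, which applies since $\Rclass_1(\B*M)\ne\{1\}$ (as $a\bar a=1$ in $\B$ with $a\ne 1$); thus $\Lang{G}=\Lang{\cA}\in\VA{\B^{(K)}*M}\subseteq\VA{\B*\B*M}$, which proves $\Alg{\VA{M}}\subseteq\VA{\B*\B*M}$ and hence the equality. For the ``Moreover'' clause, assume $\Rclass_1(M)\ne\{1\}$. Then \cref{bpowers} with $n=2$ gives $\B*\B*M=\B^{(2)}*M\hookrightarrow\B*M$, so $\VA{\B*\B*M}\subseteq\VA{\B*M}$; the reverse inclusion holds because $\B*M$ embeds into $\B*\B*M$ as a submonoid. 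Combined with the first equality this yields $\VA{\B*M}=\VA{\B*\B*M}=\Alg{\VA{M}}$.

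I expect the main obstacle to be the correctness proof of the depth-first construction in the second inclusion: one must check carefully that the marker-and-register discipline on $\B^{(K)}*M$ exactly mirrors the nesting of accepting computations of the sub-automata $\cA_A$, and that no unintended cancellation across the free factors can occur. Everything else is bookkeeping built on \cref{freeproduct}, \cref{bpowers}, and \eqref{context-free}.
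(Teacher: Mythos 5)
Your proof is correct and follows essentially the same route as the paper: \cref{freeproduct} plus $\VA{\B*\B}=\CF\subseteq\Alg{\VA{M}}$ for the easy inclusion, a valence automaton over $\B^{(K)}*M$ that pushes and pops $\B$-generators as return markers for the hard inclusion, and \cref{bpowers} both to collapse $\B^{(K)}$ down to $\B*\B$ and to deduce the ``Moreover'' clause. The one genuine difference is the paper's device for the correctness argument you flag as the main obstacle: it decorates the grammar with bracket symbols $\lfloor,\rfloor$ and has the automaton accept the set $K$ of bracketed sentential forms (recovering $\Lang{G}$ afterwards by intersection with $(T\cup\{\lfloor,\rfloor\})^*$ and projection), so that soundness becomes an induction on the number of brackets and the semi-Dyck structure of the output makes the ``no unintended cancellation across factors'' argument routine rather than an informal matching of runs with derivation trees.
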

\begin{proof}
  It suffices to prove the first statement: If $\Rclass_1(M)\ne\{1\}$,
  then \cref{bpowers} implies $\VA{\B*M}=\VA{\B*\B*M}$.  Observe that
  if $\C$ is a language class with $\C\subseteq\CF$, then $\Alg{\C\cup
    \VA{N}}=\Alg{\VA{N}}$ for every monoid $N$: For each production $A\to L$ in a
  $(\C\cup\VA{N})$-grammar $G$ with $L$ from $\C$, we can take a
  context-free grammar $G'$ generating $L$ (with fresh non-terminals) and replace the production
  $A\to L$ with the productions of $G'$. This is yields a $\VA{N}$-grammar because all
  singleton sets are contained in $\VA{N}$. Therefore, 
since the languages in $\VA{\B}$ are effectively context-free, \cref{freeproduct} yields
\[ \VA{\B*N}\subseteq\Alg{\VA{\B}\cup\VA{N}}=\Alg{\VA{N}} \]
for every monoid $N$. Therefore,
\[ \VA{\B*\B*M}\subseteq\Alg{\VA{\B*M}}\subseteq\Alg{\Alg{\VA{M}}}=\Alg{\VA{M}}. \]
It remains to be shown that $\Alg{\VA{M}}\subseteq\VA{\B*\B*M}$.

Suppose $G=(N,T,P,S)$ is a $\VA{M}$-grammar and let $X=N\cup T$. Since
$\VA{M}$ is closed under union, we may assume that for each $B\in N$, there is
precisely one production $B\to L_B$ in $P$.  For each nonterminal $B\in N$,
there is a valence automaton $\cA_B=(Q_B,X,M,E_B,q^B_0,F_B)$ over $M$ with
$\Lang{\cA_B}=L_B$.  We may clearly assume that $Q_B\cap Q_C=\emptyset$ for $B\ne
C$ and that for each $(p,w,m,q)\in E_B$, we have $|w|\le 1$.

\newcommand{\bleft}{\lfloor}
\newcommand{\bright}{\rfloor}
In order to simplify the correctness proof, we modify $G$.  Let $\bleft$ and
$\bright$ be new symbols and let $G'$ be the grammar
$G'=(N,T\cup\{\bleft,\bright\},P',S)$, where $P'$ consists of the productions
$B\to \bleft L\bright$ for $B\to L\in P$.
Moreover, let
\[ K=\{v\in (N\cup T\cup\{\bleft,\bright\})^* \mid u \grammarsteps[G'] v,~u\in L_S\}. \]
Then $\Lang{G}=\pi_T(K\cap (T\cup\{\bleft,\bright\})^*)$ and it suffices to
show $K\in\VA{\B*\B*M}$.

Let $Q=\bigcup_{B\in N} Q_B$. For each $q\in Q$, let $\B_q=\langle
q,\bar{q}\mid q\bar{q}=1\rangle$ be an isomorphic copy of $\B$.  Let
$M'=\B_{q_1}*\cdots*\B_{q_n}*M$, where $Q=\{q_1,\ldots,q_n\}$.  We shall prove
$K\in\VA{M'}$, which implies $K\in\VA{\B*\B*M}$ by \cref{bpowers} since $\Rclass_1(\B*M)\ne\{1\}$.

Let $E=\bigcup_{B\in N} E_B$, $F=\bigcup_{B\in N} F_B$.  The new set $E'$
consists of the following transitions:
\begin{align}
&(p,x,m,q)              && \text{for $(p,x,m,q)\in E$,} \label{basics:algbb:t:old} \\
&(p,\bleft,mq,q^B_0)    && \text{for $(p,B,m,q)\in E$, $B\in N$,} \label{basics:algbb:t:open} \\
&(p,\bright,\bar{q},q)  && \text{for $p\in F$, $q\in Q$.} \label{basics:algbb:t:close}
\end{align}
We claim that with $\cA'=(Q, N\cup T\cup\{\bleft,\bright\},M',E',q_0^S,F)$, we have
$\Lang{\cA'}=K$.

Let $v\in K$, where $u\grammarstepsn[G']{n} v$ for some $u\in L_S$. We show
$v\in\Lang{\cA'}$ by induction on $n$. For $n=0$, we have $v\in L_S$ and can use
transitions of type \labelcref{basics:algbb:t:old} inherited from $\cA_S$ to
accept $v$. If $n\ge 1$, let $u\grammarstepsn[G']{n-1} v'\grammarstep[G'] v$.
Then $v'\in \Lang{\cA'}$ and $v'=xBy$, $v=x\bleft w\bright y$ for some $B\in N$,
$w\in L_B$. The run for $v'$ uses a transition $(p,B,m,q)\in E$. Instead of
using this transition, we can use $(p,\bleft,mq,q_0^B)$, then execute the
\labelcref{basics:algbb:t:old}-type transitions for $w\in L_B$, and finally use
$(f,\bright,\bar{q},q)$, where $f$ is the final state in the run for $w$.  This
has the effect of reading $\bleft w\bright$ from the input and multiplying
$mq1\bar{q}=m$ to the storage monoid.  Hence, the new run is valid and accepts
$v$. Hence, $v\in \Lang{\cA'}$. This proves $K\subseteq\Lang{\cA'}$.

In order to show $\Lang{\cA'}\subseteq K$, consider the morphisms
$\varphi\colon (T\cup\{\bleft,\bright\})^*\to \B$, $\psi\colon M'\to\B$ with
$\varphi(x)=1$ for $x\in T$, $\varphi(\bleft)=a$, $\varphi(\bright)=\bar{a}$,
$\psi(q)=a$ for $q\in Q$, $\psi(\bar{q})=\bar{a}$, and $\psi(m)=1$ for $m\in
M$.  The transitions of $\cA'$ are constructed such that
$(p,\eword,1)\autsteps[\cA'] (q,w,m)$ implies $\varphi(w)=\psi(m)$. In
particular, if $v\in \Lang{\cA'}$, then $\pi_{\{\bleft,\bright\}}(v)$ is a semi-Dyck
word with respect to $\bleft$ and $\bright$.

Let $v\in\Lang{\cA'}$ and let $n=|w|_{\bleft}$. We show $v\in K$ by induction on
$n$.  If $n=0$, then the run for $v$ only used transitions of type
\labelcref{basics:algbb:t:old} and hence $v\in L_S$. If $n\ge 1$, since
$\pi_{\{\bleft,\bright\}}(v)$ is a semi-Dyck word, we can write $v=x\bleft
w\bright y$ for some $w\in (N\cup T)^*$. Since $\bleft$ and $\bright$ can only
be produced by transitions of the form \labelcref{basics:algbb:t:open} and
\labelcref{basics:algbb:t:close}, respectively, the run for $v$ has to be of
the form
\begin{align*}
 (q_0^S,\eword,1)&\autsteps[\cA'](p,x,r) \\
                     &\autstep[\cA'](q_0^B,x\bleft,rmq) \\
                     &\autsteps[\cA'](f,x\bleft w,rmqs) \\
                     &\autstep[\cA'](q',x\bleft w\bright,rmqs\overline{q'}) \\
                     &\autsteps[\cA'] (f',x\bleft w\bright y,rmqs\overline{q'}t)
\end{align*}
for some $p,q,q'\in Q$, $B\in N$, $(p,B,m,q)\in E$, $f,f'\in F$, $r,t\in M'$,
and $s\in M$ and with $rmqs\overline{q'}t=1$. This last condition implies $s=1$
and $q=q'$, which in turn entails $rmt=1$. This also means
$(p,B,m,q')=(p,B,m,q)\in E$ and $(q_0^B,\eword,1)\autsteps[\cA']
(f,w,s)=(f,w,1)$ and hence $w\in L_B$.  Using the transition $(p,B,m,q')\in E$, we
have
\begin{align*}
(q_0^S,\eword,1) &\autsteps[\cA'] (p,x,r) \\
                     &\autstep[\cA'] (q',xB,rm) \\
                     &\autsteps[\cA'] (f',xBy,rmt).
\end{align*}
Hence $xBy\in\Lang{\cA'}$ and $|xBy|_{\bleft}<|v|_{\bleft}$. Thus,
induction yields $xBy\in K$ and since $xBy\grammarstep[G'] x\bleft
w\bright y$, we have $v=x\bleft w\bright y\in K$.  This proves
$\Lang{\cA'}=K$.\myqed
\end{proof}

\newcommand{\triointersect}{\sqcap} 

For two language classes $\C$ and $\D$, we will consider the languages
obtained by intersecting a language from $\C$ with a language in
$\D$. Since the class of these intersections might not be
well-behaved, we use a slight extension.  By $\C\triointersect \D$, we
denote the class of all languages $h(K\cap L)$ where $K\subseteq X^*$
belongs to $\C$ and $L\subseteq X^*$ is a member of $\D$ and $h\colon
X^*\to Y^*$ is a morphism. This allows us to state the following
characterization of $\VA{M\times N}$ in terms of $\VA{M}$ and $\VA{N}$
by Kambites~\cite{Kambites2009}.
\begin{prop}\label{directproduct}
If $M,N$ are monoids, then $\VA{M\times N}=\VA{M}\triointersect \VA{N}$.
\end{prop}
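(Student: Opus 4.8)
The plan is to prove the two inclusions $\VA{M\times N}\subseteq\VA{M}\triointersect\VA{N}$ and $\VA{M}\triointersect\VA{N}\subseteq\VA{M\times N}$ separately, using direct constructions on valence automata.

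\medskip

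\textbf{The inclusion $\VA{M\times N}\subseteq\VA{M}\triointersect\VA{N}$.} Let $\cA=(Q,X,M\times N,E,q_0,F)$ be a valence automaton over $M\times N$. Each edge carries a pair $(m,n)\in M\times N$, and a run is accepting precisely when the product of all first components is $1_M$ \emph{and} the product of all second components is $1_N$. The idea is to run the same finite control twice, once tracking only the $M$-component and once tracking only the $N$-component, and then synchronize via intersection. Concretely, introduce a fresh alphabet $W$ in bijection with $E$; let $K\subseteq W^*$ be the language of the valence automaton over $M$ obtained from $\cA$ by relabelling each edge $e=(p,w,(m,n),q)$ to $(p,\tilde e,m,q)$ where $\tilde e\in W$ is the letter for $e$; similarly let $L\subseteq W^*$ be the language of the valence automaton over $N$ obtained by relabelling $e$ to $(p,\tilde e,n,q)$. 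Then $K\in\VA{M}$ and $L\in\VA{N}$, and $w\in K\cap L$ iff $w$ spells a single sequence of edges that is simultaneously a valid $M$-accepting and $N$-accepting run of $\cA$, i.e.\ a genuinely accepting run of $\cA$. Finally apply the morphism $h\colon W^*\to X^*$ that sends $\tilde e$ to the input word labelling $e$; this recovers $\Lang\cA=h(K\cap L)\in\VA{M}\triointersect\VA{N}$.

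\medskip

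\textbf{The inclusion $\VA{M}\triointersect\VA{N}\subseteq\VA{M\times N}$.} Take $K\in\VA{M}$ via $\cA_1=(Q_1,X,M,E_1,q_1,F_1)$, $L\in\VA{N}$ via $\cA_2=(Q_2,X,N,E_2,q_2,F_2)$, and a morphism $h\colon X^*\to Y^*$; we must show $h(K\cap L)\in\VA{M\times N}$. Since $\VA{M\times N}$ is a full trio (it is of the form $\VA{P}$), and full trios are closed under morphic images, it suffices to show $K\cap L\in\VA{M\times N}$. For this, perform the standard product construction: take state set $Q_1\times Q_2$, initial state $(q_1,q_2)$, final states $F_1\times F_2$, and for each pair of edges $(p,a,m,p')\in E_1$ and $(r,a,n,r')\in E_2$ reading the \emph{same} input letter $a\in X$ (handling $\eword$-edges of each automaton independently in the usual way) put an edge $((p,r),a,(m,n),(p',r'))$ over $M\times N$. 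A run of this product is accepting iff it is accepting in both coordinates, i.e.\ iff its first coordinate yields $1_M$ and its input lies in $K$, and its second coordinate yields $1_N$ and its input lies in $L$; hence the accepted language is exactly $K\cap L$.

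\medskip

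The two constructions are essentially dual, and neither step poses a real obstacle; the only point requiring a little care is the bookkeeping for $\eword$-labelled edges in the product construction (and in the relabelling construction, ensuring the synchronization is over edge-names rather than input letters so that the correspondence between words in $K\cap L$ and runs of $\cA$ is exact). The appeal to closure of $\VA{M\times N}$ under morphisms — justified by $\VA{M\times N}$ being a full trio, as noted in the excerpt — is what lets us absorb the morphism $h$ for free and thus reduce the second inclusion to plain intersection.
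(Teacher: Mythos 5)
Your proof is correct. Note, though, that the paper itself does not prove this proposition: it is stated as a citation to Kambites~\cite{Kambites2009}, so there is no in-text argument to compare against. Your two constructions are the natural ones and both go through. For the inclusion $\VA{M\times N}\subseteq\VA{M}\triointersect\VA{N}$, the key move of relabelling edges with fresh letters (rather than trying to synchronize on the original input alphabet) is exactly right, since it makes the correspondence between words of $K\cap L$ and runs of $\cA$ a bijection, and in particular handles $\eword$-labelled edges cleanly. For the reverse inclusion, the reduction to $K\cap L\in\VA{M\times N}$ via full-trio closure under morphisms is legitimate (the paper records that every $\VA{P}$ is a full trio), and the product construction over $Q_1\times Q_2$ with componentwise valence $(m,n)$ accepts $K\cap L$ as you describe, provided—as you flag—that $\eword$-edges of each factor are interleaved freely with the other factor held fixed. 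One small bookkeeping remark you might make explicit: you should first normalize $\cA_1$ and $\cA_2$ so that each edge reads a word of length at most one, since the definition allows edges labelled by arbitrary words in $X^*$; this is standard and harmless. With that said, the argument is complete.
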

This implies in particular that if $\VA{M_i}\subseteq\VA{N_i}$ for
$i\in\{0,1\}$, then we also have the inclusion $\VA{M_0\times
  M_1}\subseteq \VA{N_0\times N_1}$.  Of course, this also means that
if $\VA{M_i}=\VA{N_i}$ for $i\in\{0,1\}$, then $\VA{M_0\times M_1}=\VA{N_0\times N_1}$.
We are now ready to prove
\cref{pnpgraphs}.

\begin{proof}[\cref{pnpgraphs}]\label{proof-pnpgraphs}
  By definition, we have $\M\Gamma\cong \B\times (M_0*M_1)$, where
  $M_i\cong\B$ or $M_i\cong \Z$ for $i\in\{0,1\}$. We show that
  $\VA{M_0*M_1}=\VA{\B*\B}$ in any case. This suffices, since it
  clearly implies $\VA{\M\Gamma}=\VA{\B^{(2)}\times\B}$ according to
  \cref{directproduct}. If $M_0\cong M_1\cong\B$, the equality
  $\VA{M_0*M_1}=\VA{\B*\B}$ is trivial, so we may assume $M_0\cong\Z$.

  If $M_1\cong\Z$, then $M_0*M_1\cong\Z*\Z$, meaning that
  $\VA{M_0*M_1}$ is the class of context-free languages (see
  \cref{context-free}) and thus $\VA{M_0*M_1}=\VA{\B*\B}$.

  If $M_1\cong\B$, then $\VA{\Z*\B}=\Alg{\VA{\Z}}$ by
  \cref{algbb}. Since $\VA{\Z}$ is included in the context-free
  languages, we have $\Alg{\VA{\Z}}=\VA{\B*\B}$.\myqed
\end{proof}

We shall now prove \cref{emptiness:graphmonoids:equivalence}. Note
that the implication \DesImp{dec:cond:decidable}{dec:cond:cfourpfour}
immediately follows from \cref{re:pfourcfour}. The implication
\DesImp{dec:cond:cfourpfour}{dec:cond:transitive} is an old
graph-theoretic result of Wolk.
\begin{thm}[Wolk~\cite{Wolk1965}]\label{transitiveforest}
A simple graph is a transitive forest if and only if it does not
contain $\Cfour$ or $\Pfour$ as an induced subgraph.
\end{thm}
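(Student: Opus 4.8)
The plan is to prove the two implications of the equivalence separately, reducing both to the connected case: a simple graph is a transitive forest if and only if each of its connected components is the comparability graph of a rooted tree, and an induced $\Cfour$ or $\Pfour$, being connected, always sits inside a single component. Throughout I write $G$ for the graph, $G_t$ for the comparability graph of a rooted tree $t$, $N(v)$ for the open and $N[v]=N(v)\cup\{v\}$ for the closed neighbourhood of a vertex $v$.

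For ``transitive forest $\Rightarrow$ no induced $\Cfour$ or $\Pfour$'' I would work inside a single $G_t$ and use just one structural fact: the ancestors of any vertex form a chain in the tree order, so any two vertices lying strictly below a common vertex are comparable, hence adjacent in $G_t$. Suppose four vertices induced a $\Cfour$ or $\Pfour$, and pick among them a vertex $m$ that is minimal in the tree order. Every neighbour of $m$ within the induced subgraph is comparable to $m$, hence lies strictly above $m$. A one-line case split then forces a prescribed non-edge to become an edge; for instance in the $\Cfour$, naming $m=a$ and its non-neighbour $c$, the vertex $c$ is comparable to a neighbour $b$ of $a$, and whichever way that comparison goes one obtains $a$ and $c$ strictly below a common vertex, contradicting $a\not\sim c$. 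The $\Pfour$ case is analogous, with the additional easy observation that an endpoint of the path cannot be the minimal vertex. This direction is short.

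The substantial implication is ``no induced $\Cfour$ or $\Pfour$ $\Rightarrow$ transitive forest''. I would prove it by induction on $|V(G)|$, strengthening the statement to: every connected $\{\Cfour,\Pfour\}$-free graph is the comparability graph of a rooted tree. The disconnected case is immediate from the strengthened statement applied to each component, taking the disjoint union of the resulting rooted trees. For a connected $G$ with at least two vertices, the crux is the following claim, which I will call the Main Lemma: $G$ has a universal vertex $v$. Granting it, $G-v$ is again $\{\Cfour,\Pfour\}$-free, hence each of its components is some $G_{t_i}$ by induction; adding a new root, identified with $v$, whose children are the roots of the $t_i$, yields a rooted tree $t$ with $G_t=G$ --- the new root is adjacent to all other vertices, vertices in different $t_i$ remain non-adjacent, and within each $t_i$ nothing changes.

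The Main Lemma is the step I expect to be the main obstacle and would treat carefully. The plan: choose $v$ of maximum degree, set $A=N(v)$ and $B=V(G)\setminus N[v]$, and assume for contradiction that $B\neq\emptyset$. Connectedness together with $\Pfour$-freeness forces every vertex of $B$ to be at distance exactly $2$ from $v$, since a shortest $v$--$b$ path of length at least $3$ has its first four vertices inducing a $\Pfour$; in particular every $b\in B$ has a neighbour in $A$. Fix $b\in B$ and $a\in A\cap N(b)$. Two four-vertex configurations then finish the argument. If two vertices $a_1,a_2$ of $A\cap N(b)$ were non-adjacent, then $v,a_1,b,a_2$ would induce a $\Cfour$ (the $4$-cycle $v$--$a_1$--$b$--$a_2$--$v$), so $A\cap N(b)$ is a clique; and if $a$ were non-adjacent to some $a'\in A\setminus N(b)$, then $b,a,v,a'$ would induce the path $b$--$a$--$v$--$a'$, a $\Pfour$. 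Hence $a$ is adjacent to every other vertex of $A$, and also to $v$ and to $b$, so $\deg(a)\ge|A|+1>|A|=\deg(v)$, contradicting the maximality of $v$. This proves the Main Lemma and completes the proof. I anticipate that the only genuinely delicate points are nailing down the distance-exactly-$2$ claim and verifying that the two small configurations are \emph{induced}; everything else is bookkeeping.
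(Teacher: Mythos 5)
The paper cites Wolk~\cite{Wolk1965} for this theorem and does not supply a proof, so there is no argument in the text to compare yours against; I therefore judge the proposal on its own. Your hard direction is correct and is in fact the standard proof of the characterization of comparability graphs of forests (equivalently, trivially perfect / quasi-threshold graphs): reduce to the connected case, prove that a connected $\{\Cfour,\Pfour\}$-free graph has a universal vertex, delete it, and recurse. The Main Lemma argument checks out line by line: every vertex of $B$ is at distance exactly two from $v$, or a shortest $v$--$b$ path yields an induced $\Pfour$; $A\cap N(b)$ is a clique, or $v,a_1,b,a_2$ is an induced $\Cfour$; $a$ is adjacent to every $a'\in A\setminus N(b)$, or $b,a,v,a'$ is an induced $\Pfour$; hence $\deg(a)\ge |A|+1>\deg(v)$, contradicting maximality. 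The reassembly, attaching $v$ as a fresh root above the trees obtained inductively from the components of $G-v$, is also fine.

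The easy direction, however, is not correct as written. The principle you invoke --- ``any two vertices lying strictly below a common vertex are comparable'' --- is false in a rooted tree under either reading of ``below'': two children of the same vertex are both (strictly) below it and are incomparable. The chain fact you actually have is that the \emph{ancestors} of a fixed vertex are pairwise comparable. Your $\Cfour$ case then breaks precisely where you claim ``whichever way that comparison goes'': when $b$ is an ancestor of $c$, you only learn that $a$ and $c$ are both descendants of $b$, which does not force $a\sim c$. The repair is short and does not pass through $c$ at all: take $m=a$ with no descendants among the four; then its two $\Cfour$-neighbours $b$ and $d$ are both ancestors of $a$, hence lie on the ancestor chain of $a$, hence $b\sim d$, contradicting the non-edge $bd$. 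Likewise, ``an endpoint of the path cannot be the minimal vertex'' is not a free observation but a small argument: if $a$ is deepest in $\Pfour=a$--$b$--$c$--$d$ then $b$ is an ancestor of $a$, $c$ must be a proper descendant of $b$ (an ancestor would hit $a$), and $d$, being comparable to $c$, is forced to be comparable to $b$ (either as another ancestor of $c$ on the chain, or as a descendant of $b$ through $c$), contradicting $b\not\sim d$. With these two fixes your forward direction is sound; a slicker alternative is to note that in the comparability graph of a rooted tree every edge $xy$ has nested closed neighbourhoods, and that the middle edge of $\Pfour$ and any edge of $\Cfour$ violate this.
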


The implication \DesImp{dec:cond:transitive}{dec:cond:monclass} is a simple
combinatorial observation. An analogous fact is part of Lohrey and Steinberg's
proof of \cref{ratmp:graphgroups}.
\begin{lem}\label{emptiness:forest:monclass}
If $\Gamma$ is a \PNP-free transitive forest, then $\M\Gamma\in\MonDecidable$.
\end{lem}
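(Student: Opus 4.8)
The plan is to argue by induction on the number of vertices $|V|$ of $\Gamma$, using the translation between graph operations and monoid operations recalled in \cref{preliminaries}: a disjoint union of graphs yields a free product of graph monoids, a \emph{complete join} (adding all edges between two vertex sets) yields a direct product, a single unlooped vertex gives $\B$, and a single looped vertex gives $\Z$. The base case $|V|=0$ is immediate, since then $\M\Gamma=\TrivialMonoid=\B^0\in\MonDecidable$. Throughout I would use that both \PNP-freeness and the property of being a transitive forest (equivalently, by \cref{transitiveforest}, being $\{\Cfour,\Pfour\}$-free) are inherited by induced subgraphs, and that $\MonDecidable$ is closed under finite free products and under $M\mapsto M\times\Z$.

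First I would dispose of the case where $\Gamma^-$ is disconnected. Then $\Gamma$ is the disjoint union of the subgraphs induced by the connected components of $\Gamma^-$, each retaining its loops, so $\M\Gamma$ is the free product of the corresponding graph monoids. Each of these induced subgraphs is a \PNP-free transitive forest on strictly fewer vertices, hence lies in $\MonDecidable$ by induction, and closure of $\MonDecidable$ under free products finishes this case.

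Next, suppose $\Gamma^-$ is connected, so that it is the comparability graph of a single rooted tree $t$. The root of $t$ is an ancestor of every other node, hence yields a vertex $r$ that is adjacent, in $\Gamma^-$ and therefore in $\Gamma$, to every other vertex. If $r$ is looped, then $\Gamma$ is the complete join of $\Gamma\setminus\{r\}$ with the single looped vertex $r$, so $\M\Gamma\cong\M(\Gamma\setminus\{r\})\times\Z$; as $\Gamma\setminus\{r\}$ is again a \PNP-free transitive forest with one vertex fewer, the induction hypothesis together with closure of $\MonDecidable$ under $M\mapsto M\times\Z$ gives $\M\Gamma\in\MonDecidable$. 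If $r$ is unlooped, I would invoke \PNP-freeness: the neighbourhood of $r$ is all of $V\setminus\{r\}$, and \PNP-freeness forces any two vertices of this neighbourhood to be adjacent, so $\Gamma^-$ is a complete graph. Hence $\Gamma$ is a clique with loops on some subset of its vertices, and iterating the join-to-product translation gives $\M\Gamma\cong\Z^{a}\times\B^{b}$, where $a$ and $b$ count the looped and unlooped vertices respectively. Since $\B^{b}\in\MonDecidable$ and $\MonDecidable$ is closed under $M\mapsto M\times\Z$, applying this operation $a$ times shows $\M\Gamma\in\MonDecidable$.

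The one genuinely delicate point is the unlooped-root case: one cannot simply delete $r$ and invoke the induction hypothesis, because $\MonDecidable$ is closed under multiplication by $\Z$ but \emph{not} by $\B$. This is exactly where \PNP-freeness does the work, forcing the remainder of $\Gamma$ to be a clique-with-loops so that $\M\Gamma$ takes the explicit form $\Z^{a}\times\B^{b}$, which is visibly reachable from $\B^{b}$ by repeated application of $M\mapsto M\times\Z$. The remaining verifications — that disjoint unions and complete joins correspond to free and direct products, that a clique-with-loops monoid is $\Z^{a}\times\B^{b}$, and that induced subgraphs of \PNP-free transitive forests are again of this kind — are routine given the material already established in the excerpt.
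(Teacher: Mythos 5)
Your proof is correct and follows essentially the same route as the paper's: induction on $|V|$, splitting on whether $\Gamma$ (equivalently $\Gamma^-$) is connected, using the free-product decomposition in the disconnected case, and in the connected case picking a vertex adjacent to all others (your "root"), with $\M\Gamma\cong\M(\Gamma\setminus\{v\})\times\Z$ when it is looped and, when it is unlooped, using \PNP-freeness to conclude $\Gamma$ is a clique with loops and $\M\Gamma\cong\B^m\times\Z^n$. The only cosmetic difference is that you name the dominating vertex via the tree structure of the comparability graph, which is exactly what the paper's appeal to "transitive forest" is shorthand for.
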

\begin{proof}
Let $\Gamma=(V,E)$. We proceed by induction on $|V|$. 
If $\Gamma$ is empty, then
$\M\Gamma\cong\TrivialMonoid\cong\B^0\in\MonDecidable$.  Hence, we assume that
$\Gamma$ is non-empty.  If $\Gamma$ is not connected, then $\Gamma$ is the
disjoint union of \PNP-free transitive forests $\Gamma_1,\Gamma_2$, for which
$\M\Gamma_1,\M\Gamma_2\in\MonDecidable$ by induction. Hence,
$\M\Gamma\cong\M\Gamma_1*\M\Gamma_2\in\MonDecidable$. 

Suppose $\Gamma$ is connected. Since $\Gamma$ is a transitive forest, there is
a vertex $v\in V$ such that $\Gamma\setminus v$ is a \PNP-free transitive
forest and $v$ is adjacent to every vertex in $V\setminus\{v\}$. We distinguish
two cases.
\begin{proofcases}
\item If $v$ is a looped vertex, then
$\M\Gamma\cong\Z\times\M(\Gamma\setminus v)$, and
$\M(\Gamma\setminus v)\in\MonDecidable$ by induction.
\item If $v$ is an unlooped vertex, then $\Gamma$ being \PNP-free means that in
$\Gamma\setminus v$, any two distinct vertices are adjacent. Hence,
$\M\Gamma\cong \B^m\times\Z^n$, where $m$ and $n$ are the number of unlooped
and looped vertices in $\Gamma$, respectively. Therefore,
$\M\Gamma\in\MonDecidable$.\myqed
\end{proofcases}
\end{proof}

For establishing \cref{emptiness:graphmonoids:equivalence}, our
remaining task is to prove the implication
\DesImp{dec:cond:monclass}{dec:cond:decidable}.  In light of
\cref{re:pfourcfour,transitiveforest,emptiness:forest:monclass}, this
amounts to showing that emptiness is decidable for valence automata
over monoids in $\MonDecidable$. This will involve two facts
(\cref{algparikh} and \cref{productz}) about the languages arising
from monoids in $\MonDecidable$.

The following generalization of Parikh's theorem by van~Leeuwen will
allow us to exploit our description of free products by algebraic
extensions.  If $X$ is an alphabet, $X^\oplus$ denotes the set of maps
$\alpha\colon X\to\N$. The elements of $X^\oplus$ are called
\emph{multisets}.  The \emph{Parikh map} is the map $\ParikhMap\colon
X^*\to X^\oplus$ where $\Parikh{w}(x)$ is the number of occurrences of
$x$ in $w$.  By $\Powerset{S}$, we denote the power set of the set
$S$. A \emph{substitution} is a map $\sigma\colon X\to \Powerset{Y^*}$,
where $X$ and $Y$ are alphabets.
Given $L\subseteq X^*$, we write $\sigma(L)$ for the set of all
words $v_1\cdots v_n$, where $v_i\in \sigma(x_i)$, $1\le i\le n$, for
$x_1\cdots x_n\in L$ and $x_1,\ldots,x_n\in X$. If $\sigma(x)$ belongs
to $\C$ for each $x\in X$, then $\sigma$ is a
\emph{$\C$-substitution}. The class $\C$ is said to be
\emph{substitution closed} if $\sigma(L)\in\C$ for every member $L$ of
$\C$ and every $\C$-substitution $\sigma$. 
\begin{thm}[van~Leeuwen~\cite{vanLeeuwen1974}]\label{algparikh}
For each substitution closed full trio $\C$, we have
$\Parikh{\Alg{\C}}=\Parikh{\C}$.
\end{thm}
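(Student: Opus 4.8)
The plan is to prove the inclusion $\Parikh{\Alg{\C}}\subseteq\Parikh{\C}$; the reverse inclusion is immediate from $\C\subseteq\Alg{\C}$. Before starting I would record two easy consequences of the hypothesis that $\C$ is a substitution-closed full trio. First, $\C$ contains all regular languages and is closed under union, concatenation and Kleene star, since $L_1\cup L_2$, $L_1L_2$ and $L_1^*$ can each be written as $\sigma(R)$ for a fixed finite or regular language $R$ (e.g.\ $L_1^*=\sigma(a^*)$ with $\sigma(a)=L_1$); thus $\C$ is in fact a substitution-closed full AFL. Second, the Parikh image $\Parikh{\sigma(L)}$ depends only on $\Parikh{L}$ and on the images $\Parikh{\sigma(x)}$, so $\ParikhMap$-equivalence of languages is preserved by substitution — and, iterating, by the least-fixpoint construction that underlies $\Alg{\C}$. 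Finally, given a $\C$-grammar $G=(N,T,P,S)$, I would normalise it (using union-closure of $\C$) to have exactly one production $A\to L_A$ per nonterminal with $L_A\in\C$, and discard useless nonterminals.

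The core is an induction on $|N|$, in the style of the Gaussian-elimination proof of Parikh's theorem. \emph{Base case $|N|=1$:} here $L_S\subseteq(\{S\}\cup T)^*$, and splitting $L_S$ by the number of occurrences of $S$ yields $\C$-languages $L_S^{(0)}$, $L_S^{(1)}$, $L_S^{(\ge 1)}$ (intersections with regular languages counting $S$'s). Analysing derivation trees of $G$ as in the classical argument, a terminal word is obtained by fixing one ``closing'' word from $L_S^{(0)}$ at the bottom of one branch and then freely inserting ``bundles'', a bundle being one production $x\in L_S^{(k)}$ together with $k-1$ closing words from $L_S^{(0)}$ (the number $k-1$ is forced by a balance count on the occurrences of $S$, since each closing word removes one $S$ and $x$ creates $k-1$ net). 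This gives $\Parikh{\Lang{G}}=\Parikh{L_S^{(0)}\cdot H^*}$, where $H$ is the language of all bundles, and the point is that $H$ lies in $\C$: take $L_S^{(\ge 1)}$, apply the $\C$-substitution sending $S$ to $\{\$\}\cup L_S^{(0)}$ for a fresh marker $\$$, intersect with the regular language ``exactly one $\$$'', and delete $\$$ by a morphism. Hence $L_S^{(0)}\cdot H^*\in\C$, settling the base case.

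\emph{Inductive step $|N|\ge 2$:} pick a nonterminal $A\neq S$ and eliminate it by solving its own equation first. Regarding $N\setminus\{A\}$ as additional terminals, the single production $A\to L_A$ is a one-variable $\C$-grammar over the alphabet $T\cup(N\setminus\{A\})$, so by the base case its language is $\ParikhMap$-equivalent to some $\hat L_A\in\C$. Substituting $\hat L_A$ for $A$ in every remaining production $B\to L_B$ keeps the right-hand sides in $\C$ by substitution-closure and yields a $\C$-grammar $G'$ with nonterminal set $N\setminus\{A\}$. Using the standard fact that a variable can be eliminated by replacing it with the least solution of its own equation, together with the observation from the first paragraph that the least solution of a system of substitution equations depends, up to Parikh image, only on the Parikh images of its coefficient languages, one gets $\Parikh{\Lang{G'}}=\Parikh{\Lang{G}}$. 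The induction hypothesis applied to $G'$ then gives $\Parikh{\Lang{G}}\in\Parikh{\C}$.

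I expect the main obstacle to be the base case, and within it the ``bundle'' bookkeeping: a single production may introduce an unbounded number of copies of $S$, so the decomposition into closing words and branching bundles must be arranged so that the set of Parikh images is exactly $\Parikh{L_S^{(0)}\cdot H^*}$ (both inclusions, via the $S$-balance count) while keeping $H$ inside $\C$ — this is precisely what the ``$\$$-marker plus exactly-one-$\$$'' device accomplishes, and it is the analogue of the delicate step in the usual proof of Parikh's theorem. A secondary point needing care is the verification, used in the inductive step, that the least solution of a system of substitution equations is continuous, up to Parikh image, in its coefficient languages; this is handled by an induction on the fixpoint iterates, using that $\Parikh{\sigma(L)}$ depends only on $\Parikh{L}$ and the $\Parikh{\sigma(x)}$.
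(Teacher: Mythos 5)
The paper does not actually contain a proof of this theorem: it is a cited result of van~Leeuwen, so there is no ``paper's own proof'' to compare against. Your argument is essentially correct and follows the algebraic Gaussian-elimination route to Parikh-type theorems (in the spirit of Pilling's commutative-semiring proof, which is likely close to van~Leeuwen's original treatment). The two load-bearing claims are both sound: \emph{(a)} for a one-nonterminal $\C$-grammar the Parikh image is exactly $\Parikh{L_S^{(0)}\cdot H^*}$, because a multiset of right-hand sides with $k_i$ occurrences of $S$ each is realizable as a derivation tree if and only if $\sum_i(k_i-1)=-1$ (schedule all $k\ge 1$ productions first, then the $k=0$ ones), and such a multiset can always be split into one closing word plus bundles since $n_0-1=\sum_{k\ge 1}(k-1)n_k$; and \emph{(b)} the least solution of the grammar system depends, up to Parikh image, only on the Parikh images of the coefficient languages, which follows by a straightforward induction on the fixpoint iterates $\vec X_{n+1}=\vec L[\vec X_n]$ using that $\Parikh{\sigma(L)}$ is determined by $\Parikh{L}$ and the $\Parikh{\sigma(x)}$, plus the Beki\'{c}-style elimination of a single variable.

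Two small points you should make explicit if you write this out in full. First, the realizability direction of the base case deserves a sentence: it is \emph{not} merely a counting coincidence that any multiset with $S$-balance $-1$ yields a tree; you need the greedy scheduling argument (or an equivalent Cycle-Lemma observation) to actually build one. Second, in the inductive step you should state the Beki\'{c} identity you are using: if $\ell$ denotes the least solution of $A$'s equation over the alphabet $T\cup(N\setminus\{A\})$, then the $N\setminus\{A\}$-component of the global least fixpoint coincides with the least fixpoint of the system obtained by substituting $\ell$ for $A$ in all remaining right-hand sides; only then does your continuity observation allow you to replace $\ell$ by the Parikh-equivalent $\hat L_A\in\C$. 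These are both standard, but the proof plan currently gestures at them rather than stating them, and they are exactly where a careless version of this argument would go wrong.
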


For $\alpha,\beta\in X^\oplus$, let $\alpha+\beta\in X^\oplus$ be defined by
$(\alpha+\beta)(x)=\alpha(x)+\beta(x)$. With this operation, $X^\oplus$ is a
monoid. For a subset $S\subseteq X^\oplus$, we write $S^\oplus$ for the
smallest submonoid of $X^\oplus$ containing $S$. A subset of the form
$\mu+F^\oplus$ for $\mu\in X^\oplus$ and a finite $F\subseteq X^\oplus$ is
called \emph{linear}. A finite union of linear sets is called
\emph{semilinear}. By $\HomSLI{\C}$ we denote the class of languages
$h(L\cap\ParikhInv{S})$, where $h\colon X^*\to Y^*$ is a morphism, $L$ belongs
to $\C$, and $S\subseteq X^\oplus$ is semilinear. 
\begin{prop}[\cite{Zetzsche2015a}]\label{productz}
For each monoid $M$, we have 
\[ \HomSLI{\VA{M}}=\bigcup_{i\ge 0}\VA{M\times\Z^i}. \]
\end{prop}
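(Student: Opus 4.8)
The plan is to prove the two inclusions separately, and in both directions the heart of the matter is a simple observation: a semilinear set $S\subseteq X^\oplus$ is exactly the Parikh image of a language accepted by a valence automaton over a product of copies of $\Z$, used as "blind counters" that test, at the end, that each coordinate has reached a prescribed value. More precisely, for a linear set $\mu+\{f_1,\dots,f_k\}^\oplus$ one can build a valence automaton over $\Z^{|X|}$ (or even over $\Z$ via a single counter encoding, but the product form is what we want here) whose runs nondeterministically add the vectors $f_1,\dots,f_k$ any number of times, plus the offset $\mu$ once, and which accepts iff the counters return to zero; intersecting the Parikh constraint then amounts to running this gadget "in parallel" with the given automaton. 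This reduces the statement to manipulating products of monoids, for which \cref{directproduct} (i.e.\ $\VA{M\times N}=\VA{M}\triointersect\VA{N}$) is the right tool.

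For the inclusion $\bigcup_{i\ge 0}\VA{M\times\Z^i}\subseteq\HomSLI{\VA{M}}$, I would start from a valence automaton $\cA$ over $M\times\Z^i$. By \cref{directproduct}, $\Lang{\cA}=h(K\cap L)$ for some $K\in\VA{M}$, $L\in\VA{\Z^i}$, and a morphism $h$. So it suffices to show $\VA{\Z^i}$-languages have semilinear Parikh images and, more to the point, that $\VA{\Z^i}=\ParikhInv{S}$-type languages behave well under intersection: concretely, every $L\in\VA{\Z^i}$ satisfies $\Parikh{L}$ is semilinear (since $\Z^i$ is a commutative group, $\VA{\Z^i}$-languages are exactly the languages whose Parikh image is a semilinear subset of $X^\oplus$ cut out by the $\Z$-linear conditions — this is the classical blind-multicounter fact, and can be derived from Parikh's theorem since the acceptance condition is a system of linear equations over the letter counts). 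Once $L$ has semilinear Parikh image, $K\cap L$ has the same Parikh image as $K\cap\ParikhInv{\Parikh{L}}$ intersected appropriately; more carefully, one replaces $L$ by $\ParikhInv{\Parikh{L}}$ only after noting that membership in $K\cap L$ for a word $w$ depends on $w$ through $K$ and through $\Parikh{w}\in\Parikh{L}$, because the $\Z^i$-automaton for $L$ only inspects letter counts. Hence $h(K\cap L)=h(K\cap\ParikhInv{S})$ with $S=\Parikh{L}$ semilinear, giving a language in $\HomSLI{\VA{M}}$.

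For the reverse inclusion $\HomSLI{\VA{M}}\subseteq\bigcup_{i\ge 0}\VA{M\times\Z^i}$, take $h(L\cap\ParikhInv{S})$ with $L\in\VA{M}$ and $S\subseteq X^\oplus$ semilinear, say $S=\bigcup_{j=1}^r(\mu_j+F_j^\oplus)$. Since $\VA{M\times\Z^i}$ is a full trio and closed under finite unions (both follow from general valence-automata facts, the union by taking a disjoint union of state sets), it suffices to handle one linear set $\mu+F^\oplus$ with $F=\{f_1,\dots,f_k\}$, and to produce the intersection with $L$ rather than with a plain regular language. Here I take $i=|X|$ and build a valence automaton over $M\times\Z^{|X|}$ that simulates the $M$-automaton for $L$, but on each letter $x\in X$ read it also subtracts the unit vector $e_x$ from the $\Z^{|X|}$-component, while additionally allowing $\varepsilon$-transitions that add $f_1,\dots,f_k$ (each any number of times) and one mandatory initial block that adds $\mu$; the automaton accepts iff the $M$-component is neutral (as in the simulation of $L$) and all $|X|$ blind counters are zero. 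A run is accepting iff the simulated word $w$ lies in $L$ and $\Parikh{w}=\mu+\sum_j n_j f_j$ for some $n_j\ge 0$, i.e.\ $\Parikh{w}\in\mu+F^\oplus$; applying $h$ (a full-trio operation) then yields exactly $h(L\cap\ParikhInv{\mu+F^\oplus})$.

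The main obstacle I anticipate is the first inclusion, specifically justifying that a $\VA{\Z^i}$-language may be replaced by the $\ParikhInv{}$ of its (semilinear) Parikh image without changing the result of intersecting with $K$ and applying $h$: one must argue that the $\Z^i$-automaton genuinely only constrains the multiset of letters and not their order, which is true because a blind counter over $\Z^i$ has an abelian storage and hence its acceptance condition on a word depends only on the word's Parikh image — but this needs to be stated carefully, perhaps by first rewriting $L\in\VA{\Z^i}$ as $\pi_X^{-1}(\pi_X(L))$ or by passing through the explicit description of $\VA{\Z^i}$ as the class of languages of the form $\{w : \Parikh{w}\in S'\}$ for semilinear $S'$, which itself requires Parikh's theorem applied to the underlying finite automaton together with the linear Diophantine acceptance constraint. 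Once that bookkeeping is in place, everything else is routine closure-property juggling plus \cref{directproduct}.
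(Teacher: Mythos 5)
Your inclusion $\HomSLI{\VA{M}}\subseteq\bigcup_{i\ge 0}\VA{M\times\Z^i}$ is fine: the gadget that subtracts unit vectors on input letters and adds period vectors $f_j$ on $\eword$-moves, plus closure of $\VA{M\times\Z^i}$ under union and the full-trio operations, does the job.

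The other inclusion has a genuine gap, and it is exactly the point you flagged as the ``main obstacle.'' After applying \cref{directproduct} to write $\Lang{\cA}=h(K\cap L)$ with $K\in\VA{M}$ and $L\in\VA{\Z^i}$, you want to replace $L$ by $\ParikhInv{\Parikh{L}}$, claiming that ``the acceptance condition of a $\Z^i$-automaton on a word depends only on the word's Parikh image.'' This is false: a valence automaton over $\Z^i$ still has an arbitrary finite-state control, so it can impose regular constraints on the \emph{order} of letters on top of the abelian counter constraint. Concretely, $L=\{a^nb^n\mid n\ge 0\}$ is in $\VA{\Z}$, but $\ParikhInv{\Parikh{L}}=\{w\in\{a,b\}^*\mid |w|_a=|w|_b\}\supsetneq L$; taking $K=\{a,b\}^*$ and $h=\mathrm{id}$ shows $h(K\cap L)\ne h(K\cap\ParikhInv{\Parikh{L}})$. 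The proposed fallback characterization ``$\VA{\Z^i}$ is the class of languages $\{w:\Parikh{w}\in S'\}$ for semilinear $S'$'' is wrong for the same reason.

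The fix is to apply commutativity to the \emph{run}, not to the input word. Starting from $\cA=(Q,X,M\times\Z^i,E,q_0,F)$, introduce a fresh marker symbol $\tilde e$ for each edge $e$, form $\cA'$ over $M$ with alphabet $X'=X\cup\{\tilde e\mid e\in E\}$ whose edge $(p,w\tilde e,m,q)$ replaces $(p,w,(m,z),q)$, and let $h\colon X'^*\to X^*$ erase the markers. Because $\Z^i$ is commutative, the $\Z^i$-acceptance condition of the original run depends only on the \emph{multiset} of edges used; hence $\Lang{\cA}=h\bigl(\Lang{\cA'}\cap\ParikhInv{S}\bigr)$ where $S=\{\alpha\in(X')^\oplus\mid \sum_{e\in E}\alpha(\tilde e)\,z_e=0\ \text{in}\ \Z^i\}$, which is semilinear. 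This puts $\Lang{\cA}$ in $\HomSLI{\VA{M}}$ without the unjustified replacement, and you then don't need Kambites' decomposition for this direction at all.
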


We will prove decidability for $\MonDecidable$ by
reducing the problem to the reachability problem of priority
multicounter machines, whose decidability has been established by
Reinhardt~\cite{Reinhardt2008}. Priority multicounter
machines are an extension of Petri nets with one inhibitor arc.
Intuitively, a priority multicounter machine is a partially blind
multicounter machine with the additional capability of restricted
zero tests: The counters are numbered from $1$ to $k$ and for each
$\ell\in\{1,\ldots,k\}$, there is a zero test instruction that checks
whether counters $1$ through $\ell$ are zero. Let us define priority
multicounter machines formally.

\label{def:prio}
A \emph{priority $k$-counter machine} is a tuple $\cA=(Q, X, E, q_0, F)$, where
\begin{enumerate*}[label=(\roman*)]
\item $X$ is an alphabet,
\item $Q$ is a finite set of states,
\item $E$ is a finite subset of $Q\times X^*\times \{0,\ldots,k\}\times \Z^k\times Q$,
and its elements are called \emph{edges} or \emph{transitions},
\item $q_0\in Q$ is the \emph{initial state}, and
\item $F\subseteq Q$ is the set of \emph{final states}.
\end{enumerate*}
For $\ell\in\{0,\ldots,k\}$, let 
\[\N^k_\ell=\{(\mu_1,\ldots,\mu_k)\in\N^k \mid \mu_1=\ldots=\mu_\ell=0 \}. \]
We are now ready to defines the semantics of priority counter
machines. A \emph{configuration} of $\cA$ is a pair
$(q,\mu)\in Q\times\N^k$. For configurations $(q, \mu)$ and $(q',
\mu')$, we write $(q,\mu)\lautsteps[\cA]{w} (q',\mu')$ if there are
$(q_0,\mu_0),\ldots,(q_n,\mu_n)\in Q\times\N^k$ such that
\begin{enumerate}[label=(\roman*)]
\item $(q,\mu)=(q_0,\mu_0)$ and $(q',\mu')=(q_n,\mu_n)$,
\item for each $i\in\{1,\ldots,n\}$, there is a transition
  $(q_{i-1},w_i,\ell,\nu,q_i)\in E$ such that $\mu_{i-1}\in \N^k_\ell$
  and $\mu_i=\mu_{i-1}+\nu$, and $w=w_1\cdots w_n$.
\end{enumerate}
The \emph{language accepted by $\cA$} is defined as
\[ \Lang{\cA} = \{w\in X^* \mid (q_0, 0) \lautsteps[\cA]{w} (f, 0)~~\text{for some $f\in F$}\}. \]
A \emph{priority multicounter machine} is a priority $k$-counter machine for
some $k\in\N$.  The class of languages accepted by priority multicounter
machines is denoted by $\Prio$.
Reinhardt has shown that the reachability problem for
priority multicounter machines is decidable~\cite{Reinhardt2008}, which
can be reformulated as follows.
\begin{thm}[Reinhardt~\cite{Reinhardt2008}]\label{prio:decidable}
The emptiness problem is decidable for priority multicounter machines.
\end{thm}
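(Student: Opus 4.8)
The plan is to obtain \cref{prio:decidable} directly from Reinhardt's decidability result for the \emph{configuration reachability} problem of priority multicounter machines~\cite{Reinhardt2008}, by an elementary reduction; all the mathematical weight sits in that cited result rather than in the reformulation. First I would note that for emptiness the input alphabet plays no role: given a priority $k$-counter machine $\cA=(Q,X,E,q_0,F)$, replace every transition $(p,w,\ell,\nu,q)$ by $(p,\eword,\ell,\nu,q)$. This does not change whether $\Lang{\cA}$ is empty, since $\Lang{\cA}\ne\emptyset$ holds iff there is \emph{some} run from $(q_0,\mathbf{0})$ to a configuration $(f,\mathbf{0})$ with $f\in F$, which is a property of the reachability relation on configurations alone. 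As $F$ is finite, $\Lang{\cA}\ne\emptyset$ is therefore a finite disjunction of instances ``is $(f,\mathbf{0})$ reachable from $(q_0,\mathbf{0})$?'', each of which is decidable by~\cite{Reinhardt2008}; this establishes \cref{prio:decidable}. Conversely, reachability of an arbitrary $(q',\mu')$ from $(q_0,\mu)$ reduces to emptiness: prepend a level-$0$ transition with counter update $\mu$ from a fresh initial state, append from $q'$ a level-$0$ transition with update $-\mu'$ into a fresh single final state, and observe that the resulting machine accepts iff $(q',\mu')$ was reachable. Hence the two problems are interreducible, and \cref{prio:decidable} is merely the language-theoretic face of Reinhardt's theorem.

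For completeness I would recall in a sentence where the difficulty actually lies, since~\cite{Reinhardt2008} is the only nontrivial input. A priority $k$-counter machine is a Petri net equipped with a hierarchy of $k$ inhibitor arcs, the level-$\ell$ zero test inhibiting exactly counters $1,\dots,\ell$; Reinhardt proves reachability decidable by induction on $k$, the cases $k\le 1$ resting on the Mayr--Kosaraju--Lambert decidability of Petri net reachability, and the inductive step showing that the effect of level-$\le\ell$ zero tests can be captured by finitely many reachability queries in machines with strictly fewer test levels.

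Accordingly, the only steps in the present proof that need any care are checking that erasing input labels genuinely preserves (non)emptiness and that the load/unload transitions used in the converse reduction are legal priority-machine transitions (counter updates in $\Z^k$, zero-test level $0$); both are routine. The genuine obstacle, which this argument sidesteps by invoking~\cite{Reinhardt2008} as a black box, would surface only if one attempted to reprove \cref{prio:decidable} from scratch: it is precisely Reinhardt's inductive step — simulating a bounded hierarchy of zero tests by plain Petri net reachability — that is the technically demanding part.
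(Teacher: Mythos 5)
Your proposal is correct and takes essentially the same approach as the paper: the paper itself gives no proof but simply cites Reinhardt's decidability of reachability for priority multicounter machines and states that emptiness "can be reformulated" from it, which is exactly the routine reduction you spell out. Your explicit argument (erase input labels, observe that nonemptiness is a finite disjunction of configuration-reachability queries over the final states, and note the converse reduction via load/unload transitions) fills in precisely the elementary step the paper leaves implicit, with the mathematical content residing entirely in the cited result.
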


Although the decidability proof for the emptiness problem for valence
automata over $\MonDecidableSimple$ employs a reduction to priority
multicounter machines, it should be stressed that the mechanisms
realized by $\MonDecidableSimple$ are quite different from priority
counters and very likely not subsumed by them in terms of accepted
languages. For example, $\MonDecidableSimple$ contains pushdown stacks
($\B*\B$)---if the priority multicounter machines could accept all
context-free languages (or even just the semi-Dyck language $D_2^*$),
this would easily imply decidability of the emptiness problem for
pushdown Petri nets. Indeed, $\MonDecidableSimple$ can even realize
stacks where each entry consists of $n$ partially blind counters
(since $\B*(\B^n)\in\MonDecidableSimple$). On the other hand, priority
multicounter machines do not seem to be subsumed by
$\MonDecidableSimple$ either: After building stacks once,
$\MonDecidableSimple$ only allows adding \emph{blind} counters (and
building stacks again). It therefore seems unlikely that a mechanism
in $\MonDecidableSimple$ can accept the languages even of a priority
$2$-counter machine.

The idea of the proof of \DesImp{dec:cond:monclass}{dec:cond:decidable}
is, given a valence automaton over some $M\in\MonDecidable$, to
construct a Parikh-equivalent priority multicounter machine. This
construction makes use of the following simple fact. A full trio $\C$ is
said to be \emph{Presburger closed} if $\HomSLI{\C}\subseteq\C$.
\begin{lem}\label{prio:closure}
$\Prio$ is a Presburger closed full trio and closed under substitutions.
\end{lem}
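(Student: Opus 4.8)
The plan is to verify the three closure properties in turn, treating $\Prio$ as the class of languages accepted by priority $k$-counter machines for varying $k$.

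First I would check that $\Prio$ is a full trio, i.e.\ closed under rational transductions. The standard way is to show closure under the three generating operations: inverse morphisms, morphisms, and intersection with regular languages. Intersection with a regular language is the usual product construction with a finite automaton, multiplying the state spaces and leaving the counters untouched; the zero-test levels $\ell$ in the transitions are unaffected. For (inverse) morphisms, one replaces each input letter along an edge by its (pre)image in the standard way, introducing intermediate states to spell out longer words and $\eword$-transitions as needed; none of this interferes with the counters. Alternatively, since $\Prio$ is clearly closed under intersection with regular sets and under morphisms, and contains all regular languages, one can invoke the characterization that a full trio is exactly a class closed under these operations together with inverse morphism. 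This step is routine.

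Next, Presburger closure: given $L\subseteq X^*$ accepted by a priority $k$-counter machine $\cA$, a morphism $h\colon X^*\to Y^*$, and a semilinear set $S\subseteq X^\oplus$, I must show $h(L\cap\ParikhInv{S})\in\Prio$. The idea is to add new counters that track the Parikh image of the input and then, at the end, check membership in $S$ using the priority zero tests. Concretely, write $S$ as a finite union of linear sets $\mu_j+F_j^\oplus$; by taking a disjoint union of machines (which $\Prio$ supports, being closed under union via a fresh initial state) it suffices to handle a single linear set $\mu+\{f_1,\dots,f_r\}^\oplus$. Introduce one counter per letter of $X$, incremented whenever that letter is read, plus $r$ auxiliary counters. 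In a first phase the machine simulates $\cA$ while also incrementing the letter-counters; in a second phase it nondeterministically subtracts $\mu$ and then repeatedly subtracts vectors $f_i$ (recording each choice in the $i$-th auxiliary counter, though that bookkeeping is not even needed), and finally must verify that all letter-counters are zero. Since a priority machine can zero-test \emph{all} of its counters (the level-$k$ test), we arrange the counter numbering so that the letter-counters occupy the lowest positions and are all tested to zero at the transition into the accepting phase; the original counters of $\cA$, which must also end at zero by definition of acceptance, are placed above them. Finally, applying $h$ to the output is just relabelling input letters, which is a morphism and hence already covered by the full-trio property; so $h(L\cap\ParikhInv S)\in\Prio$.

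Closure under substitutions follows from a similar but simpler construction: given $L\in\Prio$ over alphabet $X$ and a $\Prio$-substitution $\sigma$ with each $\sigma(x)$ accepted by some priority machine $\cA_x$, build a machine that simulates the machine for $L$ but, each time it would read a letter $x$, instead runs a fresh copy of $\cA_x$ to completion (entering its initial state with its counters at zero, and only proceeding once it reaches a final state with those counters back at zero). The delicate point is counter management: the counters of the $\cA_x$ copies must be layered \emph{below} those of the outer machine in the priority order, so that the outer zero tests are unaffected and, when an inner run finishes, its counters are genuinely zero (which can be enforced since an inner machine can zero-test all of its own counters, sitting at the bottom of the hierarchy). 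Because only one inner machine is active at a time, a single block of counters of size $\max_x k_x$ suffices, reused across invocations.

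The main obstacle I expect is bookkeeping the \emph{priority ordering of counters} consistently across these constructions: every time one nests one machine inside another (substitution) or appends a verification phase (Presburger closure), the newly introduced counters must be inserted at the correct level — at the bottom for substituted sub-machines, and low enough to be caught by a final all-counters zero test for the Parikh-tracking counters — so that the intended zero tests remain sound and no spurious constraints are imposed. Once the layering discipline is fixed, each individual simulation is a routine product/sequential construction.
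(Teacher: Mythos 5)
Your plan for the full trio and substitution parts matches the paper and is correct, but the Presburger-closure step has the priority ordering of counters backwards, and this is exactly the ``layering discipline'' you flag as the delicate point. You put the $|X|$ Parikh-tracking counters at positions $1,\ldots,|X|$ and $\cA$'s original $k$ counters above them, reasoning that the tracking counters must be low enough to be caught by a final all-counters zero test. That reasoning is spurious (acceptance of a priority counter machine already requires all counters to end at zero, so no explicit final test is needed), and the placement breaks the simulation: a zero test of $\cA$ at level $\ell$ must become a level-$(|X|+\ell)$ test in the extended machine, and since priority tests always check a downward-closed prefix of counters, this also forces positions $1,\ldots,|X|$ to be zero. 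Those hold running letter counts, which are generally non-zero mid-run, so $\cA$'s zero tests would spuriously fail. The correct layering is the opposite: keep $\cA$'s counters at $1,\ldots,k$ (so its zero tests are preserved verbatim) and add the tracking counters at $k+1,\ldots,k+|X|$; nondeterministically subtract a vector of $S$ at the end and let the acceptance condition do the rest.

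For the substitution case your choice (inner counters at the bottom, matching the paper's priority $(\ell+k)$-counter machine simulating $\cA$ on counters $\ell+1,\ldots,\ell+k$) is right, but the justification that ``the outer zero tests are unaffected'' is not accurate: an outer level-$j$ test becomes a level-$(\ell+j)$ test and does sweep the inner counters. It is harmless only because the inner counters are guaranteed to be zero at every instant outside an inner simulation, which is precisely the property the Parikh-tracking counters lack. Making this distinction explicit is what your layering discipline needed to do.
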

\begin{proof}
The fact that $\Prio$ is a full trio can be shown by standard automata
constructions.  Given a priority multicounter machine $\cA$ and a semilinear set
$S\subseteq X^\oplus$, we add $|X|$ counters to $\cA$ that ensure that the input
is contained in $\Lang{\cA}\cap \ParikhInv{S}$. This proves that $\Prio$ is
Presburger closed.

Suppose $\sigma\colon X\to\Powerset{Y^*}$ is a $\Prio$-substitution.
Furthermore, let $\cA$ be a priority $k$-counter machine and let $\sigma(x)$ be
given by a priority $\ell$-counter machine for each $x\in X$. We construct a
priority $(\ell+k)$-counter machine $\cB$ from $\cA$ by adding $\ell$ counters. $\cB$
simulates $\cA$ on counters $\ell+1,\ldots,\ell+k$. Whenever $\cA$ reads $x$, $\cB$
uses the first $\ell$ counters to simulate the priority $\ell$-counter machine
for $\sigma(x)$. Using the zero test on the first $\ell$ counters, it makes
sure that the machine for $\sigma(x)$ indeed ends up in a final configuration.
Then clearly $\Lang{\cB}=\sigma(\Lang{\cA})$.\myqed
\end{proof}

\begin{lem}\label{emptiness:parikh}
We have the effective inclusion
$\Parikh{\VA{\MonDecidable}}\subseteq\Parikh{\Prio}$. More precisely, given
$M\in\MonDecidable$ and $L\in\VA{M}$, one can construct an $L'\in\Prio$ with
$\Parikh{L'}=\Parikh{L}$.
\end{lem}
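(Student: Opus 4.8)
The plan is to prove the \lcnamecref{emptiness:parikh} by induction on the construction of the monoid $M\in\MonDecidable$, establishing the effective statement directly: for every $L\in\VA{M}$ one can construct an $L'\in\Prio$ with $\Parikh{L'}=\Parikh{L}$. The base case is $M=\B^{n}$. Here $\VA{\B^{n}}$ is precisely the class of languages of partially blind $n$-counter machines (as noted in \cref{preliminaries}, $\B$ realizes a partially blind counter), and such a machine is nothing but a priority $n$-counter machine that only uses transitions with $\ell=0$; hence $\VA{\B^{n}}\subseteq\Prio$ effectively, and we may simply take $L'=L$. For the inductive step there are two cases, $M=M_{0}\times\Z$ and $M=M_{0}*M_{1}$ with $M_{0},M_{1}\in\MonDecidable$, to which the induction hypothesis applies.

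For $M=M_{0}\times\Z$, I would use \cref{productz}: since $\VA{M_{0}\times\Z}\subseteq\bigcup_{i\ge 0}\VA{M_{0}\times\Z^{i}}=\HomSLI{\VA{M_{0}}}$, a given $L\in\VA{M_{0}\times\Z}$ can be written effectively as $L=h(K\cap\ParikhInv{S})$ for a morphism $h$, a language $K\in\VA{M_{0}}$ (obtained as a valence automaton over $M_{0}$), and a semilinear set $S$. By induction one constructs $K'\in\Prio$ with $\Parikh{K'}=\Parikh{K}$ and sets $L'\defeq h(K'\cap\ParikhInv{S})$. Then $L'\in\HomSLI{\Prio}\subseteq\Prio$ by the Presburger closure of $\Prio$ (\cref{prio:closure}). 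Since $\Parikh{P\cap\ParikhInv{S}}=\Parikh{P}\cap S$ for every language $P$, and $\Parikh{h(P)}$ is the image of $\Parikh{P}$ under the monoid morphism $X^{\oplus}\to Y^{\oplus}$ induced by $h$, applying these identities to $K$ and to $K'$ (which have equal Parikh image) yields $\Parikh{L'}=\Parikh{L}$.

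For $M=M_{0}*M_{1}$, I would invoke \cref{freeproduct}: $L\in\VA{M_{0}*M_{1}}$ is the language of some $(\VA{M_{0}}\cup\VA{M_{1}})$-grammar $G=(N,T,P,S)$, and the construction in the proof of \cref{freeproduct} produces the production languages effectively as languages in $\VA{M_{0}}$ resp.\ $\VA{M_{1}}$. By induction one can, for each production $(A,L_{A})\in P$, construct some $L_{A}'\in\Prio$ with $\Parikh{L_{A}'}=\Parikh{L_{A}}$; let $G'$ be the $\Prio$-grammar obtained from $G$ by replacing every production language $L_{A}$ with $L_{A}'$. The crucial point is that, for any $\C$-grammar $H=(N,T,P,S)$, the set $\Parikh{\Lang{H}}$ depends only on $N$, $T$, $S$ and the multisets $\{\Parikh{L_{A}}\mid (A,L_{A})\in P\}$. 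To see this, define a relation $\rightharpoonup$ on $(N\cup T)^{\oplus}$ by putting $\alpha\rightharpoonup\alpha'$ iff, for some production $(A,L_{A})\in P$ with $\alpha(A)\ge 1$ and some $\beta\in\Parikh{L_{A}}$, the multiset $\alpha'$ is obtained from $\alpha$ by decreasing the $A$-entry by one and adding $\beta$; then $\Parikh{\Lang{H}}=\{\gamma\in T^{\oplus}\mid\Parikh{S}\rightharpoonup^{*}\gamma\}$. Indeed, applying $\ParikhMap$ to the sentential forms of a derivation $S\grammarstep{H}\cdots\grammarstep{H}w$ turns it into a $\rightharpoonup$-sequence from $\Parikh{S}$ to $\Parikh{w}$; conversely, a $\rightharpoonup$-sequence from $\Parikh{S}$ to some $\gamma\in T^{\oplus}$ can be realized by a genuine derivation, maintaining the invariant that the current sentential form has Parikh image equal to the current multiset $\alpha$ (possible since $\alpha(A)\ge 1$ guarantees an occurrence of $A$ to rewrite, and $\beta\in\Parikh{L_{A}}$ is realized by some word of $L_{A}$). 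As $G$ and $G'$ share $N,T,S$ and all production Parikh images, $\Parikh{\Lang{G'}}=\Parikh{\Lang{G}}=\Parikh{L}$. Finally $\Lang{G'}\in\Alg{\Prio}$, and because $\Prio$ is a substitution closed full trio (\cref{prio:closure}), van~Leeuwen's theorem (\cref{algparikh}) gives $\Parikh{\Alg{\Prio}}=\Parikh{\Prio}$; unwinding its constructive proof then produces the desired $L'\in\Prio$ with $\Parikh{L'}=\Parikh{\Lang{G'}}=\Parikh{L}$.

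The main obstacle is the observation isolated in the free-product case, namely that the Parikh image of a $\C$-grammar's language is fully determined by the Parikh images of its production languages; everything else is an application of \cref{productz,freeproduct,algparikh} together with the closure properties of $\Prio$ from \cref{prio:closure}. Effectiveness is preserved throughout, since each of these ingredients (and van~Leeuwen's theorem) has a constructive proof, and the recursion follows the finite construction of $M$.
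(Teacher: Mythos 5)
Your proof is correct and follows essentially the same route as the paper's: induction on the construction of $M\in\MonDecidable$, with the base case $\B^n$ handled by the observation that partially blind multicounter machines are priority multicounter machines with no zero-tests, the $M_0\times\Z$ case via \cref{productz} and the Presburger-closure of $\Prio$, and the $M_0*M_1$ case via \cref{freeproduct}, \cref{algparikh} and \cref{prio:closure}. The only difference is that you make explicit (and correctly justify, via the multiset rewriting relation $\rightharpoonup$ on $(N\cup T)^\oplus$) the step that the paper states tersely — namely that replacing the production languages of a $(\VA{M_0}\cup\VA{M_1})$-grammar by Parikh-equivalent $\Prio$ languages preserves the Parikh image of the generated language — which is a useful bit of added rigor but not a different approach.
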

\begin{proof}
We proceed by induction with respect to the definition of $\MonDecidable$. In
the case $M=\B^n$, we have $\VA{M}\subseteq\Prio$, because priority
multicounter machines generalize partially blind multicounter machines.

Suppose $M=N\times\Z$ and $\Parikh{\VA{N}}\subseteq\Parikh{\Prio}$ and let
$L\in\VA{M}$.  By \cref{productz}, we have $L=h(K\cap\ParikhInv{S})$
for some semilinear set $S$, a morphism $h$, and $K\in\VA{N}$. Hence, there is
a $\bar{K}\in\Prio$ with $\Parikh{\bar{K}}=\Parikh{K}$. With this, we have
$\Parikh{L}=\Parikh{h(\bar{K}\cap\ParikhInv{S})}$ and since $\Prio$ is
Presburger closed, we have $h(\bar{K}\cap\ParikhInv{S})\in\Prio$ and thus
$\Parikh{L}\in\Parikh{\Prio}$.

Suppose $M=M_0*M_1$ and $\Parikh{\VA{M_i}}\subseteq\Parikh{\Prio}$ for
$i\in\{0,1\}$. Let $L$ be a member of $\VA{M}$. According to
\cref{freeproduct}, this means $L$ belongs to $\Alg{\VA{M_0}\cup\VA{M_1}}$. Since
$\Parikh{\VA{M_0}\cup\VA{M_1}}\subseteq\Parikh{\Prio}$, we can  
construct a $\Prio$-grammar $G$ with $\Parikh{\Lang{G}}=\Parikh{L}$.  By
\cref{algparikh} and \cref{prio:closure}, this implies
$\Parikh{L}\in\Parikh{\Prio}$.\myqed
\end{proof}

The following \lcnamecref{emptiness:decidable} is a direct consequence of
\cref{emptiness:parikh} and \cref{prio:decidable}: Given a valence automaton
over $M$ with $M\in\MonDecidable$, we construct a priority multicounter machine
accepting a Parikh-equivalent language. The latter can then be checked for
emptiness.
\begin{lem}\label{emptiness:decidable}
For each $M\in\MonDecidable$, the emptiness problem for valence automata over
$M$ is decidable.
\end{lem}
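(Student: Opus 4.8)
The plan is to simply combine the two ingredients assembled immediately above: the effective Parikh-image inclusion $\Parikh{\VA{\MonDecidable}}\subseteq\Parikh{\Prio}$ from \cref{emptiness:parikh} and Reinhardt's decidability result \cref{prio:decidable}. The only conceptual point needed is that emptiness of a language is a Parikh-invariant property: since the Parikh map $\ParikhMap\colon X^*\to X^\oplus$ is total, a language $L\subseteq X^*$ is empty if and only if $\Parikh{L}=\emptyset$.

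Concretely, I would argue as follows. Fix $M\in\MonDecidable$ and let $\cA$ be a valence automaton over $M$, so that $\Lang{\cA}\in\VA{M}$. By \cref{emptiness:parikh}, from (a presentation of) $M$ and from $\cA$ one can \emph{effectively} construct a priority multicounter machine $\cB$ with $\Parikh{\Lang{\cB}}=\Parikh{\Lang{\cA}}$. By the Parikh-invariance of emptiness, $\Lang{\cA}=\emptyset$ if and only if $\Lang{\cB}=\emptyset$. Finally, \cref{prio:decidable} asserts that emptiness for priority multicounter machines is decidable, so we can decide whether $\Lang{\cB}=\emptyset$, and therefore whether $\Lang{\cA}=\emptyset$.

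I do not expect a genuine obstacle in this last step; all the substance has already been discharged into \cref{emptiness:parikh} --- whose proof is an induction along the inductive definition of $\MonDecidable$, using \cref{productz} for the $M\times\Z$ case, \cref{freeproduct} together with \cref{algparikh} and \cref{prio:closure} for the free-product case, and the fact that priority multicounter machines subsume partially blind multicounter machines for the base case $\B^n$ --- and into Reinhardt's \cref{prio:decidable}. The single thing one must keep in mind is that the reduction furnished by \cref{emptiness:parikh} is uniform in $M$ and $\cA$, so that what we obtain is genuinely a decision procedure and not merely a non-uniform family of such procedures; this is exactly what the ``More precisely'' clause of \cref{emptiness:parikh} guarantees.
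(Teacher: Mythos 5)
Your proposal is correct and coincides with the paper's own argument: the paper derives \cref{emptiness:decidable} as an immediate consequence of \cref{emptiness:parikh} together with \cref{prio:decidable}, by constructing a Parikh-equivalent priority multicounter machine and testing it for emptiness. You have simply spelled out the (trivial but necessary) observation that emptiness is a Parikh-invariant property, which the paper leaves implicit.
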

This completes the proof of
\DesImp{dec:cond:monclass}{dec:cond:decidable} of
\cref{emptiness:graphmonoids:equivalence} and hence concludes the
proof of \cref{emptiness:graphmonoids:equivalence}.

\section{Expressive equivalences}\label{proofexpressive}

We now turn to the proof of
\cref{emptiness:decidable:intuition,emptiness:remaining:intuition}, which
characterize the expressiveness of valence automata over $\MonDecidableSimple$
and $\MonRemaining$, respectively.

\begin{proof}[\cref{emptiness:decidable:intuition}] Since
$\MonDecidableSimple\subseteq\MonDecidable$, the inclusion
``$\supseteq$'' is immediate. We show by induction with respect to
the definition of $\MonDecidable$ that for each $M\in\MonDecidable$,
there is an $M'\in\MonDecidableSimple$ with $\VA{M}\subseteq\VA{M'}$.
This is trivial if $M=\B^n$, so suppose $\VA{M}\subseteq\VA{M'}$
and $\VA{N}\subseteq\VA{N'}$ for $M,N\in\MonDecidable$ and
$M',N'\in\MonDecidableSimple$. Observe that by induction on the
definition of $\MonDecidableSimple$, one can show that there is
a common $P\in\MonDecidableSimple$ with $\VA{M'}\subseteq\VA{P}$
and $\VA{N'}\subseteq\VA{P}$. Of course, we may assume that
$\Rclass_1(P)\ne\{1\}$. Then we have
\begin{align*}
\VA{M*N}&\subseteq\Alg{\VA{M}\cup\VA{N}}\subseteq\Alg{\VA{M'}\cup\VA{N'}} \\
        &\subseteq\Alg{\VA{P}}= \VA{\B*P}, 
\end{align*}
in which the first inclusion is due to \cref{freeproduct}
and the equality in the end is provided by \cref{algbb}.
Since $\B*P\in\MonDecidableSimple$, this completes the
proof for $M*N$. Moreover, $\VA{M}\subseteq\VA{M'}$ implies
$\VA{M\times\Z}\subseteq\VA{M'\times\Z}$ and we have
$M'\times\Z\in\MonDecidableSimple$.\myqed
\end{proof}

\begin{proof}[\cref{emptiness:remaining:intuition}]
By induction, it is easy to see that each $M\in\MonRemainingSimple$ is
isomorphic to some $\M\Gamma$ where $\Gamma$ contains a \PNP-graph and
$\Gamma^-$ is a transitive forest.  By \cref{transitiveforest}, this means
$\Gamma^-$ contains neither $\Cfour{}$ nor $\Pfour{}$. This proves the
inclusion ``$\supseteq$''. 

Because of \cref{transitiveforest}, for the inclusion ``$\subseteq$'', it
suffices to show that if $\Gamma^-$ is a transitive forest, then there is some
$M\in\MonRemainingSimple$ with $\VA{\M\Gamma}\subseteq\VA{M}$. We prove this by
induction on the number of vertices in $\Gamma=(V,E)$. As in the proof of
\cref{emptiness:forest:monclass}, we may assume that for every induced proper
subgraph $\Delta$ of $\Gamma$, we find an $M\in\MonRemainingSimple$ with
$\VA{\M\Gamma}\subseteq\VA{M}$.
If $\Gamma$ is empty, then $\M\Gamma\cong\TrivialMonoid$ and
$\VA{\M\Gamma}\subseteq\VA{\B^{(2)}\times\B}$. Hence, we may assume that
$\Gamma$ is non-empty.

If $\Gamma$ is not connected, then $\Gamma=\Gamma_1\uplus\Gamma_2$ with non-empty graphs
$\Gamma_1,\Gamma_2$. This implies that there are $M_1,M_2\in\MonRemainingSimple$ with
$\VA{\M\Gamma_i}\subseteq\VA{M_i}$ for $i\in\{1,2\}$. By induction with respect
to the definition of $\MonRemainingSimple$, one can show that there is a common
$N\in\MonRemainingSimple$ with $\VA{M_i}\subseteq\VA{N}$ for $i\in\{1,2\}$.
Here, $N$ can clearly be chosen with $\Rclass_1(N)\ne\{1\}$. Then, we have
\begin{align*}
\VA{\M\Gamma}&=\VA{\M\Gamma_1*\M\Gamma_2}\subseteq\Alg{\VA{\M\Gamma_1}\cup\VA{\M\Gamma_2}} \\
&\subseteq\Alg{\VA{M_1}\cup\VA{M_2}}\subseteq\Alg{\VA{N}}= \VA{\B*N}
\end{align*}
and $\B*N\in\MonRemainingSimple$ as in the proof of
\cref{emptiness:decidable:intuition}.

Suppose $\Gamma$ is connected. Since $\Gamma^-$ is a transitive
forest, there is a vertex $v\in V$ that is adjacent to every vertex in
$V\setminus\{v\}$. By induction, there is an $M\in\MonRemainingSimple$
with $\VA{\M(\Gamma\setminus v)}\subseteq\VA{M}$. Depending on whether
$v$ is looped or not, we have $\M\Gamma\cong\M(\Gamma\setminus
v)\times\Z$ or $\M\Gamma\cong\M(\Gamma\setminus v)\times\B$. Since
$\VA{\Z}\subseteq\VA{\B\times\B}$ (one blind counter can easily be
simulated by two partially blind counters), this yields
$\VA{\M\Gamma}\subseteq\VA{\M(\Gamma\setminus v)\times\B\times\B} \subseteq \VA{M\times\B\times\B}$
and the fact that $M\times\B\times\B\in\MonRemainingSimple$ completes the proof.\myqed
\end{proof}

\section{Finite-index languages and Petri nets}\label{synthesis}
We have seen in the previous sections that valence automata over
$\MonDecidableSimple$ constitute a model that (strictly) subsumes Petri nets
and has a decidable emptiness problem. Moreover, they feature a type of
pushdown stack. A similar result has been obtained by Atig and
Ganty~\cite{AtigGanty2011}. They proved that given a finite-index context-free
language and a Petri net language, it is decidable whether their intersection
is empty. Here, we present a common generalization of these facts: We provide a
language class that contains the languages considered by Atig and Ganty and
those in $\VA{\MonDecidableSimple}$ and enjoys decidability of the emptiness
problem.

\paragraph{Definitions} If $\C$ is the class of finite languages, a
$\C$-grammar is also called a \emph{context-free grammar}. For a context-free
grammar $G=(N,T,P,S)$, we may assume that for each production $(A,M)$, the set
$M$ is a singleton and instead of $(A,\{w\})$, we write $A\to w$ for the
production. The grammar is said to be in \emph{Chomsky normal form (CNF)} if 
for every production $A\to w$, we have $w\in N^2\cup T\cup\{\eword\}$.

The finite-index restriction considered by Atig and Ganty places a
budget constraint on the nonterminal occurrences in sentential forms. This
leads to a restricted derivation relation. Suppose $G$ is in CNF. For $u,v\in(N\cup T)^*$, we write
$u\grammarstep[G,k] v$ if $|u|_N,|v|_N\le k$ and $u\grammarstep[G] v$. Then,
the \emph{$k$-approximation} $\Lang[k]{G}$ of $\Lang{G}$ is defined as 
\[ \Lang[k]{G} = \{w\in T^* \mid S\grammarsteps[G,k] w \}. \] For
$k\ge 1$, we use $\CF[k]$ to denote the class of languages of the form
$\Lang[k]{G}$ for context-free grammars $G$. The languages in $\CF_k$
are called \emph{index-$k$ context-free languages}. It will later be
convenient to let $\CF_0$ denote the regular languages. Moreover,
$\fiCF$ is the union $\bigcup_{k\ge 1} \CF[k]$. Its members are called
\emph{finite-index context-free languages}. Note that although clearly
$\Lang{G}=\bigcup_{k\ge 1} \Lang[k]{G}$ for every individual grammar
$G$, the class $\fiCF$ is strictly contained in $\CF$: Salomaa has
shown that $D_1^*$, the semi-Dyck language over one pair of parentheses,
is not contained in $\fiCF$~\cite{Salomaa1969}.

A \emph{$d$-dimensional (labeled) Petri net\footnote{This definition is closer to
what is known as a Vector Addition System, but these models are well-known to
be equivalent with respect to generated languages.}} is a tuple
$N=(X,E,\mu_0,F)$, where $X$ is an alphabet, $T$ is a finite subset of
$(X\cup\{\eword\})\times \Z^d$ whose elements are called \emph{transitions},
$\mu_0\in \N^d$ is the \emph{initial marking}, and $F\subseteq \N^d$ is a
finite set of \emph{final markings}.
For $\mu,\mu'\in\N^d$ and $w\in X^*$, we write $\mu\lautsteps[N]{w}\mu'$ if there are $\mu_0,\ldots,\mu_n\in\N^d$
 and transitions $(x_1,\nu_1),\ldots,(x_n,\nu_n)\in T$ such that $w=x_1\cdots x_n$, $\mu_0=\mu$, $\mu_n=\mu'$, and $\mu_i=\mu_{i-1}+\nu_i$ for $i\in\{1,\ldots,n\}$. Moreover, we define
\begin{align*}
\Lang{N,\mu,\mu'}=\{w\in X^* \mid \mu\lautsteps[N]{w} \mu'\}, && \Lang{N}=\bigcup_{\mu\in F} \Lang{N,\mu_0,\mu}.
\end{align*}
The language $\Lang{N}$ is said to be \emph{generated by $N$}.
A languages is a \emph{Petri net language} if it is generated by some labeled
Petri net. By $\Petri$, we denote the class of all Petri net languages.
Observe that we have $\Petri=\bigcup_{n\ge 0} \VA{\B^n}$.

The main result of this section involves the language class
$\fiCF\triointersect\Petri$.  We will use the fact that this is a full
trio, which follows from the following classical result.  See
\cite[Theorem~3.6.1]{Ginsburg1975} for a proof.
\begin{prop}\label{triointersecttrio} 
If $\C$ and $\D$ are full trios, then $\C\triointersect\D$ is a full trio as well.
\end{prop}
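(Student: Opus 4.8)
The plan is to verify directly that $\C\triointersect\D$ is closed under rational transductions, which — as recalled just before the statement — is exactly what it means to be a full trio. So fix a language $M\in\C\triointersect\D$, say $M=h(K\cap L)$ with $K\in\C$ and $L\in\D$ both subsets of $X^*$ and $h\colon X^*\to Y^*$ a morphism, and fix a rational transduction $T\subseteq Y^*\times W^*$; the goal is to show that $TM$ is effectively a member of $\C\triointersect\D$. The first move is to absorb the morphism $h$ into the transduction. The graph $\{(w,h(w))\mid w\in X^*\}$ of $h$ is itself a rational transduction $X^*\to Y^*$, and rational transductions are closed under composition (see \cite{Berstel1979}). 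Hence the relation $T'=\{(w,v)\mid (h(w),v)\in T\}$, obtained by composing the graph of $h$ with $T$, is again a rational transduction, now from $X^*$ to $W^*$, and one checks at once that $T'(K\cap L)=T(h(K\cap L))=TM$. This reduces the task to showing $T'(K\cap L)\in\C\triointersect\D$ for an arbitrary rational transduction $T'\colon X^*\to W^*$ and $K\in\C$, $L\in\D$.

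Next I would unfold $T'$ via its defining (Nivat-style) decomposition, which is precisely the form of rational transduction used in this paper: there is an alphabet $V$, a regular language $R\subseteq V^*$, and morphisms $g_1\colon V^*\to X^*$, $g_2\colon V^*\to W^*$ with $T'=\{(g_1(r),g_2(r))\mid r\in R\}$. Then, using that inverse images commute with intersection,
\begin{align*}
T'(K\cap L) &= \{g_2(r)\mid r\in R,\ g_1(r)\in K,\ g_1(r)\in L\} \\
            &= g_2\bigl((R\cap g_1^{-1}(K))\cap g_1^{-1}(L)\bigr).
\end{align*}
Since $\C$ is a full trio, it is closed under inverse morphisms and under intersection with regular languages, so $K'\defeq R\cap g_1^{-1}(K)$ lies in $\C$; likewise $L'\defeq g_1^{-1}(L)$ lies in $\D$ because $\D$ is a full trio. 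Both $K'$ and $L'$ are languages over the common alphabet $V$, so $T'(K\cap L)=g_2(K'\cap L')$ is exactly of the shape required to witness membership in $\C\triointersect\D$. Tracing the argument back through the composition step gives $TM\in\C\triointersect\D$, and since composition of transductions, the Nivat decomposition, and the closure operations of the full trios $\C$ and $\D$ are all effective, the inclusion is effective.

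There is no substantial obstacle here: the whole proof is an application of the standard toolkit for full trios — closure of rational transductions under composition, their normal form as regular-language-plus-two-morphisms, and the three elementary closure properties of a full trio. The only points requiring a little care are purely bookkeeping: making sure that after the decomposition step $g_1^{-1}(K)$ and $g_1^{-1}(L)$ are formed over the same alphabet $V$ so that their intersection is meaningful, and noting that erasing morphisms and $\eword$-transitions cause no difficulty, since full trios are by definition closed under arbitrary (not necessarily $\eword$-free) morphisms and inverse morphisms.
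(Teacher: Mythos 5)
Your proof is correct and effective. The paper itself gives no proof of this proposition --- it cites Ginsburg~\cite[Theorem~3.6.1]{Ginsburg1975} --- but your argument (absorb the morphism $h$ into the transduction via composition, apply the Nivat decomposition to write $T'$ as a regular set $R$ together with two morphisms $g_1,g_2$, then observe that $T'(K\cap L)=g_2\bigl((R\cap g_1^{-1}(K))\cap g_1^{-1}(L)\bigr)$ distributes the regular set and the inverse morphism into the two full trios over the common alphabet $V$) is precisely the standard one, and is the argument given in the cited reference.
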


In our notation, the result of Atig and Ganty can be stated as follows.
\begin{thm}[Atig and Ganty~\cite{AtigGanty2011}]\label{atigganty}
The class $\fiCF\triointersect\Petri$ has a decidable emptiness problem.
\end{thm}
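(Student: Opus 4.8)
The plan is to reduce the emptiness problem for $\fiCF\triointersect\Petri$ to that for priority multicounter machines and then invoke Reinhardt's theorem (\cref{prio:decidable}). First, the morphism in the definition of $\triointersect$ plays no role for emptiness: since $h(K\cap L)=\emptyset$ if and only if $K\cap L=\emptyset$, it suffices to decide, given $k\ge 1$, a context-free grammar $G$ in CNF over an alphabet $X$, and a $d$-dimensional Petri net $N$ over $X$, whether $\Lang[k]{G}\cap\Lang{N}=\emptyset$. It is worth noting why a purely Parikh-based approach, in the spirit of \cref{emptiness:parikh}, is hopeless here: $\Parikh{\Petri}$ contains non-semilinear sets, so one genuinely has to keep track of the order in which $N$ fires its transitions along the generated word, and it is precisely the bounded index of $G$ that makes this feasible.

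The core of the construction is a single building block for the ``linear'' situation, which is then nested $k$ times. After a suitable normal-form preprocessing of $G$, consider a derivation $S\grammarsteps[G,k] w$ of index at most one, so that in each sentential form there is a unique nonterminal; such a derivation produces $w=u_1u_2\cdots u_n\,m\,v_n\cdots v_1$, where $u_i,v_i,m\in X^*$, the $u_i$ appearing in the order they are generated and the $v_i$ in the reverse order. A priority multicounter machine can simulate this together with a run of $N$ on $w$ as follows: it keeps one bundle of $d$ counters holding a marking that it advances \emph{forward} through $N$ along the $u_i$, starting from $\mu_0$, and a second bundle holding a marking that it advances \emph{backward} through $N$ along the $v_i$, starting from a nondeterministically guessed final marking of $N$. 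The backward direction of the second bundle is exactly what matches the reversed order $v_n\cdots v_1$, so both bundles are updated as $u_i$ and $v_i$ are generated, in lock-step with the derivation. When the derivation reaches its terminal production yielding $m$, the machine fires $m$ on the forward bundle and then checks, by a zero test, that the forward and backward bundles now hold the same marking; this is the only place a zero test is needed. Non-negativity of all intermediate markings of $N$ is automatic, since every marking the machine manipulates is stored in its counters, which are non-negative away from the zero test, so that firing a transition $(x,\nu)$ forward and requiring the result to stay non-negative is exactly the Petri-net firing condition, and dually for backward firing.

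For general $k$, the plan is to use a normal form for finite-index grammars (as in Atig and Ganty) under which an index-$\le k$ derivation decomposes into $k$ nested layers, each layer being a linear-like derivation whose terminal productions may invoke a completed derivation of the next layer. The machine runs the forward/backward simulation above with a separate bundle of $d$ counters per layer, plus $O(1)$ auxiliary counters, for a total of $O(dk)$ counters. When layer $\ell$ finishes, the bundles of all deeper layers $>\ell$ have already been reconciled and zeroed, and the machine reconciles layer $\ell$'s bundles by a zero test on the counters of layers $\ge\ell$, which is precisely a \emph{priority} zero test. Soundness and completeness of this simulation then give that the resulting priority multicounter machine accepts a nonempty language if and only if $\Lang[k]{G}\cap\Lang{N}\ne\emptyset$, and \cref{prio:decidable} finishes the proof.

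The main obstacle is the general-$k$ step: organizing an arbitrary index-$\le k$ derivation into layers so that the $\le k$ simultaneously active pieces, each carrying $O(d)$ ``interface'' counters, live in a bounded number of bundles; so that these bundles are allocated and zeroed in a discipline matching exactly the nested zero tests available to a priority machine; and so that the resulting reorganization is faithful both to the generated language and to the runs of $N$. Everything else, namely the reduction away of $h$, the forward/backward trick, and the appeal to Reinhardt's theorem, is routine once this normal form is in place.
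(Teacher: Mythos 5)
Your construction is essentially the paper's: reduce away the outer morphism, exploit the linear structure of a bounded-index derivation to run $N$ forward along the ``left'' pieces and backward along the ``right'' pieces, reconcile with a zero test, and nest this $k$ times using the priority discipline so that reconciling an outer layer zero-tests exactly the counters of all inner layers. The paper proves exactly this as \cref{simulator} and derives \cref{atigganty} from \cref{intersection} and \cref{prio:decidable}.

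Two remarks. First, your aside that ``a purely Parikh-based approach\ldots is hopeless'' misreads the situation. The paper's proof \emph{is} Parikh-based: \cref{intersection} constructs a priority multicounter machine $\cA$ with $\Parikh{\Lang{\cA}}=\Parikh{K}$, not one with $\Lang{\cA}=K$ (indeed your own construction reads $u_1v_1u_2v_2\cdots$ rather than $u_1\cdots u_nv_n\cdots v_1$, so it only preserves Parikh images too). Nothing requires semilinearity of $\Parikh{\Petri}$; what is needed is a Parikh-preserving translation into a class with decidable emptiness, and preserving the Parikh image, rather than merely emptiness, is essential for the paper's later use of this result in \cref{synthesisresult} via \cref{algparikh}.

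Second, what you flag as ``the main obstacle'' is in fact the entire technical content of the paper's argument, and you leave it unproven. The paper resolves it in two parts: \cref{decomplinear} shows that every $K\in\CF_k$ factors as $K=\sigma(L)$ with $L$ linear context-free and $\sigma$ a $\CF_{k-1}$-substitution (using Atig and Ganty's grammar $G^{[\ell]}$), which is precisely the ``layering''; and the notion of a $(K,N)$-simulator makes the nesting discipline precise by declaring that a simulator for a $\CF_{k-1}$ substituent uses $\ell+d$ counters, never zero-tests its top $d$ interface counters, and leaves its $\ell$ auxiliary counters empty on exit, so that inner and outer layers may share the low-numbered counters and priority zero tests land on prefixes as required. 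Without these, your ``bundles are allocated and zeroed in a discipline matching exactly the nested zero tests'' remains an aspiration, not a proof. Also, your accounting ``a separate bundle of $d$ counters per layer, plus $O(1)$ auxiliary counters'' is off: the auxiliary counters also grow linearly in $k$ (the paper's simulator for $\CF_k$ uses on the order of $(2k+1)d$ counters), though the $O(dk)$ total is right.
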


Here, we present a language class including both $\fiCF\triointersect\Petri$ and
$\VA{\MonDecidableSimple}$ where emptiness is still decidable. First, consider
the following hierarchy. 
Let 
\begin{align*} \HF_0=\Petri, && \text{$\HF_{i+1}=\HomSLI{\Alg{\HF_i}}$ for $i\ge 0$}, && \HF=\bigcup_{i\ge 0}\HF_i. \end{align*}
The class $\HF$ captures the expressive power of valence automata over monoids in $\MonDecidableSimple$:
\begin{prop}
$\VA{\MonDecidableSimple}=\HF$.
\end{prop}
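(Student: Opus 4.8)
The plan is to prove both inclusions by induction, mirroring the structure of the definition of $\MonDecidableSimple$ on one side and of the hierarchy $\HF$ on the other. The key observation linking the two is the dictionary:
\begin{itemize}
\item $\B^n$ corresponds to $\HF_0=\Petri=\bigcup_n\VA{\B^n}$ (via $\Petri=\bigcup_n\VA{\B^n}$);
\item the operation $M\mapsto\B*M$ corresponds, on the language-class side, to $\C\mapsto\Alg{\C}$ (by \cref{algbb}, since every monoid $M$ we build has $\Rclass_1(M)\ne\{1\}$, so $\VA{\B*M}=\Alg{\VA{M}}$);
\item the operation $M\mapsto M\times\Z$ corresponds to $\C\mapsto\HomSLI{\C}$ (by \cref{productz}, which gives $\bigcup_{i\ge0}\VA{M\times\Z^i}=\HomSLI{\VA{M}}$).
\end{itemize}

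For the inclusion $\VA{\MonDecidableSimple}\subseteq\HF$, I would argue by induction on the construction of $M\in\MonDecidableSimple$ that $\VA{M}\subseteq\HF_i$ for some $i$. The base case $M=\B^n$ gives $\VA{\B^n}\subseteq\Petri=\HF_0$. For the inductive step with $M'=\B*M$, the induction hypothesis gives $\VA{M}\subseteq\HF_i$, hence $\VA{\B*M}=\Alg{\VA{M}}\subseteq\Alg{\HF_i}\subseteq\HomSLI{\Alg{\HF_i}}=\HF_{i+1}$ — using that $\Alg{\HF_i}\subseteq\HomSLI{\Alg{\HF_i}}$ since any class is contained in its $\HomSLI$-closure (take $h$ the identity and $S$ the full multiset monoid, which is semilinear). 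For $M'=M\times\Z$ with $\VA{M}\subseteq\HF_i$, one gets $\VA{M\times\Z}\subseteq\bigcup_{j\ge0}\VA{M\times\Z^j}=\HomSLI{\VA{M}}\subseteq\HomSLI{\HF_i}$, and I would want $\HomSLI{\HF_i}\subseteq\HF_i$ — which needs the fact (or a direct argument) that each $\HF_i$ is already Presburger closed; alternatively one sidesteps this by noting $\HomSLI{\HF_i}\subseteq\HomSLI{\Alg{\HF_i}}=\HF_{i+1}$. So monotonicity of $\HF_i$ in $i$ and closure of the $\HF_i$ under full-trio operations are the technical prerequisites; these follow from \cref{alg-full-trio}, \cref{triointersecttrio}-style arguments, and the evident monotonicity of $\Alg{\cdot}$ and $\HomSLI{\cdot}$.

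For the reverse inclusion $\HF\subseteq\VA{\MonDecidableSimple}$, I would show by induction on $i$ that $\HF_i\subseteq\VA{\MonDecidableSimple}$, and more precisely that for each $L\in\HF_i$ there is $M\in\MonDecidableSimple$ with $L\in\VA{M}$. The base case $\HF_0=\Petri=\bigcup_n\VA{\B^n}$ is immediate. For the step, a language in $\HF_{i+1}=\HomSLI{\Alg{\HF_i}}$ has the form $h(L'\cap\ParikhInv{S})$ with $L'\in\Alg{\HF_i}$. Now $L'$ is generated by an $\HF_i$-grammar using finitely many production sets, each lying in $\VA{M_j}$ for some $M_j\in\MonDecidableSimple$; by the standard "common upper bound" argument (exactly as in the proof of \cref{emptiness:decidable:intuition}), there is a single $P\in\MonDecidableSimple$ with $\VA{M_j}\subseteq\VA{P}$ for all $j$, and we may assume $\Rclass_1(P)\ne\{1\}$. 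Then the grammar is a $\VA{P}$-grammar, so $L'\in\Alg{\VA{P}}=\VA{\B*P}$ by \cref{algbb}, with $\B*P\in\MonDecidableSimple$. Finally, by \cref{productz}, $h((\B*P)\text{-language}\cap\ParikhInv{S})\in\HomSLI{\VA{\B*P}}=\bigcup_{j\ge0}\VA{(\B*P)\times\Z^j}$, and each $(\B*P)\times\Z^j$ lies in $\MonDecidableSimple$ (apply $M\mapsto M\times\Z$ finitely often). Hence $L\in\VA{\MonDecidableSimple}$.

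**The main obstacle** I anticipate is the "common upper bound" lemma for $\MonDecidableSimple$: one must verify that for any two $M',N'\in\MonDecidableSimple$ there is a single $P\in\MonDecidableSimple$ with $\VA{M'}\subseteq\VA{P}$ and $\VA{N'}\subseteq\VA{P}$, preferably with $\Rclass_1(P)\ne\{1\}$. This is proven by induction on the structure of $M'$ and $N'$, repeatedly using that $\B*M\in\MonDecidableSimple$ absorbs a free product, that $M\times\Z\in\MonDecidableSimple$ absorbs a $\Z$-factor, and that $\VA{M}\subseteq\VA{\B*M}$ and $\VA{M}\subseteq\VA{M\times\Z}$; this is routine but slightly fiddly. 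A secondary point is making sure the $\HomSLI$ and $\Alg$ operators interact correctly so that the hierarchy $\HF_i$ is monotone and each level is a full trio closed under substitutions — needed so that \cref{algparikh}-style reasoning and the grammar-replacement trick go through; all of this is available from the cited closure results (\cref{alg-full-trio}, \cref{freeproduct}, \cref{algbb}, \cref{productz}) combined with straightforward bookkeeping.
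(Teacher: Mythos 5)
Your proposal is correct and follows essentially the same route as the paper: the dictionary between constructors on the monoid side ($\B^n$, $M\mapsto\B*M$, $M\mapsto M\times\Z$) and class operators ($\Petri$, $\Alg{\cdot}$, $\HomSLI{\cdot}$) via Propositions~\ref{algbb} and~\ref{productz}, and a two-sided induction. The only organizational difference is in the $\supseteq$ direction: you work directly inside $\MonDecidableSimple$ and therefore need the ``common upper bound'' observation to merge the finitely many right-hand-side monoids, whereas the paper sidesteps this by proving $\HF_i\subseteq\VA{\MonDecidable}$ (where a common upper bound $M_1*\cdots*M_n$ is immediate) and then invoking $\VA{\MonDecidable}=\VA{\MonDecidableSimple}$ from Proposition~\ref{emptiness:decidable:intuition}; both variants rely on the same underlying argument, so this is a matter of packaging rather than a different proof.
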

\begin{proof}
For the inclusion ``$\subseteq$'', we prove that for every $M\in\MonDecidableSimple$, we have 
$\VA{M}\subseteq \HF_i$ for some $i\ge 0$. Clearly, we have $\VA{\B^n}\subseteq \HF_0$.
Moreover, if $\VA{M}\subseteq\HF_i$, then 
\[ \VA{M\times\Z}\subseteq\HomSLI{\VA{M}}\subseteq \HomSLI{\HF_i}\subseteq \HF_{i+1}, \]
in which the first inclusion follows from \cref{productz}.
Finally, if $\VA{M_0}\subseteq\HF_i$ and $\VA{M_1}\subseteq\HF_{j}$, then $\VA{M_0},\VA{M_1}\subseteq\HF_k$ for $k=\max\{i,j\}$ and thus
\[ \VA{M_0*M_1}\subseteq\Alg{\VA{M_0}\cup\VA{M_1}}\subseteq\Alg{\HF_k}\subseteq\HF_{k+1}, \]
where the first inclusion is due to \cref{freeproduct}. This completes the proof of the inclusion ``$\subseteq$''.

For the inclusion ``$\supseteq$'', we show by induction on $i$ that
$\HF_i\subseteq\VA{\MonDecidable}$ for every $i\ge 0$. Since
$\VA{\MonDecidable}=\VA{\MonDecidableSimple}$ by
\cref{emptiness:decidable:intuition}, this is sufficient.  Clearly,
the inclusion $\HF_0=\bigcup_{n\ge 0}\VA{\B^n}\subseteq
\VA{\MonDecidable}$ holds. Now suppose
$\HF_i\subseteq\VA{\MonDecidable}$ and let $L$ be a member of
$\HF_{i+1}=\HomSLI{\Alg{\HF_i}}$.  This means we have
$L=h(K\cap\ParikhInv{S})$ for some homomorphism $h$, a language $K$
from $\Alg{\HF_i}$, and a semilinear set $S$. As a member of
$\Alg{\HF_i}$, the language $K$ is generated by an $\HF_i$-grammar
$G$. Each right-hand side in $G$ is contained in $\HF_i$ and thus, by
induction, in $\VA{\MonDecidable}$. Hence, suppose the right-hand
sides of $G$ are $K_1,\ldots,K_n$ with $K_i\in\VA{M_i}$ for
$M_1,\ldots,M_n\in\MonDecidable$. Consider the monoid $M=M_1*\cdots
*M_n$.  Since each $M_i$ embeds into $M$, the languages
$K_1,\ldots,K_n$ belong to $\VA{M}$. Thus, $K$ is a member of
$\Alg{\VA{M}}$, which equals $\VA{\B*\B*M}$ according to \cref{algbb}.
According to \cref{productz}, this implies that $L$ belongs to
$\VA{(\B*\B*M)\times\Z^k}$ for some $k\ge 0$.  Since
$(\B*\B*M)\times\Z^k$ is a member of $\MonDecidable$, we know that $L$
belongs to $\VA{\MonDecidable}$. We have thus shown
$\HF_{i+1}\subseteq\VA{\MonDecidable}$, which establishes the
inclusion ``$\subseteq$''. \myqed
\end{proof}

Our new class is defined as follows. Let 
\begin{align*}\HG_0=\fiCF\triointersect\Petri, &&  \text{$\HG_{i+1}=\HomSLI{\Alg{\HG_i}}$~for $i\ge 0$}, && \HG=\bigcup_{i\ge 0}\HG_i.\end{align*}
Then clearly $\HF_i\subseteq \HG_i$ for $i\ge 0$ and hence
$\VA{\MonDecidableSimple}=\HF\subseteq \HG$.  Moreover, we obviously have
$\fiCF\triointersect\Petri\subseteq \HG$. 
We shall prove the following.
\begin{thm}\label{synthesisresult}
The class $\HG$ has a decidable emptiness problem.
\end{thm}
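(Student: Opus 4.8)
The plan is to mirror the strategy of \cref{emptiness:parikh}: I would show by induction on $i$ that $\Parikh{\HG_i}\subseteq\Parikh{\Prio}$ effectively, meaning that for every $L\in\HG_i$ one can compute an $L'\in\Prio$ with $\Parikh{L'}=\Parikh{L}$. Since a language is empty exactly when its Parikh image is, decidability of emptiness for $\HG$ then follows immediately from Reinhardt's \cref{prio:decidable}: given $L\in\HG=\bigcup_{i\ge0}\HG_i$, first determine an $i$ with $L\in\HG_i$, then construct the Parikh-equivalent $L'\in\Prio$ and test $L'$ for emptiness.

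For the inductive step I would first record two elementary ``Parikh-monotonicity'' observations. (1) If $\Parikh{\C}\subseteq\Parikh{\D}$ effectively, then $\Parikh{\Alg{\C}}\subseteq\Parikh{\Alg{\D}}$ effectively: in a $\C$-grammar, replace each right-hand side $M$ by a $\D$-language $M'$ with $\Parikh{M'}=\Parikh{M}$ as subsets of $(N\cup T)^\oplus$ -- that is, preserving the nonterminal occurrences as well; because both the terminal and the nonterminal content of each chosen right-hand side is then preserved, derivation trees can be transferred in either direction, so the new $\D$-grammar generates a Parikh-equivalent language. This is precisely the reasoning already used for the free product in the proof of \cref{emptiness:parikh}. (2) If $\Parikh{\C}\subseteq\Parikh{\D}$ effectively, then $\Parikh{\HomSLI{\C}}\subseteq\Parikh{\HomSLI{\D}}$ effectively: for $L\in\C$, a morphism $h$ and a semilinear $S$ one has $\Parikh{h(L\cap\ParikhInv{S})}=h^\oplus(\Parikh{L}\cap S)$, where $h^\oplus$ is the additive map on multisets induced by $h$, so this depends on $L$ only through $\Parikh{L}$ and nothing changes when $L$ is replaced by a Parikh-equivalent member of $\D$. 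Given these, and using that $\Parikh{\Alg{\Prio}}=\Parikh{\Prio}$ (van~Leeuwen's \cref{algparikh} together with \cref{prio:closure}, which states that $\Prio$ is a substitution-closed full trio) and that $\HomSLI{\Prio}\subseteq\Prio$ (Presburger closure, again \cref{prio:closure}), the hypothesis $\Parikh{\HG_i}\subseteq\Parikh{\Prio}$ would give
\[ \Parikh{\HG_{i+1}}=\Parikh{\HomSLI{\Alg{\HG_i}}}\subseteq\Parikh{\HomSLI{\Alg{\Prio}}}=\Parikh{\HomSLI{\Prio}}\subseteq\Parikh{\Prio}, \]
with all inclusions effective, completing the induction.

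The remaining piece, and the one that carries the real content, is the base case $\Parikh{\HG_0}=\Parikh{\fiCF\triointersect\Petri}\subseteq\Parikh{\Prio}$. Here I would revisit the reduction underlying Atig and Ganty's \cref{atigganty} and strengthen it so that it outputs a priority multicounter machine: given a finite-index (say index-$k$, in CNF) context-free grammar for $K$, a labelled Petri net for $L$, and a morphism $h$, construct a priority multicounter machine $\cB$ with $\Parikh{\Lang{\cB}}=\Parikh{h(K\cap L)}$ -- in fact one should be able to obtain $\fiCF\triointersect\Petri\subseteq\Prio$ outright, since $\Prio$ is a full trio and the morphism $h$ then comes for free. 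The Petri-net places become partially blind counters; an index-$k$ derivation is simulated by maintaining a bounded agenda of at most $k$ pending nonterminals, and the budget bound is exactly what lets the necessary zero-tests be laid out along the $k$ priority levels, forcing each suspended sub-derivation to be finished before the counters it consumed are reused. I expect this upgrade of Atig and Ganty's argument -- producing a machine whose language is Parikh-equivalent to the intersection, rather than merely a reduction that decides emptiness of the intersection -- to be the main obstacle; once it is in place, everything else is bookkeeping with the closure properties of $\Prio$ and the two monotonicity facts above.
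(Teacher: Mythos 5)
Your overall architecture coincides with the paper's: show by induction on $i$ that every language in $\HG_i$ has a constructible Parikh-equivalent language in $\Prio$, and then invoke Reinhardt's decidability result; the inductive step via \cref{algparikh} and \cref{prio:closure} is exactly what the paper does and your ``Parikh-monotonicity'' observations are the (correct) justification of it. The issue is that the base case $\Parikh{\fiCF\triointersect\Petri}\subseteq\Parikh{\Prio}$, which you correctly flag as carrying the real content, is precisely where the paper spends most of \cref{synthesis} (Theorems~\ref{atigganty}--\ref{intersection}, \cref{decomplinear}, and the $(K,N)$-simulator of \cref{simulator}), and your sketch of it does not yet amount to a proof. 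The actual construction inducts on the index $k$, using \cref{decomplinear} to peel off one linear layer $L$ together with a $\CF_{k-1}$-substitution $\sigma$ at a time; the crucial trick is that the simulator does \emph{not} read a word $u_1\cdots u_n\,v_n\cdots v_1$ of $\sigma(L)$ in order, but in the Parikh-equivalent shuffled form $u_1v_1u_2v_2\cdots u_nv_n$, running forward and backward copies of $N$ in parallel and reconciling them at the end. Your ``bounded agenda of $k$ pending nonterminals'' intuition does not explain how this reordering is achieved, and it is the reordering that makes the whole thing work.

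Relatedly, your aside that ``one should be able to obtain $\fiCF\triointersect\Petri\subseteq\Prio$ outright'' is false, and seeing why is instructive. The mirror language $\{w\#w^R\mid w\in\{a,b\}^*\}$ lies in $\CF_1\triointersect\Petri$, but no priority $k$-counter machine accepts it: after reading a prefix of length $n$ there are only polynomially many reachable configurations while there are $2^n$ prefixes to distinguish, so two distinct $w_1,w_2$ would be confused. This is exactly why the paper's $(K,N)$-simulator achieves Parikh equivalence and not language equivalence -- the machine reads the palindromic structure out of order -- and why Parikh images are the right invariant throughout. So the inductive scaffolding in your proposal is sound, but the core lemma is genuinely nontrivial and needs the linear-layer decomposition and shuffled-reading construction to be spelled out.
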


We show \cref{synthesisresult} by proving a slightly stronger version
of \cref{atigganty}: Atig and Ganty reduce the emptiness problem of
$\fiCF\triointersect\Petri$ to the emptiness problem for priority
multicounter machines. We strengthen this slightly and show that for
each language $L$ in $\fiCF\triointersect\Petri$, one can construct a
priority multicounter machine $\cA$ with
$\Parikh{\Lang{\cA}}=\Parikh{L}$, in other words:
$\Parikh{\fiCF\triointersect\Petri}\subseteq\Parikh{\Prio}$. This
allows us to apply \cref{algparikh} and \cref{prio:closure} to
conclude that $\Parikh{\HG}\subseteq\Parikh{\Prio}$.

The following observation provides a decomposition of languages in $\fiCF$.  A
context-free grammar $G=(N,T,P,S)$ is called \emph{linear} if every production
$A\to w$ in $G$ satisfies $|w|_N\le 1$. A language is called \emph{linear
context-free} if it is generated by a linear context-free grammar. Note that
$\CF_1$ is precisely the class of linear context-free languages.
\begin{prop}\label{decomplinear}
Suppose $k\ge 1$. A language belongs to $\CF_{k}$ if and only if it can be written
as $\sigma(L)$ for a linear context-free language $L$ and a
$\CF_{k-1}$-substitution $\sigma$.
\end{prop}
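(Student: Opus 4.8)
The plan is to prove both implications by passing back and forth between derivations of index at most $k$ and derivation trees; we may assume $k\ge 2$ throughout and treat $k=1$ as an easy base case. For ``$\Leftarrow$'' I would build a single grammar out of a linear grammar for $L$ and grammars of index at most $k-1$ for the languages $\sigma(a)$, and argue it already generates $\sigma(L)$ using only derivations of index at most $k$. For ``$\Rightarrow$'' I would, starting from a Chomsky-normal-form grammar for a language in $\CF[k]$, peel off a single root-to-leaf path of each index-$k$ derivation tree; the path behaves like a linear derivation, while all subtrees dangling off it have index at most $k-1$, and their yields will supply the substituted languages.

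For ``$\Leftarrow$'': first put a linear grammar $G_L$ for $L$ into the usual normal form in which every production has one of the shapes $A\to aB$, $A\to Ba$, $A\to a$, $A\to\eword$, $A\to B$, so that each step of a $G_L$-derivation emits at most one terminal symbol. For each terminal $a$ fix a grammar $G_a$ with start symbol $S_a$ generating $\sigma(a)$ with index at most $k-1$, all nonterminal alphabets pairwise disjoint and disjoint from that of $G_L$. Let $G$ arise by replacing, in every $G_L$-production, each occurrence of a terminal $a$ by $S_a$ and adjoining all productions of all the $G_a$; without any index restriction $\Lang{G}=\sigma(L)$, so $\Lang[k]{G}\subseteq\sigma(L)$. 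For the reverse inclusion I would take $w=v_1\cdots v_n\in\sigma(L)$ with $a_1\cdots a_n\in L$ and $v_i\in\sigma(a_i)$, mimic in $G$ the index-$1$ linear derivation of $a_1\cdots a_n$ given by $G_L$, and, immediately after the step introducing a fresh occurrence of $S_{a_i}$, carry out a derivation of $v_i$ from $S_{a_i}$ inside $G_{a_i}$ of index at most $k-1$; throughout such a subderivation the sentential form carries at most $k-1$ nonterminals of $G_{a_i}$ together with the at most one ``spine'' nonterminal inherited from $G_L$, hence at most $k$ in total (and at most two between subderivations). After the standard normalization of $G$ to Chomsky normal form (which does not change $\CF[k]$ for $k\ge 2$), this gives $\sigma(L)\in\CF[k]$.

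For ``$\Rightarrow$'': fix $G=(N,T,P,S)$ in Chomsky normal form with $\Lang[k]{G}=L$, and for each $w\in L$ an index-$k$ derivation tree $\tau$. The crucial step is to produce a root-to-leaf path $\pi$ of $\tau$ such that every subtree rooted at a child, \emph{off} $\pi$, of a node of $\pi$ has index at most $k-1$. I would derive this from the sub-lemma that, at a node with production $A\to BC$ whose subtree has index at most $k$, at most one of the two child subtrees has index exactly $k$ (otherwise a nonterminal would have to be kept to hold the first child's value while the second is evaluated, forcing index $k+1$) --- the derivation-theoretic form of the pebbling bound (Horton--Strahler number) for binary trees. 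Given the sub-lemma, one follows from the root the unique child of index $k$ as long as such a child exists, and once no such child remains one extends the path arbitrarily to a leaf; a short check shows every subtree off $\pi$ then has index at most $k-1$. Reading, left to right, the symbols hanging off $\pi$ --- an off-$\pi$ child labelled $B$ contributing a symbol $\bar{B}$ from a fresh disjoint copy $\bar{N}$ of $N$, and the terminal leaf ending $\pi$ contributing itself --- yields a word over $T\cup\bar{N}$, and the set of all words so obtainable over all index-$k$ trees of $G$ is exactly the language of the linear grammar $G_{\mathrm{lin}}$ with productions $A\to\bar{B}C$ and $A\to B\bar{C}$ for each $A\to BC$ in $P$, together with $A\to a$ and $A\to\eword$ for the corresponding productions of $P$. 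Taking $\sigma(\bar{B})=\Lang[k-1]{G_B}\in\CF[k-1]$, where $G_B$ is $G$ with start symbol $B$, and $\sigma(a)=\{a\}$, I would then verify $\sigma(\Lang{G_{\mathrm{lin}}})=L$: the inclusion ``$\supseteq$'' is precisely the path decomposition just described, and ``$\subseteq$'' reassembles a linear $G_{\mathrm{lin}}$-derivation together with subtrees of index at most $k-1$ into a single derivation tree of $G$ whose index stays at most $k$ --- expand along the spine, fully expanding each peeled subtree as soon as its root appears.

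I expect the sub-lemma and the extraction of the path in ``$\Rightarrow$'' to be the main obstacle: turning ``peel a path, the rest has smaller index'' into a rigorous statement means relating the grammar-theoretic notion of index (the maximum number of nonterminals in a sentential form of a derivation) to the shape of the derivation tree through the pebbling argument, and handling with care the Chomsky-normal-form endpoints $A\to a$, $A\to\eword$ and the preliminary unit- and $\eword$-production eliminations so that none of them alters $\CF[k]$. The base case $k=1$ (with $\CF[0]=\Reg$) is elementary: $\CF[1]$ is the class of linear context-free languages, which is closed under regular substitution (for instance by a one-turn-pushdown argument), and conversely every linear language is already of the required form via the identity substitution.
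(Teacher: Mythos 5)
Your proof is correct, and the underlying combinatorics is the same ``peel a spine, the rest has strictly smaller index'' idea, but you formalize it quite differently from the paper. The paper does not work with derivation trees directly: it invokes (and cites) a grammar transformation of Atig and Ganty, building from a CNF grammar $G$ the layered grammar $G^{[\ell]}$ with nonterminals $A^{[i]}$ and rules $A^{[i]}\to B^{[i]}C^{[i-1]}$, $A^{[i]}\to B^{[i-1]}C^{[i]}$, together with the identity $\Lang{G^{[\ell]},A^{[i]}}=\Lang[i+1]{G,A}$. The linear grammar is then read off by keeping the nonterminals $A^{[k]}$ and demoting every $A^{[i]}$ with $i<k$ to a terminal, and the substitution sends $A^{[i]}$ to $\Lang{G^{[k]},A^{[i]}}\in\CF_{i+1}\subseteq\CF_k$. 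The superscripts in $G^{[\ell]}$ are exactly the budget bookkeeping that your Horton--Strahler pebbling does in the tree; the paper packages it as a grammar and delegates the correctness to \cite{AtigGanty2011}, whereas you reconstruct it from scratch. What your version buys is a self-contained argument; what it costs is that you have to establish, and you rightly flag this as the main obstacle, the equivalence between the derivation-order definition of index used in the paper (a property of a sequence of sentential forms) and the Horton--Strahler number of the derivation tree, including the claim that the minimizing derivation order is the one that recurses first into the child of smaller tree-index. The paper sidesteps this by never leaving the world of grammars.

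Two smaller points. First, your sub-lemma and the path-extraction and reassembly arguments in ``$\Rightarrow$'' do go through (one direction gives $\sigma(\Lang{G_{\mathrm{lin}}})\supseteq\Lang[k]{G}$ by peeling, the other gives $\subseteq$ by pushing the Horton--Strahler recursion bottom-up along the spine), so the route is viable and not just heuristic. Second, your parenthetical ``CNF conversion does not change $\CF[k]$ for $k\ge 2$'' is stated a bit too casually: replacing a production $A\to aB$ by $A\to X_a B$, $X_a\to a$ can bump the count by one mid-derivation, so the claim needs a more careful phrasing (e.g.\ that in your specific $G$ the productions are already binary because the terminals $a$ are replaced by the nonterminals $S_a$, and only unit/$\eword$ elimination remains, which one checks does not increase the index). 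The paper avoids this because it proves only ``$\Rightarrow$'' in detail and dismisses ``$\Leftarrow$'' as obvious; your explicit treatment of ``$\Leftarrow$'' is an honest addition, it just needs that one normalization detail tightened.
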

\begin{proof}
  We prove the statement by induction on $k$. For $k=1$, it
  essentially states that linear context-free languages are closed
  under regular substitutions, which is clearly true. For the
  induction step, we use a result of Atig and Ganty.  Let
  $G=(N,T,P,S)$ be a context-free grammar in CNF. For each $i\ge 0$,
  let $A^{[i]}$ be a fresh symbol.  For each $\ell\ge 0$, we define a
  grammar $G^{[\ell]}$ as follows. We have $G^{[\ell]}=(N^{[\ell]}, T,
  P^{[\ell]}, S^{[\ell]})$ with $N^{[\ell]}=\{ A^{[i]} \mid A\in
  N,~0\le i\le \ell\}$ and $P^{[\ell]}$ is the smallest set of
  productions such that
\begin{enumerate}
\item for each $A\to BC$ in $P$, we have $A^{[i]}\to B^{[i]}C^{[i-1]}$ and $A^{[i]}\to B^{[i-1]}C^{[i]}$ in $P^{[\ell]}$ for every index $i\in\{1,\ldots,\ell\}$,
\item for each $A\to w$ in $P$ with $w\in T\cup \{\eword\}$, we have $A^{[i]}\to w$ in $P^{[\ell]}$ for every $i\in\{0,\ldots,\ell\}$.
\end{enumerate}
For each nonterminal $A$ of $G$, we define
\begin{align*}
\Lang{G,A}=\{w\in T^* \mid A\grammarsteps[G] w \}, && \Lang[\ell]{G,A}=\{w\in T^* \mid A\grammarsteps[G,\ell] w \}.
\end{align*}
Atig and Ganty~\cite{AtigGanty2011} show that for every
$i\in\{0,\ldots,\ell\}$, one has
\begin{equation} \Lang{G^{[\ell]}, A^{[i]}}=\Lang[i+1]{G, A}. \label{ageq}\end{equation} 

Now suppose $K$ belongs to $\CF_{k+1}$ with $K=\Lang[k+1]{G}$ where $G$ is in
CNF. According to \eqref{ageq}, we have $\Lang{G^{[k]}}=K$. We now construct a
(linear) context-free grammar $G'=(N',T',P',S')$ (that is not necessarily in
CNF) as follows. It has terminal symbols $T'=T\cup \{A^{[i]} \mid 0\le i\le
k-1\}$ and its nonterminal symbols are $N'=\{A^{[k]} \mid A\in N\}$. As
productions, it contains all those productions of $G$ whose left-hand side
belongs to $N'$. Moreover, the substitution $\sigma\colon
T'^*\to\Powerset{T^*}$ is defined as follows: For $a\in T$, we set
$\sigma(a)=\{a\}$. For $A^{[i]}\in T'$, we define
$\sigma(A^{[i]})=\Lang{G^{[k]},A^{[i]}}$. Since for $A^{[i]}\in T'$, we have
$i\le k-1$, the equation \eqref{ageq} tells us that $\sigma(A^{[i]})$ belongs
to $\CF_{i+1}\subseteq \CF_k$.  Hence, $\sigma$ is a $\CF_{k}$-substitution. Moreover, an
inspection of the definition of $G^{[\ell]}$ yields that $G'$ is clearly
linear. Finally, we have $K=\sigma(\Lang{G'})$, so that with $L=\Lang{G'}$, we
have proven the ``only if'' direction of the \lcnamecref{decomplinear}.  The
other direction is obvious.\myqed
\end{proof}

We now turn to the key lemma (\cref{simulator}) of our slightly
stronger version of Atig and Ganty's result. We want to show that
given an index-$k$ context-free language $K$ and a Petri net
language $L$, one can construct a priority multicounter machine $\cA$
with $\Parikh{\Lang{\cA}}=\Parikh{K\cap L}$. The proof proceeds by
induction on the index $k$, which warrants a strengthening of the
statement. 

We need some terminology.  For a vector
$\mu=(m_1,\ldots,m_d)\in\N^d$ and $k\ge d$, we denote by
$\zerofill{\mu}$ the vector $(0,\ldots,0,m_1,\ldots,m_d)\in\N^k$.  The
dimension $k$ will always be clear from the context.
In order to make the induction work, we need to construct priority
counter machines with the additional property that for a particular
$d$, they never zero-test their $d$ topmost counters. Therefore, for a
priority $k$-counter machine $\cA=(Q,X,E,q_0,F)$ and $d\le k$, we define
$\cA_d$ to be the machine obtained from $\cA$ removing all transitions
$(q,x,\ell,\nu,q')$ with $\ell>d$. In other words, we remove all
transitions that perform a zero-test on a counter other than
$1,\ldots,d$.  We define the language
\[ \Lang[d]{\cA,q,\mu,q',\mu'}=\{w\in X^* \mid (q,\zerofill{\mu})\lautsteps[\cA_d]{w}(q',\zerofill{\mu'}) \}. \]

For a language $K\subseteq X^*$ and a $d$-dimensional Petri net
$N=(X,E,\mu_0,F)$, we say that a priority $k$-counter machine $\cA$ is a
\emph{$(K,N)$-simulator} if $k\ge d$ and there are two states $p$ and
$p'$ in $\cA$ such that for every $\mu,\mu'\in\N^d$, we have
\begin{equation} \Parikh{\Lang[d]{\cA,p,\mu,p',\mu'}}=\Parikh{K\cap \Lang{N,\mu,\mu'}}.\label{simequation}\end{equation}
In this case, $p$ and $p'$  are called \emph{source} and \emph{target}, respectively.

\begin{lem}\label{simulator}
Given a language $K$ in $\fiCF$ and a labeled Petri net $N$, one can construct
a $(K,N)$-simulator.
\end{lem}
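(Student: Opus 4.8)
The plan is to induct on the least $k$ with $K\in\CF[k]$, using the convention (as in the paper) that $\CF[0]$ denotes the regular languages; since $\fiCF=\bigcup_{k\ge 1}\CF[k]$, this covers all $K$ in question. For the base case $K\in\CF[0]$, I would take a deterministic finite automaton $\cA_K$ for $K$, a $d$-dimensional labeled Petri net $N$, and form their product: a priority $d$-counter machine $\cA$ with no zero-test transitions, whose control is that of $\cA_K$, whose $d$ counters carry the marking of $N$, and with a fresh target state reached by a counter-neutral $\eword$-transition from every final state of $\cA_K$. Taking the initial state of $\cA_K$ as source, one has $\cA_d=\cA$ and even $\Lang[d]{\cA,p,\mu,p',\mu'}=K\cap\Lang{N,\mu,\mu'}$, so \eqref{simequation} holds.

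For the inductive step, let $k\ge 1$ and assume the statement for all smaller indices. By \cref{decomplinear} (which in turn rests on \eqref{ageq}), write $K=\sigma(L)$ with $L$ a linear context-free language, generated by a linear grammar $G=(N_G,T',P,S)$, and $\sigma$ a $\CF[k-1]$-substitution. For each $a\in T'$ we have $\sigma(a)\in\CF[k-1]\subseteq\fiCF$, so the induction hypothesis provides a $(\sigma(a),N)$-simulator $\cA_a$; replacing it by $(\cA_a)_d$ we may assume $\cA_a$ performs no zero-test on its $d$ topmost counters. I would then assemble the desired $(K,N)$-simulator $\cB$ from the $\cA_a$ roughly as follows. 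Its finite control nondeterministically runs a derivation of $G$, tracking the current nonterminal together with a bit separating the \emph{left spine} from the \emph{right spine}. Its counters are: the $d$ topmost ones, holding the marking of $N$ and shared by all embedded copies of the $\cA_a$; pairwise disjoint blocks of private counters, one per embedded copy of some $\cA_a$, sitting below the interface counters; and a fresh block placed still lower in the priority order, forming one new level of the zero-test hierarchy, used to coordinate the right spine. Reading the input left to right, $\cB$ hands each maximal factor that it guesses to be an image $\sigma(a)$ to a copy of $\cA_a$, which advances the $d$ shared counters; since $(\cA_a)_d$ never zero-tests those, they behave inside $\cB$ exactly like the places of $N$, so the intermediate markings thread through consistently. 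Using \eqref{simequation} for the $\cA_a$, a run of $\cB$ from $(p,\zerofill{\mu})$ to $(p',\zerofill{\mu'})$ then witnesses, up to Parikh image, a derivation $S\grammarsteps[G] x_1\cdots x_m y_m\cdots y_1$ together with Petri-net runs $\gamma_{i-1}\lautsteps[N]{\sigma(x_i)}\gamma_i$ along the left spine and $\delta_i\lautsteps[N]{\sigma(y_i)}\delta_{i-1}$ along the right spine, with $\gamma_0=\mu$ and $\delta_0=\mu'$, and conversely every such configuration is produced; this establishes \eqref{simequation} for $\cB$. Crucially, $\cB$ itself performs no zero-test on its $d$ topmost counters, so it is a $(K,N)$-simulator of the right shape and may be fed to the next step.

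The main obstacle is exactly the bidirectional simulation of $G$ glossed over above: the left spine is read forward, whereas the right spine occurs in the input in \emph{reverse} derivation order, and, since its reachability problem is decidable, a priority counter machine cannot carry an auxiliary pushdown on which to buffer and replay it. This is the technical heart of the construction, and here I would follow, and slightly strengthen, Atig and Ganty~\cite{AtigGanty2011}: one new level of the priority zero-test hierarchy is introduced per induction step to bound and then exhaust the pending right-spine obligations, so that the overall hierarchy depth equals the index $k$ (matching Reinhardt's priority $k$-counter machines). Two features go beyond their statement and require care. First, the machine produced must be \emph{Parikh-equivalent} to $K\cap\Lang{N,\mu,\mu'}$, not merely equivalent for emptiness; this is what lets the downstream argument invoke \cref{algparikh} together with \cref{prio:closure}. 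Second, the invariant that the $d$ interface counters are never zero-tested is what makes the $(K,N)$-simulators composable along the induction, and is precisely why the definition is phrased via $\cA_d$ and $\zerofill{\cdot}$. The remaining work, namely aligning counter indices so that each $\cA_a$'s zero-tests stay strictly below $\cB$'s new level and so that $\zerofill{\cdot}$ keeps the marking of $N$ in the topmost $d$ counters throughout, is routine but somewhat delicate bookkeeping.
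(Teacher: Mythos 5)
Your plan gets the outer structure right: induction on the index $k$ via \cref{decomplinear}, the base case as a product of a finite automaton with $N$, and the two invariants that make the induction go through (Parikh-equivalence rather than mere emptiness-equivalence, and the rule that the $d$ interface counters are never zero-tested). But you defer precisely the part that the paper actually solves, and the gesture you make in its direction points in the wrong direction.

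You correctly identify the obstacle — the right spine of the linear derivation appears in the input in \emph{reverse} derivation order — and then write that you would follow Atig and Ganty, ``introducing one new level of the priority zero-test hierarchy per induction step to bound and then exhaust the pending right-spine obligations.'' That is not a construction; it is a placeholder, and no amount of zero-test nesting by itself resolves the ordering problem. The paper's actual mechanism is both concrete and simpler. Writing $L=\{g_1(w)g_2(\rev{w})\mid w\in D\}$ for a regular $D\subseteq P^*$, a word of $K$ has the shape $u_1\cdots u_n v_n\cdots v_1$ with $u_i\in\tau_1(\pi_i)$ and $v_i\in\tau_2(\pi_i)$. Because only Parikh-equivalence is required, the simulator reads the \emph{re-ordered} word $u_1 v_1 u_2 v_2\cdots u_n v_n$ instead. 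Now a single forward pass through the automaton for $D$ suffices: each $\pi_j$ triggers a forward execution of the $(\tau_1(\pi_j),N)$-simulator on one block of $d$ ``forward'' counters and a \emph{backward} execution of the $(\tau_2(\pi_j),N)$-simulator on a disjoint block of $d$ ``backward'' counters. At the end, the forward and backward blocks are decremented in lock-step and zero-tested, which enforces that the forward run ends in the same marking at which the backward run starts, i.e.\ that the two halves of the Petri net run actually meet. A fourth block of $d$ ``global'' counters, loaded from $\mu$ at the start, adjusted nondeterministically, and checked against $\mu'$ at the end, gives the interface required by \eqref{simequation}. This reordering-plus-reversed-counters device is the technical heart you flagged, and it is exactly what your proposal is missing.

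A second, smaller slip: you say the $d$ topmost counters ``hold the marking of $N$ and [are] shared by all embedded copies of the $\cA_a$.'' In the paper there are \emph{three} $d$-blocks (forward, backward, global), and the sub-simulators for the two spines run on \emph{disjoint} $d$-blocks — they cannot share one block, since the two spines do not advance the marking in the same temporal order once the input has been interleaved. Only the $\ell$ auxiliary counters are shared between forward and backward sub-simulators, and that sharing is sound only because each sub-simulator leaves them at zero when it finishes. Also, your ``pairwise disjoint private blocks, one per embedded copy of some $\cA_a$'' would need the private blocks to coincide for all the $\cA_{\pi,1}$ and all the $\cA_{\pi,2}$ separately, since the whole point is to thread a single Petri net run through them; making them all disjoint breaks this threading.
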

\begin{proof}
Let $N=(X,E,\mu_0,F)$ be a $d$-dimensional Petri net and let $K$ belong to $\CF_k$.
We proceed by induction on $k$. If $k=0$, then $K$ is accepted by
some finite automaton $\cB$.  We may assume that $\cB$ has an initial state $p$ and
one final state $p'$.  One can construct a priority $d$-counter machine $\cA$ by a
product construction from $N$ and $\cB$ such that $\cA$ has the same state set as
$\cB$ and
\[ \Lang[d]{\cA,p,\mu,p',\mu'}=K\cap \Lang{N,\mu,\mu'}, \]
meaning it is indeed a $(K,N)$-simulator.

For the induction step, suppose $k\ge 1$. According to
\cref{decomplinear}, there is a linear context-free language
$L\subseteq Y^*$ and a $\CF_{k-1}$-substitution $\sigma\colon
Y\to\Powerset{X^*}$ with $K=\sigma(L)$. Let us begin with some
explanation.  Since $L$ is linear context-free, it is given by a
grammar $G=(\bar{N},Y,P,S)$ where every production is of the form
$A\to x_1Bx_2$ or $A\to\eword$ with $A,B\in\bar{N}$ and $x_1,x_2\in
Y\cup\{\eword\}$.  Let $D\subseteq P^*$ be the regular language of
production sequences that correspond to derivations in $G$ and let
$g_1, g_2\colon P^*\to Y^*$ be the morphisms where for $\pi=A\to
x_1Bx_2$ (with $x_1,x_2\in Y\cup\{\varepsilon\}$), we set
$g_i(\pi)=x_i$.  Then, we have $L=\{g_1(w)g_2(\rev{w}) \mid w\in D\}$.

Therefore, if $\tau_i$
is the $\CF_{k-1}$-substitution with $\tau_i(\pi)=\sigma(g_i(\pi))$ for
$i=1,2$, then $K=\sigma(L)$ consists of all words in
$\tau_1(w)\tau_2(\rev{w})$ for $w\in D$.
In other words, $K$ contains precisely those words of the form
\begin{equation} u_1\cdots u_n v_n\cdots v_1 \label{palindrome-run} \end{equation}
such that there is a word $w=\pi_1\cdots \pi_n\in D$, $\pi_1,\ldots,\pi_n\in P$
with $u_i\in\tau_1(\pi_i)$ and $v_i\in\tau_2(\pi_i)$ for $i\in\{1,\ldots,n\}$.

Our task is to construct a $(K,N)$-simulator $\cA'$. This means, using a
priority counter machine, we have to simulate---up to Parikh
image---all runs of $N$ with labels as in \cref{palindrome-run}.  By
induction, we have a $(\tau_i(\pi), N)$-simulator $\cA_{\pi,i}$ for each
$\pi\in P$ and $i\in\{1,2\}$. Each of the machines $\cA_{\pi,i}$ has
$\ge d$ counters, so we may clearly assume that for some $\ell\ge 0$,
they all have $\ell+d$ counters.  Moreover, by definition of a
$(\tau_i(\pi), N)$-simulator, these machines never perform a zero-test
on the $d$ top-most counters.  Moreover, using a zero-test, we can
guarantee that when $\cA_{\pi,i}$ reaches its target state, its first
$\ell$ counters are zero.

The basic idea is that $\cA'$ performs a run of an automaton for $D$,
which reads a word $w=\pi_1\cdots\pi_n$.  For each $j=1,\ldots,n$, it
executes a computation of $\cA_{\pi_j,1}$ (reading $u_j$) and a
computation of $\cA_{\pi_j,2}$ (reading $v_j$). Hence, $\cA'$ reads the
word $u_1v_1u_2v_2\cdots u_nv_n$, which is clearly Parikh-equivalent
to $u_1\cdots u_nv_n\cdots v_1$.

We have to make sure that all these runs of the machines $\cA_{\pi_j,i}$
are compatible in the sense that they can be executed in the order
prescribed by \cref{palindrome-run}. To this end, all the executions
of $\cA_{\pi_1,1},\ldots,\cA_{\pi_n,1}$ share one set of $\ell+d$
counters. The executions of $\cA_{\pi_1,2},\ldots,\cA_{\pi_n,2}$ also
share a set of counters, but they are executed backwards. The counters
for the backward execution are also $\ell+d$ many, but since each
execution of some $\cA_{\pi,i}$ leaves the first $\ell$ counters empty,
the forward and the backward simulation can share the first $\ell$
counters between them. This leaves us with $\ell+2d$ counters: We use
counters $1,\ldots,\ell+d$ to simulate $\cA_{\pi_j,1}$ and we use
counters $1,\ldots,\ell$ and $\ell+d+1,\ldots,\ell+2d$ to simulate
$\cA_{\pi_j,2}$ (backwards). Therefore, we call counters $1,\ldots,\ell$
\emph{auxiliary counters}, whereas the counters $\ell+1,\ldots,\ell+d$
are called \emph{forward counters}. The counters
$\ell+d+1,\ldots,\ell+2d$ are dubbed \emph{backward counters}.

In addition, we have to make sure that the executions of $N$
corresponding to $v_n\cdots v_1$ can be executed after the executions
corresponding to $u_1\cdots u_n$. Therefore, after executing the run
of the automaton for $D$, $\cA'$ simultaneously counts down the forward
and the backward counters and then performs a zero-test on the
counters $1,\ldots,\ell+2d$.

Finally, in order to be a $(K,N)$-simulator, $\cA'$ must have $d$
top-most counters so that the following holds: If we simulate the
computation $\mu\lautsteps[N]{u_1\cdots u_nv_n\cdots v_1} \mu'$ with
$\mu,\mu'\in\N^d$, then the $d$ top-most counters of $\cA'$ must contain
$\mu$ in the beginning and $\mu'$ in the end. To this end, we add an
additional set of $d$ counters, called \emph{global counters}. Hence,
in total, $\cA'$ has $\ell+3d$ counters:
\newcommand{\bracefont}[1]{\textrm{#1}}
\[ \underbrace{1,\ldots,\ell}_{\bracefont{auxiliary}},~\underbrace{\ell+1,\ldots,\ell+d}_{\bracefont{forward}},~\underbrace{\ell+d+1,\ldots,\ell+2d}_{\bracefont{backward}},~\underbrace{\ell+2d+1,\ldots,\ell+3d}_{\bracefont{global
}}. \]
The global counters are used as follows. The machine $\cA'$ starts with
counters $\zerofill{\mu}\in\N^{\ell+3d}$. First, it nondeterministically
subtracts some vector $\nu_1\in\N^d$ from the global counters and
simultaneously adds it to the forward counters. Then, it
nondeterministically adds a vector $\nu_2'\in\N^d$ to both the global
counters and the backward counters.  After performing the simulation
of the $\cA_{\pi_1,1},\ldots,\cA_{\pi_n,1}$ and the
$\cA_{\pi_1,2},\ldots,\cA_{\pi_n,2}$, suppose the forward counters contain
$\nu_1'\in\N^d$ and the backward counters contain $\nu_2\in\N^d$. As
described above, $\cA'$ afterwards compares the forward and backward
counters, ensuring that $\nu_1'=\nu_2$ and thus:
\[ \nu_1\lautsteps[N]{u_1\cdots u_n} \nu_1'=\nu_2\lautsteps[N]{v_n\cdots v_1} \nu'_2.\]
Observe that this guarantees that the global counters of $\cA'$ reflect
the counters of the simulated computation of $N$: In the end, they are
precisely $\zerofill{\mu'}\in\N^{\ell+3d}$, where $\mu'=\mu-\nu_1+\nu'_2$, which means
$\mu\lautsteps[N]{u_1\cdots u_nv_n\cdots v_1} \mu'$.

Let us make the description of $\cA'$ more precise.
\begin{enumerate}[label=(\roman*)]
\item $\cA'$ has a state $p$, where it nondeterministically subtracts
  tokens from the global counters and simultaneously adds them to the
  forward counters.
\item Note that $D\subseteq P^*$ can be accepted by a finite automaton
  with state set $\bar{N}$, the non-terminals of $G$. Therefore, from
  the state $p$, $\cA'$ can enter the state $S\in\bar{N}$ to start
  simulating the automaton for $D$.
\item In a state $A\in\bar{N}$, $\cA'$ selects a production $\pi=A\to
  x_1Bx_2 \in P$ and then executes a computation of $\cA_{\pi,1}$ in the
  auxiliary and the forward counters.  Then, it executes a computation
  of $\cA_{\pi,2}$ backwards on the auxiliary and backward
  counters. Then, it switches to state $B\in\bar{N}$.
\item If $\cA'$ is in state $A\in\bar{N}$ and there is a production
  $A\to\eword\in P$, then $\cA'$ switches to a state $p''$, in which it
  simultaneously counts down the forward and the backward counters.
  From $p''$ it non-deterministically switches to $p'$ while
  performing a zero-test on all counters $1,\ldots,\ell+2d$.
\end{enumerate}
In conclusion, it is clear that for $\mu,\mu'\in\N^d$, we have
$(p,\zerofill{\mu})\lautsteps[\cA']{w} (p',\zerofill{\mu'})$ if and only
if $w=u_1v_1u_2v_2\cdots u_nv_n$ such that there is a word
$\pi_1\cdots \pi_n\in D$, $\pi_1,\ldots,\pi_n\in P$, such that
$u_j\in\tau_1(\pi_j)$ and $v_j\in\tau_2(\pi_j)$. Thus, $\cA'$ is a
$(K,N)$-simulator.\myqed

\end{proof}

We are now ready to prove the slightly stronger version of the decidability
result of Atig and Ganty.
\begin{thm}\label{intersection}
Given $K$ in $\fiCF\triointersect \Petri$, one can construct a priority multicounter
machine $\cA$ with $\Parikh{\Lang{\cA}}=\Parikh{K}$.
\end{thm}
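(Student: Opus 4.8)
The plan is to reduce \cref{intersection} to \cref{simulator}. Write $K=h(K_1\cap L_1)$ with $K_1\subseteq Z^*$ in $\fiCF$, $L_1\subseteq Z^*$ in $\Petri$, and $h\colon Z^*\to Y^*$ a morphism, and fix a $d$-dimensional labeled Petri net $N=(Z,E,\mu_0,F)$ with $\Lang{N}=L_1$, so that $L_1=\bigcup_{\mu\in F}\Lang{N,\mu_0,\mu}$. \Cref{simulator} then yields a $(K_1,N)$-simulator, that is, a priority $k$-counter machine $\cB$ with $k\ge d$ and two states $p$ (source) and $p'$ (target) such that
\[ \Parikh{\Lang[d]{\cB,p,\mu,p',\mu'}}=\Parikh{K_1\cap\Lang{N,\mu,\mu'}}\qquad\text{for all }\mu,\mu'\in\N^d . \]

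From $\cB$ I would build the desired priority multicounter machine $\cA$ directly. Its states are those of $\cB$, two new states $q_0$ and $q_f$, and some further fresh auxiliary states; its initial state is $q_0$ and $q_f$ is its unique final state. Its transitions are of three kinds: first, for every transition $(q,x,\ell,\nu,q')$ of $\cB_d$ a relabeled copy $(q,h(x),\ell,\nu,q')$ (this is admissible because transition labels of priority counter machines may be arbitrary words); second, a fixed block of $\eword$-labeled increment transitions leading from $q_0$ to $p$ that raises the top $d$ counters from $0$ to the constant vector $\mu_0$; third, for each of the finitely many $\mu\in F$, a fresh branch from $p'$ to $q_f$ that first decrements the top $d$ counters by $\mu$ componentwise (a fixed $\eword$-labeled sequence that blocks if some counter is too small) and then performs a zero-test on all $k$ counters. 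Since $q_0$ is the only way to start, $p'$ the only gateway to $q_f$, and the portion of any run between $p$ and $p'$ is by construction an $h$-relabeled $\cB_d$-run, an accepting run reads exactly $h(w)$ where $w$ labels some run $(p,\zerofill{\mu_0})\lautsteps[\cB_d]{w}(p',\zerofill{\mu''})$ with $\mu''\in\N^d$; moreover the concluding zero-test forces $\mu''\in F$. Hence
\[ \Lang{\cA}=h\Bigl(\bigcup_{\mu\in F}\Lang[d]{\cB,p,\mu_0,p',\mu}\Bigr) . \]

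It remains to compare Parikh images. Since the Parikh image of a morphic image of a language depends only on the Parikh image of that language, and since $\ParikhMap$ commutes with finite unions,
\begin{align*}
\Parikh{\Lang{\cA}}
 &= \Parikh{h\bigl(\bigcup_{\mu\in F}\Lang[d]{\cB,p,\mu_0,p',\mu}\bigr)} \\
 &= \Parikh{h\bigl(\bigcup_{\mu\in F}(K_1\cap\Lang{N,\mu_0,\mu})\bigr)} = \Parikh{h(K_1\cap L_1)} = \Parikh{K},
\end{align*}
where the second equality uses the defining property of the simulator $\cB$. As all of the above constructions are effective, this proves the \lcnamecref{intersection}.

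The substantive work is carried entirely by \cref{simulator}, which is already at hand, so the only delicate point in the present argument is the construction of $\cA$: one must (i) copy the transitions of $\cB_d$ rather than those of $\cB$, so that the middle segment of a run genuinely ranges over $\Lang[d]{\cB,\cdot}$; (ii) use that $\mu_0$ is a single fixed vector and $F$ a finite set, so that the loading and unloading gadgets are finite; and (iii) arrange the concluding zero-test so that acceptance by $\cA$---which, by the definition of a priority counter machine, requires \emph{all} counters to be empty---holds precisely when the simulated Petri net computation ends in a marking belonging to $F$. None of this is deep, but it is where a mistake would most easily slip in.
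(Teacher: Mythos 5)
Your proposal is correct and follows essentially the same route as the paper: reduce to \cref{simulator}, then wrap the simulator $\cB$ with a loading gadget for $\mu_0$ and, for each $\mu\in F$, an unloading-and-zero-test gadget, obtaining a priority multicounter machine accepting (a morphic image of) $\bigcup_{\mu\in F}\Lang[d]{\cB,p,\mu_0,p',\mu}$. The only cosmetic difference is that the paper dispenses with the morphism $h$ up front by invoking closure of $\Prio$ under morphisms, whereas you absorb $h$ into the transition labels; the paper also leaves the construction of the final machine implicit where you spell it out.
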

\begin{proof}
Since the languages of priority multicounter machines are closed under
morphisms, we may assume that $K=C\cap P$, where $C$ is in $\CF_k$ and
$P=\Lang{N}$ for a $d$-dimensional labeled Petri net $N=(X,T,\mu_0,F)$.
\Cref{simulator} allows us to construct a $(C,N)$-simulator $\cA$ with source $p$
and target $p'$. This means, for each $\mu\in\N^d$, we have
$\Parikh{\Lang[d]{\cA,p,\mu_0,p',\mu}}=\Parikh{C\cap \Lang{N,\mu_0,\mu}}$ and in
particular
\[ \Parikh{\underbrace{\bigcup_{\mu\in F} \Lang[d]{\cA,p,\mu_0,p',\mu}}_{=:L}}=\Parikh{\bigcup_{\mu\in F}C\cap \Lang{N,\mu_0,\mu}}=\Parikh{C\cap P}. \]
Since we can clearly construct a priority multicounter machine for $L$, the
proof of the \lcnamecref{intersection} is complete.\myqed
\end{proof}

This allows us to prove \cref{synthesisresult}.
\begin{proof}[\cref{synthesisresult}]
Given a language in $\HG_i$, we can recursively construct a Parikh equivalent
priority multicounter machine.  According to \cref{intersection}, this is true
of $\HG_0=\fiCF\triointersect\Petri$. Furthermore, \cref{algparikh} and
\cref{prio:closure} tell us that if we can carry out such a construction for
$\HG_i$, we can also do it for $\HG_{i+1}$.\myqed
\end{proof}

\section{Conclusion} 
Of course, an intriguing open question is whether the storage
mechanisms corresponding to $\MonRemainingSimple$ have a decidable
reachability problem.  First, since their simplest instance are
pushdown Petri nets, this extends the open question concerning the
latter's reachability.  Second, they naturally subsume the priority
multicounter machines of Reinhardt. This makes them a candidate for
being a quite powerful model for which reachability might be
decidable.

Observe that if these storage mechanisms turn out to exhibit decidability, this
would mean that the characterization of Lohrey and Steinberg 
(\cref{ratmp:graphgroups}) remains true for all graph monoids. This can be
interpreted as evidence for decidability.

\paragraph{Acknowledgments} The author is grateful to the anonymous
referees of both the conference and the journal version.  Their
helpful comments have greatly improved the presentation of this work.

\bibliographystyle{abbrv}
\bibliography{bibliography} 

\end{document}